\documentclass[11pt,letterpaper]{article}
\usepackage[margin=1in]{geometry}

\usepackage{amsmath,amsthm,amssymb}

\DeclareFontFamily{U}{mathx}{}
\DeclareFontShape{U}{mathx}{m}{n}{<-> mathx10}{}
\DeclareSymbolFont{mathx}{U}{mathx}{m}{n}
\DeclareMathAccent{\widecheck}{0}{mathx}{"71}

\usepackage{xifthen,enumitem,comment}
\setlist[enumerate,1]{label=\normalfont{(\roman*)},leftmargin=2em}
\usepackage{xcolor}
\usepackage{etoolbox}
\makeatletter
\patchcmd{\env@cases}{1.2}{0.96}{}{}
\makeatother
\usepackage{algorithm,algpseudocode}
\usepackage{caption}
\usepackage{subcaption}
\usepackage[hypertexnames=false]{hyperref}
\hypersetup{
  colorlinks=true,
  linkcolor={blue!66!black},
  linkbordercolor=red,
  citecolor={blue!66!black},
}
 \usepackage{cleveref}
 \hypersetup{
 	colorlinks = true,
 	citecolor={blue!66!black},
 }
\usepackage{booktabs,tikz}
\usetikzlibrary{matrix,decorations.pathreplacing}

\definecolor{longhorn}{rgb}{0.8, 0.33, 0.0}

\newcommand{\sF}[1][a]{\widecheck{\F}_{\lambda,#1}}
\newcommand{\lF}[1][b]{\widehat{\F}_{\lambda,#1}}
\newcommand{\llF}{\widehat{\F}_{\lambda^\dagger, b^\dagger}}
\newcommand{\scF}{\widecheck{\cF}}
\newcommand{\lcF}{\widehat{\cF}}

\DeclareMathOperator*{\argmax}{arg\,max}
\DeclareMathOperator*{\argmin}{arg\,min}

\newcommand{\scFone}{\widecheck{\cF}|_{a=1}}
\newcommand{\lcFone}{\widehat{\cF}|_{b=1}}
\newcommand{\sFone}[1][a]{\widecheck{\F}_{#1\lambda, 1}}
\newcommand{\lFone}[1][b]{\widehat{\F}_{#1\lambda, 1}}

\newcommand{\cFa}{\cF_\alpha}

\newcommand{\lb}{\mathsf{LB}(\sF,\Gamma)}
\newcommand{\ub}{\mathsf{UB}(\lF,\Gamma)}

\newcommand{\distle}{\preceq_{1}}
\newcommand{\distge}{\succeq_{1}}

\renewcommand{\P}{\mathbb{P}}

\providecommand{\E}{\mathbb{E}}
\providecommand{\R}{\mathbb{R}}
\providecommand{\F}{\mathbb{F}}
\providecommand{\G}{\mathbb{G}}

\providecommand{\cE}{\mathcal{E}}
\providecommand{\cF}{\mathcal{F}}

\providecommand{\supp}{\mathrm{supp}\,}

\newcommand{\of}[1]{\left(#1\right)}
\newcommand{\off}[1]{\left[#1\right]}
\newcommand{\offf}[1]{\left\{#1\right\}}

\providecommand{\Rev}{\mathrm{Rev}}

\providecommand{\OPT}{\mathrm{OPT}}
\newcommand{\opt}{\mathrm{OPT}}

\newcommand{\ind}{\mathbf{1}}
\newcommand{\dd}[1]{\;\textrm{d}#1}

\newcommand{\Ex}[2]{\underset{{#1}}{\E}\off{{#2}}}

\usepackage{tcolorbox}

\usepackage[hypertexnames=false]{hyperref}
\hypersetup{
  colorlinks=true,
  linkcolor={blue!50!black},
  linkbordercolor=red,
  citecolor={blue!50!black},
}
 \usepackage{cleveref}
 \hypersetup{
 	colorlinks = true,
 	citecolor={blue!50!black},
 }

\definecolor{myblue}{RGB}{0, 0, 200}

\usepackage{array}
\usepackage{delarray}
\usepackage{framed}

\usepackage{natbib}

\newtheorem{theorem}{Theorem}[section]
\newtheorem{lemma}{Lemma}[section]
\newtheorem{claim}[theorem]{Claim}
\newtheorem{corollary}{Corollary}[section]

\newtheorem{proposition}{Proposition}[section]

\newtheorem{definition}{Definition}[section]

\begin{document}

\title{Pricing with a Hidden Sample}
\author{
Zhihao Gavin Tang \thanks{Key Laboratory of Interdisciplinary Research of Computation and Economics, Shanghai University of Finance and Economics \texttt{tang.zhihao@mail.shufe.edu.cn}}
\and
Yixin Tao \thanks{Key Laboratory of Interdisciplinary Research of Computation and Economics, Shanghai University of Finance and Economics \texttt{taoyixin@mail.shufe.edu.cn}}
\and
Shixin Wang \thanks{H. Milton Stewart School of Industrial and Systems Engineering, Georgia Institute of Technology \texttt{shixin.wang@isye.gatech.edu}} 
}
\date{}

\maketitle    

\begin{abstract}
We study prior-independent pricing for selling a single item to a single buyer when the seller observes only a single sample from the valuation distribution, while the buyer knows the distribution. Classical robust pricing approaches either rely on distributional statistics, which typically require many samples to estimate, or directly use revealed samples to determine prices and allocations. We show that these two regimes can be bridged by leveraging the buyer’s informational advantage: pricing policies that conventionally require the seller to know statistics such as the mean, $L^\eta$-norm, or superquantile can, in our framework, be implemented using only a single hidden sample.

We introduce hidden pricing mechanisms, in which the seller commits ex ante to a pricing rule based on a single sample that is revealed only after the buyer’s participation decision. We show that every concave pricing policy can be implemented in this way. To evaluate performance guarantees, we develop a general reduction for analyzing monotone pricing policies over $\alpha$-regular distributions, enabling a tractable characterization of worst-case instances. Using this reduction, we characterize the optimal monotone hidden pricing mechanisms and compute their approximation ratios; in particular, we obtain an approximation ratio of approximately $0.79$ for monotone hazard rate (MHR) distributions. We further establish impossibility results for general concave pricing policies and for all prior-independent mechanisms. Finally, we show that our framework also applies to statistic-based robust pricing, thereby unifying sample-based and statistic-based approaches.
\end{abstract}

\newpage

\section{Introduction}

Pricing is a fundamental managerial decision across a wide range of industries. Under full market information---when the seller knows the buyer’s valuation distribution $\F$---the optimal pricing policy $p^*$ is a deterministic mapping from the distribution to a price~\citep{MOR/Myerson81}:
\[
p^*(\F) := \argmax_v v \cdot (1-\F(v)).~\footnote{More formally, the correct expression is $v \cdot \P[s \ge v] = v \cdot (1-\F(v-))$ which accounts for the possibility of a point mass at value $v$. For notational simplicity, we omit the minus symbol throughout the paper.}
\]
However, in practice, sellers rarely possess full knowledge of the valuation distribution. They typically infer only partial information---such as moments, quantiles, segment-level averages, or a handful of data samples---from historical transaction data, price experiments, or market surveys. Consequently, pricing decisions are made under limited statistical knowledge. 
This challenge has motivated a substantial literature on robust mechanism design, which seeks pricing policies that perform well in worst-case scenarios and provide reliable guarantees across all distributions consistent with the available demand information.
Given this aggregated statistical knowledge, such as the mean, variance, and quantiles of the distribution, the robust pricing policy is typically formulated as a mapping from the available statistics to a posted price value. These statistic-based robust mechanism design approaches yield clean worst-case guarantees and often identify simple and interpretable pricing policies.
A parallel line of work investigates sample-based robust pricing, where the seller directly uses a small number of observed samples to design pricing policies. 
These methods are particularly appealing for small-sample regimes, which arise frequently in practice---for example, in the ``cold-start'' problem (new product launches), or highly dynamic environments where the distribution shifts before large datasets can be accumulated. In these settings, sellers must make decisions based on extremely scarce data, sometimes as few as one or two observations, so robust mechanisms that operate effectively with very limited samples are essential.
However, a fundamental disconnect exists between these paradigms. Statistic-based robust pricing mechanisms offer strong theoretical guarantees, but they rely on the seller knowing the statistics. In reality, estimating these statistics with precision requires a large volume of data. When the seller has access to only a few samples, these statistics cannot be estimated reliably.

This disconnection motivates us to look beyond the seller’s information of statistics and samples and consider another important source of implicit information embedded in many markets: the buyers' knowledge. Buyers often know significantly more about the valuation distribution than the seller, through their domain expertise, repeated participation in the market, or direct familiarity with their own demand environment. In this work, we are interested in evaluating the value of buyers' information in designing effective pricing policies.
Our central research question is:
When the seller has access to only one sample, can he leverage the buyer’s informational advantage to refine the design of the selling mechanism?

To answer this, we develop a general framework that conceptually and operationally bridges the statistic-based and sample-based robust pricing paradigms. 
Building on this framework, we establish that only a \emph{single} sample is sufficient to recover the exact performance guarantees as in classical statistic-based mechanisms that assume the seller knows quantities such as the mean or $L^2$-norm.
A key insight is that the seller can leverage the buyer’s informational advantage through what we call a \emph{hidden pricing mechanism}. In this mechanism, the seller privately observes a sample $s$, commits ex ante to a pricing rule that maps the sample (and possibly buyer-reported information) into a payment, and reveals the sample only after the buyer decides whether to participate. Because the buyer knows the underlying distribution, her participation decision depends solely on her private valuation and the expected price induced by the committed rule, which corresponds to a functional of the valuation distribution. Thus, with an appropriately designed rule, the seller can effectively implement a target statistic of the valuation distribution using only one observed sample. For instance, although estimating the mean requires $\Theta(1/\epsilon^2)$ samples, a seller with only a \emph{single hidden} sample can attain the same approximation guarantee as if the mean were known in advance. We illustrate the intuition in the following example.
\begin{tcolorbox}[frame empty]
\paragraph{Mean pricing.} 
The seller commits to a pricing rule
\[
h(s) = s~,
\]
where $s$ is a hidden sample. Under this rule, if the buyer accepts the offer, she is liable to pay $h(s) = s$. Since the buyer does not observe the realized sample, she would compare her valuation $v$ with the expected price 
\[
p(\F) = \E[h(s)] = \E[s] = \|\F\|_1~.
\]
Hence, the buyer accepts the offer if and only if her valuation $v\ge \|\F\|_1$, and pays $\|\F\|_1$ in expectation; otherwise she declines to participate. Therefore, this mechanism is effectively equivalent to posting a price equal to the mean of the distribution.
\end{tcolorbox}

In this example, the buyer’s knowledge is only used implicitly. If the buyer does not know the mean of the distribution, then no natural solution concept dictates how she would behave under our hidden pricing mechanism. On the other hand, she does not need full knowledge of the entire distribution---knowing only the mean is sufficient for making a rational participation decision. The seller may also adopt a pricing policy that posts a discounted multiple of the mean, which is an effective pricing strategy in the robust pricing regime when the mean is known.

Moreover, the seller can explicitly involve the buyer in the pricing process, effectively letting her ``price for herself''. We illustrate this idea with two examples: $L^2$-norm pricing and superquantile pricing.

\begin{tcolorbox}[frame empty]
\paragraph{$L^2$-norm pricing.} 
The seller commits to a pricing rule 
\[
h(s,x) = 0.5  \left( x + \frac{s^2}{x} \right)~,
\]
where $s$ is a hidden sample and $x$ is the report solicited from the buyer.
A buyer with any private valuation $v$ minimizes her expected payment $\E[h(s,x)]$ by reporting $x = \|\F\|_2 = \sqrt{\E[s^2]}$. Under this report, the expected payment becomes 
\[
p(\F) = \E[h(s, \|\F\|_2)] = \sqrt{\E[s^2]} = \|\F\|_2~.
\]
Hence, the buyer accepts the offer if and only if her valuation $v \ge \|\F\|_2$, and pays $\|\F\|_2$ in expectation; otherwise she declines to participate.
Therefore, this mechanism is effectively equivalent to posting a price equal to the $L^2$-norm of the distribution.
\end{tcolorbox}
We remark that this hidden pricing mechanism only requires the buyer to know the $L^2$-norm of the distribution, which allows her to determine both what to report and whether to participate. The same construction extends naturally to general $L^\eta$-norms for any $\eta \ge 1$. 
The following example illustrates the use of the superquantile in a pricing policy, where $\mathrm{VaR}_{q} = \F^{-1}(1-q)$ denotes the quantile at purchase probability $q$ and $\mathrm{CVaR}_{q}=\E[v\mid v\ge \mathrm{VaR}_{q}]$ denotes the superquantile for $q$.\footnote{In the finance and optimization literature, the superquantile is commonly referred to as the Conditional Value at Risk (CVaR), and the quantile is commonly referred to as the Value at Risk (VaR). We adopt this notation in the present paper. Unlike the risk-measure interpretation, we view CVaR here solely as a distributional statistic.}

\begin{tcolorbox}[frame empty]
\paragraph{Superquantile pricing.}  The seller announces publicly a $q \in (0,1]$, and commits to a pricing rule
\[
h(s,x) = x +  \frac{1}{q} \cdot (s- x)^+~,\footnote{We use $x^+$ to denote the function $\max(x,0)$.}
\]
where $s$ is a hidden sample and $x$ is the report solicited from the buyer.
A buyer with any private valuation $v$ minimizes her expected payment $\E[h(s,x)]$ by reporting $x=\mathrm{VaR}_{q} =\F^{-1}(1-q)$. Under this report, the expected payment becomes 
\[
p(\F) = \E[h(s,\mathrm{VaR}_{q})] = \mathrm{VaR}_{q} + \frac{1}{q} \cdot \E[(s- \mathrm{VaR}_{q})^+] = \E[s \mid s \ge \mathrm{VaR}_{q}] = \mathrm{CVaR}_{q}(\F)~.
\]
Hence, the buyer accepts the offer if and only if her valuation $v \ge \mathrm{CVaR}_{q}(\F)$, and pays $\mathrm{CVaR}_{q}(\F)$ in expectation; otherwise she declines to participate.
Therefore, this mechanism is effectively equivalent to posting a price equal to the $\mathrm{CVaR}_{q}$ of the distribution.
\end{tcolorbox}
In this example, our mechanism requires the buyer to know $\mathrm{VaR}_{q}$ in order to determine what to report, and $\mathrm{CVaR}_{q}$ to decide whether to participate.

As we have seen, the seller can implement a rich family of pricing policies $p:\Delta(\R) \to \R$ by employing a carefully designed hidden pricing rule $h$ together with a single hidden sample. 
When the buyer has full knowledge of the distribution, the hidden pricing rule $h$ functions as a \emph{proper scoring rule}: it maps the hidden sample $s$ and the buyer's reported distribution $\F$ to a payment. 
In our context, a pricing rule is \emph{proper} if reporting the distribution truthfully minimizes the expected payment.
By the Savage representation theorem for proper scoring rules, any \emph{concave} pricing policy $p$ can be implemented through such a hidden pricing mechanism.
As a remark, we note that the optimal posted-price functional $p^*$ is not concave; otherwise, a single hidden sample, together with the buyer’s knowledge, would suffice to achieve optimal revenue for all distributions.

We adopt the prior-independent mechanism design framework to evaluate the performance guarantees of different pricing policies and to identify the optimal hidden pricing mechanism. Specifically, we assume the seller has access to a single sample and knows only that the buyer’s value distribution lies in a nonparametric ambiguity set of ``$\alpha$-regular'' distributions, an assumption widely used in the literature. The case $\alpha=1$ corresponds to the class of monotone hazard rate (MHR) distributions, which includes a broad class of distributions such as uniform, truncated normal, exponential, logistic, and extreme value distributions. The seller aims to find the optimal pricing policy hedging against the most adversarial distribution that nature may choose from ``$\alpha$-regular'' distributions. We adopt the maximin ratio objective, in which the performance of a pricing policy is measured relative to the hindsight-optimal expected revenue that a fully informed clairvoyant seller could obtain, i.e., 
\[
\opt(\F) := \max_v v \cdot (1-\F(v))~.
\]
A central challenge is that evaluating the worst-case approximation ratio of any pricing policy requires solving an infinite-dimensional optimization problem over a generally nonconvex set of distributions. 

We overcome this difficulty by restricting attention to pricing policies that are monotone with respect to first-order stochastic dominance---a natural property satisfied by all previously considered examples. Our main contribution is to provide a tractable reduction of nature’s optimization problem. We show that for any deterministic monotone pricing policy, nature’s optimal adversarial response always lies within a simple two-parameter family of distributions. Moreover, for pricing policies that depend on a specific statistic of the distribution---such as the mean, the $L^2$-norm, or superquantile---we further tighten this reduction: the worst-case distribution can be restricted to a one-parameter family. 

Technically, our approach hinges on distinguishing between mistakenly high and mistakenly low posted prices. If the seller overprices (relative to the hindsight optimum), then a ``larger'' valuation distribution results in a worse performance ratio; when the seller underprices, then a ``smaller'' valuation distribution results in a worse performance ratio. This insight allows us to reduce the infinite-dimensional ambiguity set to a tractable parametric class of distributions.

Beyond quantifying the approximation ratios of specific pricing policies, our reduction enables us to characterize the optimal hidden pricing mechanism---or equivalently, the optimal concave pricing policy---for every $\alpha \in [0,1]$. With computer assistance, we compute that the optimal monotone hidden pricing mechanism attains an approximation ratio of $\sim 0.79$ for MHR distributions.

\Cref{tab:ratio-policy} summarizes the optimal discounts and performance ratios under MHR distributions for several representative pricing rules. Notably, simple rules such as mean pricing and superquantile pricing deliver performance guarantees close to that of the optimal monotone hidden pricing mechanism. 
\begin{table}[htbp]
\centering
\caption{Optimal Approximation Ratios with Different Pricing Policies over MHR Distributions}
\label{tab:ratio-policy}
\begin{tabular}{lcccc}
\toprule
Pricing Policy $p$
& Hidden Pricing Rule $h$ 
& Parameter
& Discount $\omega$ 
& Ratio \\
\midrule
Mean 
& $\omega \cdot s$
& - 
& 0.823
& 0.771 \\
\midrule
$L^\eta$-norm 
& $\omega \cdot \frac{1}{\eta} \left( (\eta-1)\|\F\|_{\eta} + \frac{s^\eta}{ \|\F\|_{\eta}^{\eta-1}} \right) $
& $\eta = 1.37$
& 0.805
& 0.774 \\
\midrule
Superquantile
& $\omega \cdot \left( \mathrm{VaR}_{q} + \frac{1}{q} (s-\mathrm{VaR}_{q})^+ \right)$
& $q=0.92$
& 0.787
& 0.787 \\
\midrule
Optimal Monotone & Theorem~\ref{thm:optimal}  & - & - & $\sim0.79$ \\
\bottomrule
\end{tabular}
\end{table}

Our reduction framework not only enables the development of positive results (lower bounds) for a broad class of pricing rules, but also provides a foundation for establishing negative results (upper bounds). We show that no concave pricing rule can achieve an approximation ratio above 0.801 under MHR distributions, indicating that restricting attention to monotone pricing rules leads to only a minimal performance loss. 

Furthermore, we obtain an improved impossibility result for all general prior-independent mechanisms. Specifically, for MHR distributions, we establish an upper bound of $0.838$, strengthening the previous bound of $0.934$ reported in \cite{stoc/FengHL21}. This small gap between our lower bound $0.79$ and upper bound $0.838$ underscores the effectiveness of our hidden pricing mechanism within the broader class of all prior-independent mechanisms.
Moreover, as an illustrative toy example, we show that over uniform distributions, the hidden pricing mechanism achieves an approximation ratio of $0.875$, which is optimal among all prior-independent mechanisms.

Finally, beyond sample-based settings, our framework extends naturally to robust pricing problems in which the seller observes any monotone statistic $\Psi(\F)$, such as mean, quantile, or lower-tail expectation, even when the corresponding functional is not concave. In particular, the optimal mechanisms and approximation ratios presented in \Cref{tab:ratio-policy} apply directly to robust pricing problems subject to mean, $L^\eta$-norm, or superquantile information, respectively. The equivalence between statistic-based and sample-based implementations is discussed in detail in \Cref{sec:application}, which establishes that our approach provides a unified framework that bridges the statistic-based and sample-based robust pricing regimes.

\paragraph{Roadmap.} In the next subsection, we discuss related work. Section~\ref{sec:model} formalizes the prior-independent mechanism design problem with a single sample and introduces hidden pricing mechanisms. As an illustrative example, we show that a hidden pricing mechanism achieves an approximation ratio of 
$0.875$ over uniform distributions.
Section~\ref{sec:prelim} introduces boundary $\alpha$-regular distributions and highlights several properties essential to our analysis. Section~\ref{sec:monotone} presents our main reduction for general monotone pricing policies. We characterize the optimal monotone hidden pricing mechanism in Section~\ref{sec:optimal}. Section~\ref{sec:negative} establishes our negative results, and Section~\ref{sec:application} explores the connection between hidden pricing mechanisms and robust pricing with statistic information. 

\subsection{Related Work}
Prior-independent pricing and mechanism design have proceeded along two distinct paths, distinguished by the type of information available to the seller. 

\paragraph{Sample-based pricing.} The first line of research, which we term ``sample-based'' model, designs pricing or auction mechanisms using samples drawn from the unknown value distribution. 
In the one-sample setting, \cite{SIAM/HuangMR18} show the optimal deterministic approximation ratio under regular distribution is $0.5$, and they further establish a 0.589 lower bound and a 0.68 upper bound for deterministic mechanisms under MHR distributions. \cite{or/AllouahBB22} improve the MHR lower bound to 0.644 and provide an upper bound of 0.648 for randomized mechanisms.
\cite{EC/FuILS15} are the first to demonstrate the benefit of randomized mechanisms. Later, \cite{or/AllouahBB22} improve the lower bound of randomized mechanisms to $0.502$ under regular distributions.
With two samples, \cite{EC/BabaioffGMM18,EC/DaskalakisZ20} derive lower bounds for deterministic mechanisms under regular distributions. \cite{or/AllouahBB22} provides a general approach to develop lower bounds for a general number of samples. 
A parallel stream of literature investigates the sample complexity to achieve given approximation ratios \citep{cole2014sample,gonczarowski2017efficient,SIAM/HuangMR18, guo2019settling,hu2021targeting}.
The previous work focused on mechanism design based on \emph{revealed} samples. In contrast, we develop a hidden-sample framework, where the sample is concealed from the buyer rather than revealed as in previous works, and may even remain unobserved by the seller until the allocation.

The closest work to ours is \cite{stoc/FengHL21}, which also examines prior-independent mechanism design with a single sample and assumes that the buyer has full knowledge of the value distribution. Their central contribution is to identify a revelation gap between truthful and non-truthful mechanisms, within the standard paradigm where the buyer submits a single bid. In our setting, the buyer may report the distribution, and under this richer message space the revelation principle continues to apply. This shift allows us to analyze a broader class of pricing mechanisms. For example, the sample-bid mechanism proposed in \cite{stoc/FengHL21} coincides with mean pricing under MHR distributions, whereas we propose a tractable infinite-dimensional family of pricing rules and deliver strictly improved approximation guarantees. In addition, we complement their lower-bound results by providing strengthened impossibility bounds within our more general model.  

\paragraph{Statistic-based pricing.} The second line of research, which we term the ``statistic-based" model, considers settings where the seller possesses partial information about the value distribution—such as moments, quantiles, or other aggregate statistics. The seller’s objective is to choose a price or mechanism that maximizes performance against the most pessimistic distribution consistent with this information. Robust mechanisms are studied under the maximin revenue objective for ambiguity sets based on moments information \citep{pinar2017robust,carrasco2018optimal,chen2022distribution,chen2023screening}, mean preserving contraction \citep{du2018robust,chen2023screening} and Wasserstein metric \citep{li2019revenue,chen2023screening}. \cite{bergemann2008pricing,caldentey2017intertemporal,koccyiugit2022robust} study robust mechanisms with support information under the minimax regret objective. 
 For the maximin ratio objective, robust mechanisms are derived for support \citep{eren2010monopoly}, moments \citep{giannakopoulos2023robust, wang2024minimax}, and quantile \citep{ms/AllouahBB23, bahamou2024fast,wang2025power, ec/NabyFH25} ambiguity sets. 

Among these works, \cite{ms/AllouahBB23} is the closest to our setting, as it studies quantile-based ambiguity sets for $\alpha$-regular distributions. Note that quantile information is non-concave, and thus quantile pricing cannot be implemented via a hidden-pricing mechanism, so their approximation guarantees are not directly comparable to ours. On the other hand, the quantile is a monotone statistic. Therefore, our reduction in Theorem~\ref{thm:monotone}, which applies to all ambiguity sets defined by general monotone statistics, encompasses the quantile ambiguity set. We contribute to the robust pricing literature by developing a comprehensive framework that addresses robust pricing with general monotone statistic information under $\alpha$-regular distributions. 

\paragraph{Scoring rules.} The study of scoring rules dates back to the seminal work \cite{brier1950statistical}, with \cite{savage1971elicitation} later providing a comprehensive foundation for evaluating probabilistic forecasts. For broader context, we refer the reader to the survey by \cite{sigecom/LambertPS08}, and the recent survey by \cite{sigecom/FrongilloW24}.
A large body of work examines the elicitability of statistical properties~\citep{osband1985providing,gneiting2007strictly,ec/LambertPS08}. In this framework, an agent reports a single property (e.g., the mean or median), and a scoring rule is proper if truthful reporting maximizes expected score. A classical result in this literature is that CVaR is not elicitable by itself, whereas the pair (VaR, CVaR) is jointly elicitable~\citep{fissler2016higher}.

Our work, however, takes a different perspective: we design scoring rules so that their Bayes risk (expected score) implements the quantity of interest. This distinction is important for interpreting our examples. For instance, in our $L^\eta$-norm example, the scoring rule both elicits the $L^\eta$-norm and has a Bayes risk equal to it---an incidental coincidence. In our CVaR-pricing example, the scoring rule instead elicits VaR while its Bayes risk equals CVaR, which is the designed objective. Thus, our use of scoring rules concerns shaping the Bayes risk, rather than determining whether a property (such as CVaR) is directly elicitable. Moreover, the optimal hidden pricing rule in Section~\ref{sec:optimal} requires the buyer to report the full distribution, placing our setting conceptually outside the standard property-elicitation framework.

\cite{caillaud2005implementation} study an auction environment in which the seller does not know the buyers’ value distributions. \cite{itcs/ACM12} investigate a related setting in which “buyers know each other better than the seller knows them.” Both works share a similar spirit with our assumption that the buyer has better knowledge of the market than the seller.In particular, \cite{itcs/ACM12} incorporates scoring rules into multi-agent mechanism design. A key difference, however, is that in their framework the elicited information affects only the seller’s strategy toward the other buyers, thereby decoupling information elicitation from the mechanism design problem. In contrast, in our setting the pricing mechanism itself is a scoring rule, which we optimize directly. 
\section{Problem Formulation}
\label{sec:model}

We study the robust monopoly pricing problem in which the seller aims to sell one product to a buyer. The buyer has a private valuation $v$ for the product, drawn from a distribution $\F$. The seller does not know the buyer's valuation $v$ or the prior distribution $\F$, but has a \emph{hidden} sample $s \sim \F$; in contrast, the buyer knows the distribution $\F$ but does not know the seller's observed sample $s$.

We study revenue-maximizing mechanisms that allow the buyer to report both her value $v$ and her knowledge of the prior distribution $\F$. 
A mechanism includes an allocation function $x: \R \times \R \times \Delta(\R) \to [0,1]$ and a payment function $t: \R \times \R \times \Delta(\R) \to \R$. 
By revelation principle, we can restrict attentions to mechanisms that satisfy incentive compatibility (IC) and individual rationality (IR). That is, truthful reporting maximizes her own utility:
\begin{align*}
\text{IC:} \quad & \Ex{s\sim \F}{v \cdot x(s,v,\F) - t(s,v,\F)} \ge \Ex{s \sim \F}{v \cdot x(s,v',\F') - t(s,v',\F')}, & \forall v,\F,v',\F'~; \\
\text{IR:} \quad & \Ex{s \sim \F}{v \cdot x(s,v,\F) - t(s,v,\F)} \ge 0, & \forall v,\F~.
\end{align*}
For any mechanism $(x,t)$, we evaluate its performance by the prior-independent approximation ratio over a distribution class of interest $\cF$:
\begin{align*}
  \inf_{\F \in \cF}  \frac{\Ex{s\sim \F, v \sim \F}{t(s,v,\F)}}{\opt(\F)}~,
\end{align*}
where $\opt(\F) = \max_v v(1-\F(v))$ is the optimal revenue obtained under distribution $\F$.
The optimal prior-independent mechanism over a class of distributions $\cF$ is then captured by the following optimization:
\begin{align*}
\max_{\Gamma, x, t}: \quad & \Gamma \tag{$\star$} \label{lp:opt}\\
\text{subject to}: \quad & \Ex{s\sim \F}{v \cdot x(s,v,\F) - t(s,v,\F)} \ge \Ex{s\sim \F}{v \cdot x(s,v',\F') - t(s,v',\F')} & \forall v,v',\F,\F' \\
& \Ex{s\sim \F}{v \cdot x(s,v,\F) - t(s,v,\F)} \ge 0 & \forall v,\F \\
& \Ex{s\sim \F, v \sim \F}{t(s,v,\F)} \ge \Gamma \cdot \opt(\F) & \forall \F \in \cF \\
& 0 \le x(s,v,\F) \le 1 & \forall s,v,\F
\end{align*}
Here, $\Gamma$ denotes the approximation ratio; the first two constraints enforce IC and IR; the third constraint guarantees that for every distribution $\F \in \cF$, our mechanism is $\Gamma$-approximate; the last constraint restricts the allocation function.

\subsection{Motivating Example: Uniform Distributions}
\label{sec:example}

The optimization problem \eqref{lp:opt} is generally intractable due to the complexity of the IC constraints over the space of distributions. Before introducing our general framework, we analyze the problem for the canonical family of uniform distributions, $\cF = \{U[a,b] : 0 \le a \le b\}$.
The solution to this specific case provides the primary motivation for our subsequent focus on the hidden pricing mechanism framework, which is formally defined in the next subsection.

For the family of uniform distributions, consider the following mechanism $(x, t)$:
$$x(s,v,U[a,b]) = \ind\left[ v \ge \frac{7}{16}(a+b) \right] \quad \text{ and } \quad t(s,v,U[a,b]) = x(s,v,U[a,b]) \cdot \frac{7}{8}s.$$
In this mechanism, the allocation rule is a threshold policy based on the reported support $[a,b]$, while the payment, conditional on allocation, is proportional to the seller's sample $s$.

It is straightforward to verify that this mechanism satisfies the constraints of the optimization problem \eqref{lp:opt}. The expected payment conditional on allocation is $\mathbb{E}_{s\sim U[a,b]}[\frac{7}{8}s] = \frac{7}{16}(a+b)$, which is independent of the buyer's report. Because the allocation threshold is set exactly at this expected cost, the buyer purchases if and only if their valuation exceeds the expected price, thereby satisfying both IR and IC condition. With the mechanism's feasibility established, the following theorem asserts its optimality.

\begin{theorem}
For the class of uniform distributions, the mechanism $(x,t)$ defined above constitutes the optimal solution to \eqref{lp:opt}. It achieves a prior-independent approximation ratio of $\Gamma = \frac{7}{8}$.
\label{thm: uniform}
\end{theorem}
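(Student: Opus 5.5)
The proof has two halves: showing that the stated mechanism guarantees $\frac78$ on every uniform distribution, and showing that no feasible solution of \eqref{lp:opt} can do better.

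\textbf{Lower bound.} As noted before the statement, the mechanism is feasible and acts as a posted price $p(a,b)=\frac{7}{16}(a+b)$. Both revenue and $\opt$ scale linearly under $v\mapsto \lambda v$, so I normalize to $b=1$ and $a\in[0,1]$. For $U[a,1]$ we have $\opt=\max_{p\in[a,1]}p\frac{1-p}{1-a}$. This equals $\frac{1}{4(1-a)}$ when $a\le\frac12$ and equals $a$ when $a\ge\frac12$. The posted price $p=\frac{7(1+a)}{16}$ lies in $[a,1]$ exactly when $a\le\frac79$, so I split into three cases.
\begin{itemize}
\item For $a\le \frac12$ the ratio is $4p(1-p)$, with $p\in[\frac{7}{16},\frac{21}{32}]$. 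Its minimum is $\frac{924}{1024}>\frac78$.
\item For $\frac12\le a\le\frac79$ the ratio is $\frac{p(1-p)}{a(1-a)}$. This is a one-variable rational function, and I will check it stays above $\frac78$ by routine calculus.
\item For $a\ge\frac79$ the item always sells, so the ratio is $\frac{7(1+a)}{16a}$. It decreases in $a$ and reaches $\frac78$ at $a=1$, the point mass.
\end{itemize}
So the worst case is the degenerate uniform $U[1,1]$, and the guarantee is exactly $\frac78$.

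\textbf{Upper bound.} For an arbitrary IC/IR mechanism, define the interim utility $u(v,\F)=\Ex{s\sim\F}{v\,x(s,v,\F)-t(s,v,\F)}$. For a fixed report $(v',\F')$, the deviation payoff $\Ex{s\sim\F}{v\,x(s,v',\F')-t(s,v',\F')}$ is linear in $\F$ (as a mixture) and affine in $v$. By IC, $u(v,\F)$ is the supremum of these payoffs. Hence $u$ is jointly convex under mixtures of the true distribution and convex in $v$. Its $v$-derivative is the interim allocation $\bar x(v,\F)\in[0,1]$. Revenue on $\F$ then has the usual envelope form
\[
\Ex{v\sim\F}{v\,\bar x(v,\F)-u(v,\F)}.
\]

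The key structural fact is that uniforms decompose into uniforms: $U[a,b]=\frac12U[a,m]+\frac12U[m,b]$, and in the limit $U[a,b]=\frac1{b-a}\int_a^b\delta_c\,\dd{c}$ with each $\delta_c$ approximated by $U[c,c+\epsilon]$. Convexity in the distribution therefore links the buyer's utility across different uniforms.

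I would pick a finite family of instances and combine the $\Gamma$-approximation constraints with these convexity and envelope constraints into a small LP, so that the resulting dual certificate forces $\Gamma\le\frac78$. The natural candidates are the point mass $\delta_1$, which is tight for our mechanism, and spread-out uniforms such as $U[0,1]$ or $U[\frac12,1]$ together with their halves. The intended tension is as follows. Achieving $\Gamma\cdot c$ on $\delta_c$ caps the buyer's utility, $u(c,\delta_c)\le(1-\Gamma)c$. Convexity applied to the decomposition then bounds what the mechanism can extract on the wider uniform, whose $\opt$ is relatively large.

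The main obstacle is precisely this upper-bound step. Convexity in $\F$ naturally gives \emph{upper} bounds on the utility of a mixture, which pushes revenue on the mixture up. To obtain an impossibility, I need to run the inequality in the other direction: express a component as a combination of the mixture and the other component, or use IC deviations from a component to the mixture's report. I am not sure which family of instances makes the bound exactly $\frac78$. I would first try the pair $\{\delta_1, U[1-\epsilon,1]\}$ together with $U[0,1]$ and its dyadic halves, solving the resulting LP numerically to guess the tight dual multipliers, and then verify the certificate analytically.
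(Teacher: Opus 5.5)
\textbf{The lower-bound half} follows the paper's case analysis and is correct, with one slip that matters for the other half. On $\frac12\le a\le\frac79$ the ratio does not stay strictly above $\frac78$. With $p=\frac{7(1+a)}{16}$, the inequality $\frac{p(1-p)}{a(1-a)}\ge\frac78$ reduces to $(5a-3)^2\ge 0$. So $U[\frac35,1]$ is also exactly tight, and $U[1,1]$ is not the only worst case. (Minor: the deviation payoff $v\,\E_{s\sim\F}[x]-\E_{s\sim\F}[t]$ is bilinear in $(v,\F)$. Hence $u$ is convex in each argument separately, not jointly.)

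\textbf{The genuine gap is the optimality half.} You give no argument that $\Gamma\le\frac78$, only a plan to search numerically for an LP certificate. The instances you propose ($U[0,1]$, its dyadic halves, $U[1-\epsilon,1]$) are not where the mechanism is tight, apart from the point mass itself. For example, running the argument below with $U[\frac12,1]$ gives only $\frac89$.

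The missing idea is to use exactly the two tight instances. Take $\cF=\{\F_0\}\cup\{\delta_s: s\in[\frac35,1]\}$ with $\F_0=U[\frac35,1]$, and write $x_0(s,v)=x(s,v,\F_0)$, $t_0(s,v)=t(s,v,\F_0)$, $\bar x_0(v)=\E_{s\sim\F_0}[x_0(s,v)]$.
\begin{itemize}
\item The approximation constraint on $\delta_s$, together with $x\le 1$, gives $u(s,\delta_s)\le(1-\Gamma)s$.
\item Apply IC to the point-mass type $(s,\delta_s)$ deviating to the report $(1,\F_0)$. Here the buyer's value coincides with the realized sample. Pointwise in $s$ this gives
\[
x_0(s,1)-t_0(s,1)\le (1-\Gamma)s+(1-s)\,x_0(s,1)\le 1-\Gamma s .
\]
\item Averaging over $s\sim\F_0$ bounds the top type's interim utility: $u(1,\F_0)\le 1-\Gamma\,\E_{\F_0}[s]=1-0.8\,\Gamma$.
\end{itemize}

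So your worry about the direction of convexity is misplaced. An upper bound on utility under the mixture is exactly what is needed, provided it is applied at the top type. By the envelope formula and IR, $u(1,\F_0)\ge\int_0^1\bar x_0(w)\,\mathrm{d}w$, so the bound caps the total interim allocation under $\F_0$. In your own language, this step is linearity in $\F$ combined with $\partial_v u\le 1$.

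Now combine this with Myerson's formula,
\[
0.6\,\Gamma\le\int_{0.6}^{1}\bigl(vf_0(v)-(1-\F_0(v))\bigr)\bar x_0(v)\,\mathrm{d}v .
\]
Add the two inequalities and maximize pointwise over $\bar x_0\in[0,1]$:
\[
1.4\,\Gamma\le 1+\int_{0.6}^{1}(5v-3.5)^+\,\mathrm{d}v=1.225 ,
\]
which gives $\Gamma\le\frac78$.
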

We now prove the approximation ratio, establishing the positive result; the matching impossibility result is deferred to Section~\ref{sec:negative}.

\begin{proof}
We verify that for every uniform distribution $U[a,b]$, our mechanism achieves an approximation ratio of $\frac{7}{8}$.
Notice that the optimal posted price for $U[a,b]$ is $\max \left(a,\frac{b}{2}\right)$, which yields an expected revenue of
$ \opt = \max \left(a, \frac{b^2}{4(b-a)} \right)$.
On the other hand, our mechanism induces an effective deterministic price of $\E_{s\sim U[a,b]}[\frac{7}{8}s]=\frac{7}{16}(a+b)$.
We now proceed by case analysis.
\begin{itemize}
\item If $a < \frac{7}{9}b$, we have $\frac{7}{16}(a+b) > a$. Then the expected revenue of our mechanism is
\[
\Rev = \underbrace{\frac{7}{16} (a+b)}_{\text{price}} \cdot \underbrace{\frac{b-\frac{7}{16} (a+b)}{b-a}}_{\text{probability of selling}} = \frac{7}{256} \cdot \frac{(a+b)(9b-7a)}{b-a}~.
\]
We further separate this into two subcases.
\begin{itemize}
    \item If $a < \frac{1}{2}b$, we have
\begin{align*}
\Rev \ge \frac{7}{8} \cdot \opt & \iff \frac{7}{256} \cdot \frac{(a+b)(9b-7a)}{b-a} \ge \frac{7}{8} \cdot \frac{b^2}{4(b-a)}  \iff b^2+2ba-7a^2 \ge 0~,
\end{align*}
which holds strictly whenever $a<b/2$.
\item If $a \in [\frac{1}{2}b, \frac{7}{9}b)$,
\begin{align*}
\Rev \ge \frac{7}{8} \cdot \opt & \iff \frac{7}{256} \cdot \frac{(a+b)(9b-7a)}{b-a} \ge \frac{7}{8} \cdot a \iff (3b-5a)^2 \ge 0~,
\end{align*}
which always holds and is tight when $a=0.6b$.
\end{itemize}
\item If $a \in \left[\frac{7}{9}b,b \right]$, we have $\frac{7}{16}(a+b) \le a$. Then the expected revenue of our mechanism is $\Rev = \frac{7}{16} (a+b)$. Consequently,
\begin{align*}
\Rev \ge \frac{7}{8} \cdot \opt & \iff \frac{7}{16} \cdot (a+b) \ge \frac{7}{8} \cdot a  \iff b \ge a~,
\end{align*}
which always holds and is tight when $a=b$.
\end{itemize}
This completes the proof of the positive part of the theorem.
\end{proof}

\paragraph{Interpretation as a hidden pricing rule.} The structure of this optimal mechanism offers a crucial insight that motivates our general approach. The payment function $t(s,v,\F) = x(s,v,\F) \cdot \frac{7}{8}s$ suggests a natural interpretation as a hidden pricing rule. Specifically, we can view $h(s) = \frac{7}{8}s$ as a pricing rule committed to by the seller. The buyer's decision is driven solely by the expected price: she calculates $\E_{s \sim \F}[h(s)]$ and purchases the item if her valuation $v$ satisfies $v \ge \E_{s \sim \F}[h(s)]$. \subsection{Hidden Pricing Mechanisms}
Motivated by the optimality of hidden pricing in the uniform case, in this work, we study a general class of mechanisms we term \emph{hidden pricing mechanisms}.

While the mechanism in Section \ref{sec:example} used a price $h(s)$ independent of the buyer's report, in the general setting, the seller may utilize a pricing rule $h(s, \F')$ that depends on the reported distribution $\F'$, as illustrated by the $L^2$-norm and superquantile pricing mechanisms discussed in the introduction.
The timing of the game is as follows:
\begin{itemize}
\item The seller commits ex ante to a pricing rule $h: \R \times \Delta(\R) \to \R$, which maps a realized sample $s$ and a reported distribution $\F'$ to a payment, where $\Delta(\R)$ denotes the set of distributions over $\R$.
\item The buyer observes her private valuation $v$ and the prior $\F$, and decides whether to participate (i.e., to purchase the item) before seeing the sample.
\item If she participates, she reports a distribution $\F'$. The sample $s$ is then revealed, and the buyer is charged $h(s,\F')$.
\end{itemize}
A rational and risk-neutral buyer minimizes her expected payment $\min_{\F'} \E_{s \sim\F}\off{h(s,\F')}$, and purchases the item if her valuation $v$ exceeds this expected charge. Thus, it suffices to consider \emph{proper pricing rules} that satisfy the following condition.
\begin{definition}[Proper Pricing Rules]
A pricing rule $h: \R \times \Delta(\R) \to \R$ is proper if:
\[
\Ex{s \sim \F}{h(s,\F)} \le \Ex{s \sim \F}{h(s,\F')}, \quad \forall \F,\F'~.
\]
\end{definition}
Each proper pricing rule induces a corresponding hidden pricing mechanism:
\begin{definition}[Hidden Pricing Mechanisms]
Given a proper pricing rule $h: \R \times \Delta(\R) \to \R$, the corresponding hidden pricing mechanism is defined as follows:
\[
x(s,v,\F) = \ind[v \ge \E_{s\sim \F}[h(s,\F)]], \quad t(s,v,\F) = x(s,v,\F) \cdot h(s,\F)~.
\]
\end{definition}
A classical result by \cite{savage1971elicitation} and \cite{gneiting2007strictly} characterizes all proper scoring rules.
\begin{theorem}[\cite{savage1971elicitation,gneiting2007strictly}]
\label{thm:scoring}
A pricing/scoring rule $h(s,\F)$ is \emph{proper} if and only if there exists a concave functional $p: \Delta(\mathbb{R}) \to \mathbb{R}$ such that
\[
h(s,\F) = p(\F) + \left\langle p'(\F), \, \delta_s - \F \right\rangle,
\]
where $p'(\F)$ is any subgradient of $p$ at $\F$, and $\delta_s$ denotes the Dirac measure at $s$. In this case, the expected price satisfies
\[
p(\F) = \Ex{s\sim \F}{h(s,\F)}~.
\]
\end{theorem}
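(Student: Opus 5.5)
The plan is to prove the two directions separately, using the loss convention of the paper: the buyer minimizes expected payment, so proper rules correspond to \emph{concave} Bayes risks. Throughout, write $S(\F',\F) := \E_{s\sim\F}[h(s,\F')]$ for the expected payment under true distribution $\F$ when reporting $\F'$. Note that $S(\F',\cdot)$ is \emph{linear} in $\F$, since it is an integral of $h(\cdot,\F')$ against $\F$.

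\textbf{Only if.} Suppose $h$ is proper, and define $p(\F) := S(\F,\F) = \E_{s\sim\F}[h(s,\F)]$. By properness, $p(\F) = \min_{\F'} S(\F',\F)$. This is a pointwise minimum of linear functionals of $\F$, so $p$ is concave.

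Next, fix $\F'$. The linear functional $\G \mapsto S(\F',\G) = \langle h(\cdot,\F'), \G\rangle$ satisfies two properties:
\begin{itemize}
\item $S(\F',\G) \ge p(\G)$ for all $\G$, by properness;
\item $S(\F',\F') = p(\F')$.
\end{itemize}
So it is an affine majorant of $p$ touching $p$ at $\F'$. Hence $h(\cdot,\F')$, viewed as a linear functional, is a supergradient of $p$ at $\F'$ in the sense appropriate for concave functions; this is what the theorem calls $p'(\F')$. Concretely, taking $p'(\F') := h(\cdot,\F')$ gives
\[
p(\F') + \langle p'(\F'), \delta_s - \F'\rangle = p(\F') + h(s,\F') - \E_{\F'}[h(\cdot,\F')] = h(s,\F').
\]
This is the desired representation. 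Since a subgradient is only defined up to constants on the simplex (measures of total mass one), this choice is legitimate.

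\textbf{If.} Suppose $h(s,\F) = p(\F) + \langle p'(\F), \delta_s - \F\rangle$ with $p$ concave and $p'(\F)$ a supergradient. Integrating over $s\sim\G$ and using linearity in the Dirac argument gives
\[
S(\F,\G) = p(\F) + \langle p'(\F), \G - \F\rangle.
\]
The supergradient inequality for concave $p$ then yields $S(\F,\G) \ge p(\G)$. Taking $\G = \F$ gives $S(\F,\F) = p(\F)$. Together these show that $h$ is proper and that the expected price equals $p(\F)$.

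\textbf{Main obstacle.} The step I expect to be delicate is making the infinite-dimensional subgradient notion precise. One must choose a function space for $p'(\F)$, for example measurable functions integrable against the relevant distributions, and ensure the integrals $\E_{s\sim\G}[h(s,\F)]$ are well defined. Because this is a classical result, I would adopt the convention of \cite{gneiting2007strictly}: restrict to a convex class of distributions on which all expectations are finite, and define a subgradient as a real function $p'(\F)$ satisfying $p(\G) \le p(\F) + \int p'(\F)\,\mathrm{d}(\G-\F)$ for all $\G$. With that convention, both directions reduce to the short calculations above.
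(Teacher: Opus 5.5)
Your argument is correct, and it is the classical Savage/Gneiting--Raftery proof transposed to the payment-minimizing, concave convention: the Bayes risk is a pointwise minimum of linear functionals, and $h(\cdot,\F')$ itself serves as the supergradient. The paper does not prove this theorem; it cites those references. You are also right to read ``any subgradient'' as ``some choice of supergradient at each $\F$'': in the only-if direction the representation is guaranteed only for the supergradient $h(\cdot,\F)$, up to an additive constant.
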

Thus, every hidden pricing mechanism corresponds to a concave functional $p: \Delta(\R) \to \R$. 
\begin{corollary}
A posted pricing policy $p(\F)$ is implementable by a hidden pricing mechanism, if and only if $p:\Delta(\R) \to \R$ is concave. That is, for all $\F_1,\F_2 \in \Delta(\R)$ and all $\lambda \in [0,1]$,
\[
p\!\left(\lambda \F_1 + (1-\lambda)\F_2\right) \ge \lambda\, p(\F_1) + (1-\lambda)\, p(\F_2),
\]
where $\lambda \F_1 + (1-\lambda)\F_2$ denotes the mixture distribution.
\end{corollary}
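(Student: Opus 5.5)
We prove the corollary by reading both directions off Theorem~\ref{thm:scoring}. First we fix what ``implementable'' means. A posted pricing policy $p$ is implemented by a hidden pricing mechanism with proper rule $h$ when, for every $\F$, the expected charge $\E_{s\sim\F}[h(s,\F)]$ equals $p(\F)$. In that case the induced allocation $\ind[v\ge \E_{s\sim\F}[h(s,\F)]]$ and the expected payment coincide with posting the price $p(\F)$.

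\emph{($\Leftarrow$) Concave $p$ is implementable.} Given a concave $p:\Delta(\R)\to\R$, choose any subgradient selection $p'(\F)$ and define
\[
h(s,\F) = p(\F) + \left\langle p'(\F), \delta_s - \F\right\rangle .
\]
This is proper directly from the subgradient inequality. Indeed, $\E_{s\sim\F}[h(s,\F')] = p(\F') + \langle p'(\F'), \F-\F'\rangle \ge p(\F)$, because concavity gives $p(\F)\le p(\F')+\langle p'(\F'),\F-\F'\rangle$. Taking $\F'=\F$ shows that the expected price equals $p(\F)$. Hence the associated hidden pricing mechanism posts effective price $p(\F)$. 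This is just the ``if'' direction of Theorem~\ref{thm:scoring}.

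\emph{($\Rightarrow$) Implementable $p$ is concave.} Let $h$ be proper with $p(\F)=\E_{s\sim\F}[h(s,\F)]$. Properness says
\[
p(\F) = \min_{\F'} \E_{s\sim\F}[h(s,\F')] .
\]
For each fixed $\F'$, the map $\F\mapsto \E_{s\sim\F}[h(s,\F')]=\int h(s,\F')\,d\F(s)$ is linear in the mixture sense. Concretely, for $\F=\lambda\F_1+(1-\lambda)\F_2$ it equals $\lambda \E_{\F_1}[h(\cdot,\F')]+(1-\lambda)\E_{\F_2}[h(\cdot,\F')]$. A pointwise minimum of affine functionals is concave. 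Writing $\F_\lambda=\lambda\F_1+(1-\lambda)\F_2$, we get
\[
p(\F_\lambda) = \E_{\F_\lambda}[h(\cdot,\F_\lambda)] = \lambda\E_{\F_1}[h(\cdot,\F_\lambda)]+(1-\lambda)\E_{\F_2}[h(\cdot,\F_\lambda)] \ge \lambda p(\F_1)+(1-\lambda)p(\F_2),
\]
where the last step applies properness at $\F_1$ and at $\F_2$ with misreport $\F_\lambda$. This is exactly the displayed concavity inequality.

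The only delicate point is technical: the integrals $\E_{s\sim\F}[h(s,\F')]$ must be well defined, so that linearity in the mixture holds. This matters for distributions with heavy tails, where some expectations may be infinite. We handle it by restricting attention to the (convex) domain of distributions on which $p$ is finite, as is implicitly done in Theorem~\ref{thm:scoring}. On that domain the argument above goes through verbatim, and we expect no further obstacle.
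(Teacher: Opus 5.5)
Your proposal is correct and follows the paper's route: the paper reads the corollary directly off Theorem~\ref{thm:scoring}, and you unpack both directions of that theorem (the supergradient inequality for sufficiency, and concavity of a pointwise minimum of mixture-linear functionals for necessity). Your integrability caveat is harmless, and for necessity it is not needed, since finiteness of $p(\F_\lambda)$ already makes $h(\cdot,\F_\lambda)$ integrable under both $\F_1$ and $\F_2$; your sufficiency direction makes the same implicit assumption as the paper, namely that a supergradient exists at every $\F$.
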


Finally, we focus on monotone pricing policies. We use $\F_1 \distge \F_2$ to denote that $\F_1$ first-order stochastically dominates $\F_2$, i.e., $\F_1(v) \le \F_2(v)$ for all $v \in \R$.

\begin{definition}[Monotone Pricing Policies]
A functional $p:\Delta(\R) \to \R$ is monotone if $p(\F_1) \ge p(\F_2)$ whenever $\F_1 \distge \F_2$.
\label{def: monotone}
\end{definition}    
In this work, we aim to characterize the performance of different monotone hidden pricing mechanisms and design the optimal monotone hidden pricing mechanism.

\subsection{Preliminaries: $\alpha$-regular distributions}
\label{sec:prelim}
In this work, except for the special case of uniform distributions, we assume that the prior distribution $\F$ belongs to the set of $\alpha$-regular distributions $\cF_\alpha$, a widely studied subclass of distributions in the pricing and mechanism design literature. For an arbitrary cumulative distribution function $\F$, we use $f$ to denote its probability density function. 
\begin{definition}
The set of $\alpha$-regular distributions is defined as follows:
\begin{equation}
    \cF_\alpha := \left\{\F \mid \varphi_{\alpha,\F}(v):=(1-\alpha)v - \frac{1-\F(v)}{f(v)} \text{ is non-decreasing} \right\}~.
    \label{def:alpha regular set}
\end{equation}
We refer to $\varphi_{\alpha,\F}: \R \to \R$ as the $\alpha$-virtual value function of the distribution $\F$.
\end{definition}
When $\alpha=1$, the set $\cF_1$ represents the class of MHR distribution, and when $\alpha=0$, the set $\cF_0$ represents the class of regular distribution. 
Next, we introduce an important auxiliary function that is tied to the definition of $\alpha$-regular distributions.

\begin{definition}
For every $\alpha \in [0,1]$, define the function $\cE_\alpha : \R \to [0,1]$ by
\[
\forall v, \quad \cE_\alpha(v) :=
\begin{cases}
\of{1+(1-\alpha)v}^{-1/(1-\alpha)}, & \text{if } \alpha\in[0,1)~;\\
e^{-v}, & \text{if } \alpha = 1~.
\end{cases}
\]
\end{definition}
The function $\cE_\alpha$ is the survival function of a generalized Pareto distribution, which corresponds to the extreme $\alpha$-regular distribution in the sense that its $\alpha$-virtual value is constant: for the distribution $\F$ with survival function $\cE_\alpha$ (i.e., $\F(v) = 1-\cE_\alpha(v)$, $\forall v$), we have $\varphi_{\alpha,\F}(v)=-1$ for all $v$. 

We summarize two standard properties of $\alpha$-regular distributions that will be used throughout our analysis.
The first lemma provides an alternative characterization of $\alpha$-regularity.
The second lemma establishes a lower bound on the optimal selling probability for $\alpha$-regular distributions.

\begin{lemma}[\cite{schweizer2015quantitative}]
A distribution $\F$ is $\alpha$-regular if and only if the function $v \to \cE_\alpha^{-1}(1-\F(v))$ is convex in $v$.
    \label{lem:cE_convex}
\end{lemma}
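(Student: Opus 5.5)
The plan is to reduce $\alpha$-regularity to a statement about a single real function and then compute its second derivative. Let $G(v) := \cE_\alpha^{-1}(1-\F(v))$, defined on the support of $\F$ where $1-\F(v)\in(0,1]$. Write $S(v) = 1-\F(v)$ for the survival function, so $S' = -f$. The inverse of $\cE_\alpha$ is explicit:
\[
\cE_\alpha^{-1}(y) = \frac{y^{-(1-\alpha)}-1}{1-\alpha} \ \ (\alpha<1), \qquad \cE_\alpha^{-1}(y) = -\ln y \ \ (\alpha=1).
\]
Hence $G(v) = \frac{S(v)^{-(1-\alpha)}-1}{1-\alpha}$ for $\alpha<1$ and $G(v) = -\ln S(v)$ for $\alpha=1$. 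Since $\cE_\alpha^{-1}$ is decreasing and $S$ is non-increasing, $G$ is non-decreasing.

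Next I differentiate. For every $\alpha\in[0,1]$ the chain rule gives $G'(v) = f(v)\,S(v)^{-(2-\alpha)}$; the case $\alpha=1$ gives $f/S$, which matches. Convexity of $G$ is therefore equivalent to $v\mapsto f(v)S(v)^{\alpha-2}$ being non-decreasing. To link this with $\varphi_{\alpha,\F}$, I use the reciprocal $1/G'(v) = S(v)^{2-\alpha}/f(v)$. Differentiating, wherever $f$ is differentiable,
\[
\frac{d}{dv}\frac{S^{2-\alpha}}{f} = S^{1-\alpha}\left(-(2-\alpha) - \frac{S f'}{f^2}\right).
\]
Compare this with
\[
\varphi_{\alpha,\F}'(v) = (1-\alpha) - \frac{d}{dv}\frac{S}{f} = (1-\alpha) + 1 + \frac{Sf'}{f^2} = (2-\alpha) + \frac{S f'}{f^2}.
\]
So $\frac{d}{dv}(1/G') = -S^{1-\alpha}\,\varphi_{\alpha,\F}'$, and since $S^{1-\alpha}>0$ on the support, $\varphi_{\alpha,\F}$ is non-decreasing iff $1/G'$ is non-increasing. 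Because $G'>0$ where $f>0$, this holds iff $G'$ is non-decreasing, i.e., iff $G$ is convex. A cleaner way to write the identity is $\varphi_{\alpha,\F}(v) = (1-\alpha)v - S(v)^{\alpha-1}\cdot\frac{1}{G'(v)}$. This form makes the $\alpha=1$ case ($\varphi_{1,\F} = -S/f = -1/G'$) immediate, and it suggests an argument that avoids second derivatives.

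The main obstacle is regularity: the definition presumes a density, which need not be differentiable, and the support has endpoints where $S$ may vanish or jump. I would work throughout with monotonicity rather than second derivatives. I would show that $\varphi_{\alpha,\F}$ non-decreasing is equivalent to $S^{2-\alpha}/f$ non-increasing, using absolute continuity of $S^{2-\alpha}/f$ along monotone functions, or by approximating $f$ by smooth densities and passing to the limit. Then I invoke the standard fact that a function with a non-decreasing (right-)derivative is convex. Outside the support, I adopt the usual convention that $G=+\infty$ beyond the upper endpoint, and I treat the lower end by extending $G$ linearly or by $0$. With that convention, convexity on the support interval is the whole content of the lemma. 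The reverse direction retraces the same equivalences.
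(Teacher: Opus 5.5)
The paper gives no proof of this lemma. It is quoted from Schweizer and Szech (2015) and used as a black box. Only the direction ``$\alpha$-regular $\Rightarrow$ $v\mapsto\cE_\alpha^{-1}(1-\F(v))$ convex'' is invoked, in Lemmas~\ref{lem:small} and~\ref{lem:large}, where the slope $f(v_0)q(v_0)^{\alpha-2}$ is exactly your $G'$. So your argument is a self-contained substitute for a citation rather than an alternative to an existing proof.

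Your calculus is correct. You have $G'=fS^{\alpha-2}$ for every $\alpha\in[0,1]$, $\frac{d}{dv}\bigl(S^{2-\alpha}/f\bigr)=-S^{1-\alpha}\varphi_{\alpha,\F}'$, and $\varphi_{\alpha,\F}=(1-\alpha)v-S^{\alpha-1}/G'$. This settles the lemma for smooth densities that are positive on the support.

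The only soft spot is the passage to general densities, and the tool you name there is not right as stated. $S^{2-\alpha}/f$ need not be absolutely continuous, because $\alpha$-regularity allows $\varphi_{\alpha,\F}$ to jump. For example, take an MHR distribution whose hazard rate jumps from $1$ to $2$; then $\varphi_{1,\F}=-1/G'$ jumps from $-1$ to $-1/2$. So $1/G'$ is only locally of bounded variation, and you would need the Lebesgue--Stieltjes identity $d\bigl(S^{1-\alpha}\cdot S/f\bigr)=-S^{1-\alpha}\,d\varphi_{\alpha,\F}$. Smoothing $f$ is also not free, since it is unclear that the approximants stay $\alpha$-regular.

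The cleanest repair is the first-order identity you already wrote down, which avoids differentiating $f$ in either direction.
\begin{itemize}
\item For $\alpha=1$ it reads $\varphi_{1,\F}=-1/G'$, and nothing remains to prove.
\item For $\alpha<1$, set $T:=S^{\alpha-1}=1+(1-\alpha)G>0$. Then $\varphi_{\alpha,\F}=(1-\alpha)z$, where $z(v):=v-T(v)/T'(v)$ is the root of the tangent line to $T$ at $v$.
\item If $T$ is convex, take $T'$ to be the right derivative (i.e.\ the version $f=-S'_+$). For $v_1<v_2$, the subgradient inequality $T(v_2)\le T(v_1)+T'(v_2)(v_2-v_1)$ gives $z(v_2)-z(v_1)\ge T(v_1)\bigl(1/T'(v_1)-1/T'(v_2)\bigr)\ge 0$.
\item Conversely, suppose $z$ is non-decreasing. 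Integrating $(\ln T)'=1/(u-z(u))\ge 1/(u-z(v_1))$ over $[v_1,v_2]$ gives $T(v_2)/T(v_1)\ge (v_2-z(v_1))/(v_1-z(v_1))$. Hence $T'(v_2)/T'(v_1)\ge (v_2-z(v_1))/(v_2-z(v_2))\ge 1$, so $T'$ is non-decreasing and $T$, equivalently $G$, is convex.
\end{itemize}

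Your endpoint convention ($G=+\infty$ above the support) is the right one for how the paper uses the lemma. The truncated distributions $\sF$ and point masses carry an atom at the top of their support, and this convention keeps them covered.
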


\begin{lemma}[\cite{schweizer2015quantitative}]
\label{lem:selling_prob}
For every $\alpha \in (0,1]$ and every $\F \in \cF_\alpha$,
\[
1-\F(p^*(\F)) \ge \alpha^{1/(1-\alpha)}~.
\]
When $\alpha = 1$, the expression on the right-hand side is interpreted in the limiting sense as $1/e$.
\end{lemma}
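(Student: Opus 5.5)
Lemma~\ref{lem:selling_prob} states that for $\F\in\cF_\alpha$ the optimal posted price $p^*(\F)$ sells with probability at least $\alpha^{1/(1-\alpha)}$. The plan is to pass to the convex reparametrization of Lemma~\ref{lem:cE_convex}, $g(v):=\cE_\alpha^{-1}(1-\F(v))$, so that $1-\F(v)=\cE_\alpha(g(v))$ with $g$ convex and nondecreasing. Revenue then reads $R(v)=v\,\cE_\alpha(g(v))$, and we need $g(p^*)\le \cE_\alpha^{-1}(\alpha^{1/(1-\alpha)})$. Note that $\cE_\alpha(x)=\alpha^{1/(1-\alpha)}$ exactly when $1+(1-\alpha)x=1/\alpha$, i.e.\ $x=1/\alpha$. (For $\alpha=1$ this is $x=1$, consistent with $e^{-1}$.) So the target is $g(p^*)\le 1/\alpha$.

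\textbf{Step 1 (first-order condition).} At an interior maximizer, where $g$ is differentiable or using one-sided derivatives,
\[
\frac{d}{dv}\log R(v)=\frac1v+\frac{\cE_\alpha'(g)}{\cE_\alpha(g)}\,g'(v)=\frac1v-\frac{g'(v)}{1+(1-\alpha)g(v)}.
\]
Optimality of $p^*$ gives, via the left derivative, $p^*\,g'_-(p^*)\le 1+(1-\alpha)g(p^*)$.

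\textbf{Step 2 (convexity).} Since $g$ is convex and $g\ge 0$ on the support (taking $g=0$ at the bottom of the support, where the survival probability is $1$), the tangent-line inequality at $p^*$ evaluated at $0$ gives
\[
g(p^*)\le g'_-(p^*)\,p^*+g(0)\le p^*g'_-(p^*)+g(0).
\]
The issue is that $g(0)$ may be negative, corresponding to $g$ being undefined below the support. To handle this, extend $g$ by $0$ below the lower support point $a$; this keeps $g$ convex and nondecreasing, because $g\ge 0$ and $g$ is nondecreasing. Then $g(0)=0$. This yields $g(p^*)\le p^*g'_-(p^*)\le 1+(1-\alpha)g(p^*)$, hence $\alpha\, g(p^*)\le 1$, which is the claim.

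\textbf{Step 3 (edge cases).} If $p^*$ sits at a kink or at the support endpoint $a$, either $g(p^*)=0$, giving probability $1$, or the one-sided derivative argument still applies, since optimality only requires $R$ to be nonincreasing to the left, which gives the inequality on $g'_-$. For $\alpha=1$, the same computation with $\cE_1=e^{-x}$ gives $p^*g'_-\le 1$ and therefore $g(p^*)\le 1$. For $\alpha=0$ the bound degenerates.

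The main obstacle is Step 2's extension and the nondifferentiability bookkeeping. I would first handle smooth, strictly increasing $g$ and then argue by one-sided derivatives, which exist everywhere for convex functions.
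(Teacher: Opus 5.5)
Your argument is correct. The paper gives no proof of this lemma; it cites \cite{schweizer2015quantitative}, so your proposal supplies a proof the paper leaves out.

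\textbf{Comparison with the standard argument.} The usual proof in this literature works directly with the virtual value. $\alpha$-regularity says $\varphi_{0,\F}(v)-\alpha v$ is nondecreasing. Since $\varphi_{0,\F}(p^*)=0$, one gets $\varphi_{0,\F}(v)\le-\alpha(p^*-v)$ for $v\le p^*$. This is the hazard-rate bound $f(v)/(1-\F(v))\le 1/((1-\alpha)v+\alpha p^*)$. Integrating it over $[0,p^*]$ gives $-\ln(1-\F(p^*))\le\frac{1}{1-\alpha}\ln(1/\alpha)$.

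Your route replaces this pointwise-then-integrate argument with a single tangent-line inequality for the convex transform $g$ of Lemma~\ref{lem:cE_convex}, combined with the one-sided first-order condition at $p^*$. The two are the same inequality, in integrated versus differential form. Yours has two advantages:
\begin{itemize}
\item it never needs a density, so kinks and an atom at the top of the support are handled by one-sided derivatives;
\item it matches the paper's own toolkit: $g(p^*)\le p^*g'_-(p^*)$ says the tangent distribution at $p^*$, as built in the proof of Lemma~\ref{lem:large}, has a nonnegative shift $b$.
\end{itemize}

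\textbf{A correction to Step 2.} $g$ can never be negative, because $\cE^{-1}$ maps $(0,1]$ into $[0,\infty)$. For nonnegative valuations, $g(0)=0$ automatically. The extension step matters only if you read Lemma~\ref{lem:cE_convex} as giving convexity on the support alone. In that case gluing the zero function on the left does preserve convexity, for two reasons:
\begin{itemize}
\item $g$ vanishes at the lower endpoint of the support. An atom there would make $g$ jump up immediately to its right, contradicting convexity on the support.
\item $g$ is nondecreasing, so its right derivative at the lower endpoint is nonnegative.
\end{itemize}
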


From now on and throughout the paper, we fix an $\alpha \in [0,1]$ and omit the subscription $\alpha$ for notation simplicity.
We further define the following minimal and maximal $\alpha$-regular distributions, that are crucial in our analysis:
\begin{equation}
\sF(v) := 
\begin{cases}  1-\cE(\lambda v) &  v\in [0,a) \\
1 &  v \in [a,\infty)  
\end{cases} 
\quad \text{and} \quad
\lF(v) := 
\begin{cases}
0 &  v\in [0,b) \\
    1-\cE(\lambda(v-b)) ,\,  &  v\in [b,\infty)
\end{cases}
~.
\label{eq:worstF}   
\end{equation}
Observe that $\sF$ is obtained by scaling the generalized Pareto distribution defined above, by a factor of $\lambda$ and then truncating it at $a$. Similarly, $\lF$ is obtained by first shifting the generalized Pareto distribution by $b$ and then scaling it by $\lambda$.
We calculate the optimal posted prices under distributions $\sF$ and $\lF$.

\begin{lemma}
\label{lem:opt_price}
The optimal prices for $\sF$ and $\lF$ are given as follows:
\begin{itemize}
\item For $\sF$: if $\alpha \lambda a \le 1$, then $p^*(\sF) = a$; otherwise $p^*(\sF) = \frac{1}{\alpha \lambda}$~;
\item For $\lF$: if $\lambda b \ge 1$, then $p^*(\lF) = b$; otherwise $p^*(\lF) = \frac{1-(1-\alpha)\lambda b}{\alpha \lambda}$~.
\end{itemize}
\end{lemma}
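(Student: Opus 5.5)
We compute the revenue curve $R(v)=v\,(1-\F(v))$ directly for each distribution and maximize it on the relevant interval. The one fact about $\cE$ we use is the derivative identity, valid for all $\alpha\in[0,1]$ (for $\alpha=1$ it reads $\cE'=-\cE$):
\[
\cE'(u) = -\cE(u)\,\bigl(1+(1-\alpha)u\bigr)^{-1}.
\]
Consequently, for $g(v)=v\,\cE(\lambda v+c)$ we get $g'(v)=\cE(\lambda v+c)\bigl[1-\lambda v/(1+(1-\alpha)(\lambda v+c))\bigr]$. The sign of $g'$ is therefore that of the affine function $1+(1-\alpha)c-\alpha\lambda v$, after multiplying through by the positive factor $1+(1-\alpha)(\lambda v+c)$. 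So $g$ is unimodal, with stationary point $v^\circ=(1+(1-\alpha)c)/(\alpha\lambda)$; when $\alpha=0$, $g$ is nondecreasing.

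For $\sF$, take $c=0$. On $[0,a)$ we have $R(v)=v\,\cE(\lambda v)$, and at $v=a$ the value is sold with probability $\P[s\ge a]=\cE(\lambda a)$ because of the atom, as in the paper's footnote convention. Hence $R(a)=a\,\cE(\lambda a)$ continues the same formula, and for $v>a$ we have $R(v)=0$. We are thus maximizing $v\,\cE(\lambda v)$ over $[0,a]$, whose unconstrained maximizer is $1/(\alpha\lambda)$. If $\alpha\lambda a\le 1$, the function is increasing on $[0,a]$ and the optimum is $a$. Otherwise the interior point $1/(\alpha\lambda)<a$ is the maximizer.

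For $\lF$, the distribution places no mass below $b$. For $v\le b$ we get $R(v)=v$, which is increasing, so the optimum lies in $[b,\infty)$. There $R(v)=v\,\cE(\lambda(v-b))=v\,\cE(\lambda v-\lambda b)$, which is $g$ with $c=-\lambda b$. The stationary point is
\[
v^\circ=\frac{1-(1-\alpha)\lambda b}{\alpha\lambda},
\]
and $R'$ has the sign of $1-(1-\alpha)\lambda b-\alpha\lambda v$. Checking when $v^\circ\le b$, we find $1-(1-\alpha)\lambda b\le \alpha\lambda b$, which is equivalent to $\lambda b\ge 1$. In that case $R$ is decreasing on $[b,\infty)$ and the optimum is $b$; otherwise the optimum is $v^\circ$.

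The main point needing care is the sign reduction of $R'$. One must confirm that $1+(1-\alpha)(\lambda v+c)>0$ on the support, so the multiplication preserves sign. For $\lF$ this holds because $\lambda v+c=\lambda(v-b)\ge 0$. The limiting cases $\alpha=1$ and $\alpha=0$ also need to be handled. For $\alpha=0$ the condition $\alpha\lambda a\le1$ always holds, and $\lambda b<1$ would give an infinite "price"; this matches the fact that $\lF$ then has unbounded revenue, so the formulas should be read with $1/0=\infty$.
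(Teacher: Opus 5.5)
Your argument is correct and is essentially the paper's proof: differentiate the revenue curve, reduce the sign of the derivative to the affine expression $1+(1-\alpha)c-\alpha\lambda v$ (the paper's $1-\alpha\lambda v$ and $1-\alpha\lambda v-(1-\alpha)\lambda b$), and compare its root with the truncation point $a$ or the shift $b$; unifying both cases through the shift $c$, and handling the atom at $a$ and the prices below $b$ explicitly, are small gains in care over the paper. Two side remarks are slightly off: for $\alpha=0$ the function $g$ is nondecreasing only when $1+c\ge 0$ (your own sign computation correctly makes the revenue of $\widehat{\F}_{\lambda,b}$ decreasing when $\lambda b>1$), and when $\alpha=0$ and $\lambda b<1$ the revenue $v/(\lambda v+1-\lambda b)$ increases to the finite limit $1/\lambda$ without attaining it, so no optimal price exists but the revenue is bounded (it is the mean that is infinite, not the revenue).
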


According to the above lemma, we denote $\scF$, $\lcF$ by the set of minimal and maximal $\alpha$-regular distributions whose optimal posted prices are $a$ and $b$ respectively:
\begin{align}
\scF := \left\{ \sF \mid a,\lambda \ge 0, \, \alpha \lambda a \le 1 \right\} \quad \text{and} \quad 
\lcF := \left\{ \lF \mid b,\lambda \ge 0, \, \lambda b\ge 1 \right\}~.
\label{eq:worstF-set}
\end{align}

Recall that for every regular distribution $\F$, the revenue function $\Rev(v,\F)$ is unimodal. Accordingly, we define the critical price interval to be the set of prices that are $\Gamma$-approximate.

\begin{definition}
\label{def: LBUB}
For every $\F \in \cF$ and $\Gamma \in [0,1]$, we define the following two quantities:
\[
\mathsf{LB}(\F,\Gamma) := \inf \{p \mid \Rev(p,\F) \ge \Gamma \cdot \opt(\F) \} \text{ and } \mathsf{UB}(\F,\Gamma):= \sup \{p \mid \Rev(p,\F) \ge \Gamma \cdot \opt(\F) \}~.
\]
\end{definition}
\begin{lemma}
\label{lem:lb-ub}
For every $\F \in \cF$, and $p \in \R$,
\[
\Rev(p,\F) \ge \Gamma \cdot \opt(\F) \iff p \in [\mathsf{LB}(\F,\Gamma), \mathsf{UB}(\F,\Gamma)]~.
\]
\end{lemma}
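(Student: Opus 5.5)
The plan is to show that the set $S:=\{p \mid \Rev(p,\F)\ge \Gamma\cdot\opt(\F)\}$ is an interval. By definition, $\mathsf{LB}(\F,\Gamma)=\inf S$ and $\mathsf{UB}(\F,\Gamma)=\sup S$. The claim then reduces to two things: $S$ is convex, and $S$ contains its endpoints.

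\textbf{Step 1: unimodality.} Write $\Rev(p,\F)=p\,(1-\F(p))$. For $\F\in\cF_\alpha$ with $\alpha\ge 0$, $\F$ is regular, because the $\alpha$-virtual value nondecreasing implies the ordinary virtual value $v-\frac{1-\F(v)}{f(v)}$ is nondecreasing (we add $\alpha v$, which is nondecreasing). Pass to quantile space, $q=1-\F(p)$, and set $R(q)=q\,\F^{-1}(1-q)$. Regularity is equivalent to concavity of $R$, since $R'(q)$ equals the virtual value at $\F^{-1}(1-q)$, which is nonincreasing in $q$. A concave $R$ has convex superlevel sets $\{q \mid R(q)\ge \Gamma\opt\}$. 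Because $p\mapsto q$ is monotone, the preimage is convex in price space as well.

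\textbf{Step 2: point masses and flat parts.} Points where $\F$ has atoms or gaps in its support need care. Where $\F$ is flat, the price map is non-injective; raising $p$ across a gap in the support increases revenue linearly until the next support point. Rather than handle each case separately, I will argue directly in price space. Take $p_1<p_2<p_3$ with $p_1,p_3\in S$. If $\Rev(p_2)<\Gamma\opt$, then $p_2$ is a strict interior local dip of revenue. I will derive a contradiction with concavity of $R$ using the quantiles $q_i=1-\F(p_i)$, which are nonincreasing in $i$. If $q_2=q_3$, then $\Rev(p_2)=p_2 q_3\ge$\ldots; this direction fails, so instead I will use $q_1=q_2$. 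Either $q_2=q_1$, in which case $\Rev(p_2)=p_2q_1\ge p_1q_1\ge\Gamma\opt$, or $q_2<q_1$. In the latter case $p_2$ lies in the support, so $R(q_2)=\Rev(p_2)$, while $R(q_1)\ge \Rev(p_1)$ and $R(q_3)\ge\Rev(p_3)$. Here $R$ is the concave revenue curve (ironing is unnecessary since $\F$ is regular), and $q_3\le q_2\le q_1$, so concavity gives $R(q_2)\ge\min(R(q_1),R(q_3))\ge\Gamma\opt$.

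\textbf{Step 3: closedness at the endpoints.} Under the paper's convention, $\Rev(p)=p\,\P[v\ge p]$ is left-continuous in $p$ and upper semicontinuous, because $\P[v\ge p]$ is left-continuous and nonincreasing. So $S$ is closed from the left-continuity side. For the lower endpoint I will use that $\alpha$-regular distributions, by Lemma~\ref{lem:cE_convex}, have continuous $\F$ except possibly an atom at the top of the support, where $\cE^{-1}(1-\F)$ jumps to $\infty$. Hence $\Rev$ is continuous except at the supremum of the support, and there left-continuity suffices for $\mathsf{UB}$. This gives $\mathsf{LB},\mathsf{UB}\in S$ whenever $S\ne\emptyset$, and $S$ is nonempty since $p^*(\F)\in S$.

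\textbf{Main obstacle.} The delicate part is Step 2: making the revenue-curve concavity argument rigorous for distributions with a lower support endpoint $a>0$ or a truncation atom, as in $\sF$. I will first try the quantile argument above. If that fails, I will fall back on Lemma~\ref{lem:cE_convex} together with the explicit shape of revenue, which is increasing on $[0,\inf\supp\F]$ and quasi-concave on the support.
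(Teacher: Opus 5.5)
Your proposal is correct in substance. It rests on the same fact as the paper, namely that $\varphi_{0,\F}=\varphi_{\alpha,\F}+\alpha v$ is nondecreasing, but it uses that fact differently. The paper stays in price space. It writes $\Rev'(v)=-f(v)\,\varphi_{0,\F}(v)$, concludes that revenue increases on $[0,p^*(\F)]$ and decreases afterwards, and reads off $[\mathsf{LB}(\F,\Gamma),\mathsf{UB}(\F,\Gamma)]$ ``by definition.'' You instead pass to concavity of the quantile-space revenue curve $R$. You then use a three-point argument in price space to cope with the non-injective map $p\mapsto\P[v\ge p]$. Finally, upper semicontinuity of $p\mapsto p\,\P[v\ge p]$ shows the superlevel set $S$ is closed, so $\mathsf{LB},\mathsf{UB}\in S$. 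The paper's argument is shorter. Yours handles what the paper glosses over: the region below the support where $f=0$ (as for $\lF$), the truncation atom of $\sF$ where no density exists, and the fact that the infimum and supremum are actually attained.

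Four changes would make your argument airtight.
\begin{itemize}
\item Delete the last sentence of Step 1, and the abandoned $q_2=q_3$ line. $S$ is not literally the preimage of a superlevel set of $R$, because $\Rev(p)\neq R(\P[v\ge p])$ below the support. Step 2 is the real argument.
\item Fix the convention $R(q)=q\cdot\sup\{p:\P[v\ge p]\ge q\}$. This is what makes $R(q_i)\ge\Rev(p_i)$ true.
\item Justify ``$p_2$ lies in the support'' explicitly, using Lemma~\ref{lem:cE_convex}. A convex nondecreasing $v\mapsto\cE^{-1}(1-\F(v))$ that is constant on some interval must be constant on everything to its left. So, below the top $\bar v$ of its support, $\F$ can be flat only on an initial segment $[0,\underline v]$, and it is strictly increasing on $(\underline v,\bar v)$. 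Assume $\Gamma\cdot\opt(\F)>0$; otherwise the claim is trivial. Then $p_1>0$, and $q_2<q_1$ puts mass of $\F$ in $[p_1,p_2)$, which forces $p_2>\underline v$. Also $\Rev(p_3)>0$ forces $p_3\le\bar v$. Hence $p_2\in(\underline v,\bar v)$, and $R(q_2)=\Rev(p_2)$ follows.
\item Upper semicontinuity alone already makes $S$ closed, so the continuity discussion in Step 3 is unnecessary.
\end{itemize}
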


Finally, we prove that $\mathsf{LB}$ is Lipschitz continuous in $a$ over distributions in $\scF$, for any fixed $\Gamma$.
\begin{lemma}
\label{lem:lb-lipschitz}
For every $\sF, \sF[a'] \in \scF$ and $\Gamma \in [0,1]$, $|\mathsf{LB}(\sF,\Gamma) - \mathsf{LB}(\sF[a'],\Gamma)| \le |a-a'|$.
\end{lemma}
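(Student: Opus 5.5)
The plan is to compute $\mathsf{LB}(\sF,\Gamma)$ explicitly as the root of a one-dimensional equation, and then reduce the Lipschitz bound to log-concavity of the revenue curve on the relevant range. Fix $\lambda$ and write $g(p) := p\,\cE(\lambda p)$.

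\textbf{Step 1: reduce to an equation for $\mathsf{LB}$.} For $p \le a$, the revenue of price $p$ under $\sF$ is $\Rev(p,\sF) = g(p)$; this uses the convention $\P[v \ge p]$, which accounts for the atom at $a$. For $p > a$ the revenue is $0$. Since $\alpha\lambda a \le 1$, Lemma~\ref{lem:opt_price} gives $\opt(\sF) = g(a)$. A direct derivative computation gives $g'(p) = \cE(\lambda p)\,\frac{1-\alpha\lambda p}{1+(1-\alpha)\lambda p}$, so $g$ is strictly increasing and continuous on $[0, 1/(\alpha\lambda)]$. Hence $\ell(a) := \mathsf{LB}(\sF,\Gamma)$ is the unique $\ell \in [0,a]$ with $g(\ell) = \Gamma\, g(a)$. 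The case $\Gamma = 0$ gives $\ell = 0$ and is trivial.

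\textbf{Step 2: monotonicity.} Assume without loss of generality $a < a'$ and set $\delta = a' - a$; both satisfy $\alpha\lambda a' \le 1$. Since $g$ is increasing, $g(a') \ge g(a)$, so $\ell(a') \ge \ell(a)$. It therefore remains to show $\ell(a') \le \ell(a) + \delta$. Because $g$ is increasing and $\ell(a)+\delta \le a'$, this is equivalent to
\[
g(\ell(a)+\delta) \;\ge\; \Gamma\, g(a+\delta) \;=\; \frac{g(\ell(a))}{g(a)}\, g(a+\delta).
\]
This in turn says $\log g(x+\delta) - \log g(x)$ is nonincreasing in $x$ between $x = \ell(a)$ and $x = a$.

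\textbf{Step 3: log-concavity (the main check).} It suffices to show $\log g$ is concave on $(0, 1/(\alpha\lambda)]$; the case $\alpha = 0$ gives the whole half-line. We have
\[
(\log g)'(p) = \frac1p - \frac{\lambda}{1+(1-\alpha)\lambda p},
\qquad
(\log g)''(p) = -\frac1{p^2} + \frac{(1-\alpha)\lambda^2}{(1+(1-\alpha)\lambda p)^2}.
\]
With $t = \lambda p$ and $c = \sqrt{1-\alpha}$, nonpositivity of $(\log g)''$ is equivalent to $c\,t \le 1 + c^2 t$, that is, $c(1-c)\,t \le 1$. On the range $t \le 1/\alpha = 1/((1-c)(1+c))$ we get $c(1-c)\,t \le c/(1+c) < 1$, so the inequality holds. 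For $\alpha = 1$ the second derivative is simply $-1/p^2$.

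Concavity of $\log g$ means $(\log g)'$ is nonincreasing. Hence the increment $\int_x^{x+\delta} (\log g)'$ is nonincreasing in $x$, which is exactly what Step 2 requires. This finishes the proof.

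The only delicate points are verifying the range restriction in Step 3, which is where the hypothesis $\alpha\lambda a' \le 1$ is essential, and handling the atom and infimum conventions in Step 1.
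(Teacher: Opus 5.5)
Your proof is correct and follows essentially the same route as the paper: you establish log-concavity of $v \mapsto v\,\cE(\lambda v)$ on $\{\alpha\lambda v \le 1\}$, then use the shifted-increment inequality to show that the price $\mathsf{LB}(\sF,\Gamma) + (a'-a)$ already earns $\Gamma\cdot\opt(\sF[a'])$. Your Step~2 also spells out the monotonicity direction $\mathsf{LB}(\sF[a'],\Gamma) \ge \mathsf{LB}(\sF,\Gamma)$, which the paper leaves implicit because its later application only needs the upper direction.
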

Building on the definition and structural properties of the minimal and maximal $\alpha$-regular distributions, the next section develops a reduction of nature’s inner minimization problem—originally infinite-dimensional and intractable—into a substantially simpler optimization problem.

\section{Worst-case Distributions for Monotone Pricing Policies}
\label{sec:monotone}
Determining the performance of a pricing policy $p$ amounts to identifying the worst-case distribution that minimizes its performance ratio
\[
\min_{\F \in \cF} \frac{\Rev(p,\F)}{\OPT(\F)}~.
\]
Evaluating this approximation ratio is generally challenging: it involves an infinite-dimensional, non-convex optimization. Our main contribution is to reduce this problem to a more tractable form. In particular, we establish that for monotone pricing policies as defined in \Cref{def: monotone}, the search for a worst-case distribution over the broad class of $\alpha$-regular distributions can be restricted to a two-parameter parametric family, $\scF \cup \lcF$, as defined in \eqref{eq:worstF-set}.

\begin{theorem}
For any monotone pricing policy $p: \Delta(\R) \to \R$,
    \[
    \min_{\F\in \cF} \frac{\Rev(p,\F)}{\OPT(\F)}=\min_{\F\in \scF \cup \lcF} \frac{\Rev(p,\F)}{\OPT(\F)}~.
    \]
\label{thm:monotone}
\end{theorem}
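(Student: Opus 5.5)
Since $\scF \cup \lcF \subseteq \cF$, the inequality ``$\le$'' is immediate. For ``$\ge$'', fix $\F \in \cF$ with $\Gamma := \Rev(p,\F)/\OPT(\F)$, and let $p^* = p^*(\F)$ and $q^* = 1-\F(p^*)$. We will exhibit a distribution in $\scF \cup \lcF$ whose ratio is at most $\Gamma$. By \Cref{lem:lb-ub}, $p(\F) \notin (\mathsf{LB}(\F,\Gamma'), \mathsf{UB}(\F,\Gamma'))$ for any $\Gamma' > \Gamma$. We split into two cases according to whether the seller overprices ($p(\F) \ge p^*$) or underprices ($p(\F) < p^*$).

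\emph{Overpricing case.} We construct a maximal distribution $\G = \lF$ that dominates $\F$, so that $p(\G) \ge p(\F)$ by monotonicity, while keeping $\OPT(\G) = \OPT(\F)$ and making the revenue curve of $\G$ at prices above $p^*$ no larger, relative to $\OPT$, than that of $\F$. In quantile space, set $\psi(v) := \cE^{-1}(1-\F(v))$; by \Cref{lem:cE_convex} this is convex. Choose $b = p^*$ and let $\lambda$ be the slope of the tangent line to $\psi$ at $p^*$, taking the right derivative. The first-order condition for optimality of $p^*$ should force $\lambda b \ge 1$ when $p^*$ is interior; if instead $p^*$ lies at a lower support point, then $\lambda b \ge 1$ holds anyway. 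Hence $\G \in \lcF$ with $p^*(\G) = b$ by \Cref{lem:opt_price}.

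Matching only the point $p^*$ is not enough: we also need the survival probabilities to agree there, i.e. $\cE(\lambda(p^* - b)) = 1 = q^*$, and this fails unless $q^* = 1$. So I would instead rescale so that $\G$ has the same value at $p^*$. Concretely, take $\G$ to be the distribution with survival $\cE(\lambda(v-b))$ for $v \ge b$, with $b$ and $\lambda$ chosen so that $\psi_\G$ is the tangent line to $\psi$ at $p^*$. Then $\psi_\G \le \psi$ everywhere by convexity, so $\G$ dominates $\F$ and $1-\G(p^*) = q^*$. For $v \ge p^*$ we have $\Rev(v,\G) \ge \Rev(v,\F)$, which is the wrong direction. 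The fix is the complementary choice: in the overpricing case, use the distribution that is smaller above $p^*$, namely the truncated $\sF$ built from the tangent line at $p^*$ with truncation point $a = p^*$. This $\G$ is dominated by $\F$, and its revenue beyond $p^*$ is zero, so if $p(\G)$ still exceeds $a$, the ratio is $0$. Monotonicity, however, gives only $p(\G) \le p(\F)$, so we need care: if $p(\G) \ge \mathsf{UB}(\G,\Gamma)$ we are done. Otherwise we interpolate along a path between $\G$ and a dominating member, using the Lipschitz continuity in \Cref{lem:lb-lipschitz} and an intermediate-value argument.

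\emph{Underpricing case.} Symmetrically, use a dominating maximal distribution $\lF$ whose tangent line matches $\psi$ at $p^*$. Its revenue below $p^*$ is lower relative to $\OPT$, since it has no mass below $b$, and monotonicity pushes $p(\G)$ upward. So again a one-dimensional continuity argument within $\lcF$, sliding $b$, locates a member with ratio at most $\Gamma$.

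The main obstacle is exactly this mismatch of directions. Monotonicity moves $p$ the same way as the domination, whereas the revenue degradation we want requires the opposite extreme. Resolving it requires a careful continuity argument over the family: deform $\F$ continuously within the extremal family and use \Cref{lem:lb-ub} together with \Cref{lem:lb-lipschitz} to show that the price exits the critical interval $[\mathsf{LB},\mathsf{UB}]$ on the correct side. I would first attempt the pointwise tangent construction and verify the revenue inequalities, then handle the sliding argument.
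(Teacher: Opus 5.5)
Your proposal has a genuine gap. It pairs each case with the wrong extremal family and the wrong matching point. The ``mismatch of directions'' you identify does not actually exist: monotonicity moving $p$ in the same direction as the stochastic domination is exactly what makes the argument work. Write $\psi=\cE^{-1}(1-\F(\cdot))$.

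\emph{Underpricing.} Here $p(\F)\le p^*(\F)=:v_0$. Use the \emph{dominated} truncated distribution $\sF$ with $a=v_0$ and $\lambda=\psi(v_0)/v_0$. In $\psi$-coordinates this is the chord from the origin to $(v_0,\psi(v_0))$. Every member of $\scF$ has $\psi$ linear through the origin, so a ``tangent line at $p^*$'' cannot define one. Lemma~\ref{lem:selling_prob} gives $\alpha\lambda a\le 1$, convexity of $\psi$ gives $\sF\distle\F$, and $\opt(\sF)=\Rev(v_0,\F)=\opt(\F)$. Monotonicity then gives $p(\sF)\le p(\F)\le v_0$, a range on which $\Rev(\cdot,\F)$ is increasing. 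Hence $\Rev(p(\sF),\sF)\le\Rev(p(\sF),\F)\le\Rev(p(\F),\F)$.

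\emph{Overpricing.} Use the \emph{dominating} $\lF$ whose $\psi$ is the tangent line at the posted price $v_0=p(\F)$, not at $p^*(\F)$. This preserves revenue at $v_0$ while $\opt(\lF)\ge\opt(\F)$. The first-order inequality $1-\F(v_0)\le v_0 f(v_0)$ gives $p^*(\lF)\le v_0\le p(\lF)$. Therefore $\Rev(p(\lF),\lF)\le\Rev(v_0,\lF)=\Rev(v_0,\F)$. Your first attempt failed precisely because matching at $p^*$ preserves $\opt$ but raises revenue at every price above $p^*$.

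The repair you propose cannot be carried out. The policy $p$ is an arbitrary monotone functional with no continuity assumed, so no intermediate-value or ``sliding'' argument along a path of distributions is available. Lemma~\ref{lem:lb-lipschitz} is about $\mathsf{LB}$ as a function of $a$; it says nothing about $p$.

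You are also missing the step that places the overpricing witness inside $\lcF$. For $\alpha>0$, the distribution tangent at $p(\F)$ can have $\lambda b<1$; the first-order condition only guarantees $p^*(\lF)\le p(\F)$. The paper fixes this by conditioning $\lF$ on $v\ge v_1:=p^*(\lF)$, which yields $\llF$ with $\lambda^\dagger b^\dagger=1$. By Claim~\ref{claim1}, all revenues at prices $\ge v_1$ are scaled by the same factor $q_1$, so the ratio at $p(\lF)$ is unchanged. Since $\llF\distge\lF$, monotonicity again pushes the price further into the decreasing part of the revenue curve.
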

The proof of the theorem consists of two parts. In Lemma~\ref{lem:small}, we show that for any distribution $\F$ with $p(\F)$ \emph{below} its optimal price $p^*(\F)$, there exists a distribution $\sF \in \scF$ on which the pricing policy $p$ attains a worse approximation ratio. 
Similarly, in Lemma~\ref{lem:large}, we show that for any distribution $\F$ with $p(\F)$ \emph{above} its optimal price $p^*(\F)$, there exists a distribution $\lF \in \lcF$ on which the approximation ratio of $p$ is smaller. Combining these two lemmas establishes the theorem.

\begin{lemma}
    Let $p:\Delta(\R) \to \R$ be any monotone pricing policy. For every $\F \in \cF$ with $p(\F)\le p^*(\F)$, there exists $\sF \in \scF$ such that $p(\sF) \le p^*(\sF)$ and  
    \[
    \frac{\Rev(p(\sF), \sF)}{\OPT(\sF)}\le \frac{\Rev(p(\F), \F)}{\OPT(\F)}~.
    \]
    \label{lem:small}
\end{lemma}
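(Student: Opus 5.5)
The plan is to take $a := p^*(\F)$ and to choose $\lambda$ so that $\sF$ agrees with $\F$ at the optimal price and is first-order dominated by $\F$.

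\textbf{Setup.} Work in the transformed coordinates of Lemma~\ref{lem:cE_convex}. Let $\phi(v) := \cE^{-1}(1-\F(v))$. This function is convex, nondecreasing, and satisfies $\phi(0)=0$, because values are nonnegative and so the survival probability at $0$ equals $1$. Set
\[
a := p^*(\F), \qquad \lambda := \phi(a)/a .
\]
The line $v\mapsto \lambda v$ is then the chord of $\phi$ from the origin to $(a,\phi(a))$. By convexity, $\phi(v)\le \lambda v$ on $[0,a]$. Since $\cE$ is decreasing, this gives
\[
1-\F(v)\ \ge\ \cE(\lambda v) = 1-\sF(v) \quad \text{for } v\in[0,a).
\]
For $v\ge a$ we have $1-\sF(v)=0$. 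Hence $\F \distge \sF$.

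\textbf{Step 1: $\sF\in\scF$, i.e.\ $\alpha\lambda a\le 1$.} Optimality of $a$ for $\F$ means the left derivative of $v\,\cE(\phi(v))$ at $a$ is nonnegative. Using $\cE'(x) = -\cE(x)/(1+(1-\alpha)x)$, this reads
\[
a\,\phi'_-(a)\le 1+(1-\alpha)\phi(a).
\]
Convexity of $\phi$ together with $\phi(0)=0$ gives $\lambda a=\phi(a)\le a\phi'_-(a)$. Combining the two inequalities yields $\lambda a \le 1+(1-\alpha)\lambda a$, that is, $\alpha\lambda a\le 1$. Lemma~\ref{lem:opt_price} then gives $p^*(\sF)=a$.

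\textbf{Step 2: compare prices and revenues.} Monotonicity of $p$ gives $q := p(\sF) \le p(\F)\le a = p^*(\sF)$, which is the first claim. For the optimal revenues,
\[
\OPT(\sF) = a\,\cE(\lambda a) = a\,(1-\F(a)) = \OPT(\F),
\]
so the denominators coincide. For the numerators, $\Rev(\cdot,\sF)$ is unimodal with peak at $a$, hence nondecreasing on $[0,a]$. Therefore
\[
\Rev(q,\sF)\le \Rev(p(\F),\sF)=p(\F)\,\cE(\lambda p(\F)) \le p(\F)\,(1-\F(p(\F))) = \Rev(p(\F),\F),
\]
where the last inequality is the domination established in the setup. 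This completes the argument.

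\textbf{Main obstacle.} I expect the delicate part to be the technicalities rather than the idea. The issues are point masses (the footnote convention $1-\F(v-)$), a possibly non-differentiable $\phi$ at $a$, and the degenerate case $\phi(a)=0$, where $\lambda=0$ and $\sF$ is a point mass at $a$. In each case I would redo the derivative argument with one-sided derivatives or a limiting argument, and check that the chord bound still holds with left limits.
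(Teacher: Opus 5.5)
Your proof is correct and follows the paper's construction: the same $a=p^*(\F)$ and $\lambda=\cE^{-1}(1-\F(a))/a$, domination $\sF \distle \F$ via the chord from the origin of the convex function $\cE^{-1}(1-\F(v))$, and equal optimal revenues $\OPT(\sF)=\OPT(\F)$. The differences are minor: you derive $\alpha\lambda a\le 1$ from the first-order condition at $p^*(\F)$ instead of citing Lemma~\ref{lem:selling_prob} (your argument effectively reproves it), and you apply unimodality of $\sF$'s revenue curve before domination, whereas the paper applies domination first and then unimodality of $\F$'s revenue curve.
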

\begin{proof}{Proof.}
Let $q(v) = 1-\F(v)$ for all $v \in \R$, and $v_0 = p^*(\F)$. Consider the distribution $\sF$ with parameters
\[
\lambda = \cE^{-1}(q(v_0)) / v_0 \quad \text{and} \quad a = v_0~.
\] 
This is the minimum $\alpha$-regular distribution consistent with the fixed point $(v_0, \P\off{v \ge v_0})$. Refer to \Cref{fig:largeq} for a geometric illustration of our construction and proof, where the revenue curve (revenue as a function of purchase probability) under $\F$ is depicted in a black solid line and the revenue curve under $\sF$ is depicted in a blue dashed line. Geometrically, the revenue curve $R(q)$ of $\sF$ is a linear ray for $q\in[0,q(v_0)]$.

We first verify that $\sF \in \scF$, i.e., $\alpha\lambda a \le 1$. Recall that the optimal sale probability of $\alpha$-regular distributions is at least $\alpha^{1/\of{1-\alpha}}$ by Lemma~\ref{lem:selling_prob}. 
Consequently, $\cE^{-1}(q(v_0)) \le \cE^{-1}(\alpha^{1/\of{1-\alpha}}) = 1/\alpha$. Thus, $\alpha \lambda a = \alpha \cE^{-1}(q(v_0)) \le 1$.

Next, we prove that $\sF \distle \F$. Notice that for every $v > v_0$, we have $1-\sF(v) = 0 \le 1-\F(v)$. Thus, it suffices to study $v \le v_0$:
\begin{align*}
    1-\sF(v) & = \cE(\lambda v) = \cE\of{\cE^{-1}(q(v_0))\cdot \frac{v}{v_0} }  =\cE\of{\cE^{-1}(q(v_0))\cdot \frac{v}{v_0} + \cE^{-1}(q(0))\cdot \frac{v_0-v}{v_0} } \\
    & \le  \cE\of{\cE^{-1}(q(v))} = q(v) = 1-\F(v)~.
\end{align*}
Here, the third equality uses $q(0)=1$ and $\cE^{-1}(q(0))=0$; and the inequality follows from the convexity of $\cE^{-1}(q(v))$ by \Cref{lem:cE_convex}, together with the fact that $\cE$ is non-increasing.
As a consequence, the revenue curve of $\sF$ is always below $\F$, as illustrated in \Cref{fig:largeq}, i.e.,
\[
\Rev(v, \sF) = v \cdot (1-\sF(v)) \le v \cdot (1-\F(v)) = \Rev(v, \F)~, \quad \forall v~.
\]
On the other hand, 
\[
\opt(\F) = \Rev(v_0, \F) = \Rev(v_0, \sF) = \opt(\sF)~.
\]
We conclude the proof by observing that
\[
\frac{\Rev(p(\sF), \sF)}{\OPT(\sF)} = \frac{\Rev(p(\sF), \sF)}{\OPT(\F)} \le \frac{\Rev(p(\sF), \F)}{\OPT(\F)} \le \frac{\Rev(p(\F), \F)}{\OPT(\F)}~,
\]
where the last inequality follows from two facts: 1) the monotonicity of $p$ implies $p(\sF)\le p(\F)$, and 2) the unimodality of the revenue curve of $\F$ implies $\Rev(v,\F)$ is increasing for $v \in [0,v_0]$. \Cref{fig:largeq} provides a geometric illustration of the comparison: $\frac{\Rev(p(\sF), \sF)}{\OPT(\sF)} =\frac{\mathsf{Blue}\circ}{\mathsf{Black}\star} \le   \frac{\mathsf{Black}\circ}{\mathsf{Black}\star} =\frac{\Rev(p(\F), \F)}{\OPT(\F)}$.
\end{proof}
\begin{figure}[htbp]
\caption{Illustration of Lemma~\ref{lem:small}, where $R(q)=q\cdot \F^{-1}(1-q)$ is the revenue as a function of the purchase probability $q\in [0,1]$. We construct a distribution $\sF \distle \F$ that coincides with $\F$ at the optimal price $p^*(\F)$. By the monotonicity of the pricing policy $p$, the resulting revenue under $\sF$ is weakly smaller than that under $\F$, yielding a strictly smaller approximation ratio.}
\centering
\begin{subfigure}[t]{0.45\textwidth}
\includegraphics[width=\textwidth,keepaspectratio]{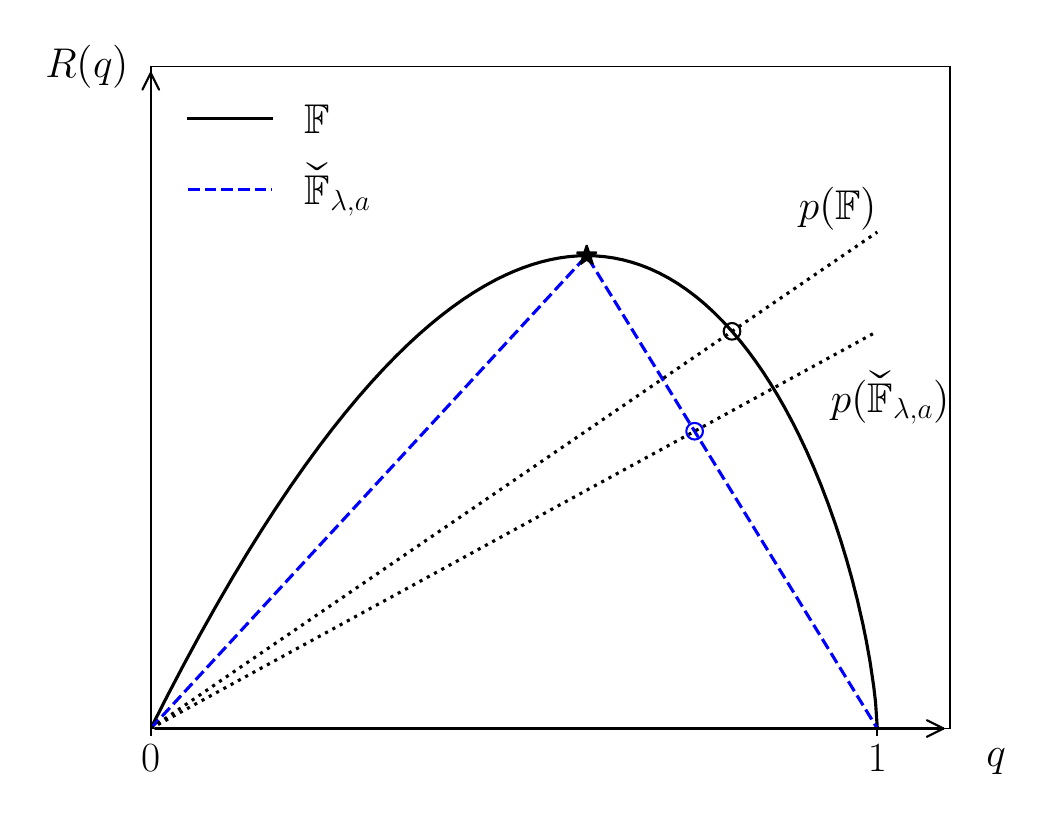}
\vspace{-25pt}
\caption{$\alpha = 0$}
\label{fig:fig_regular_largeq}
\end{subfigure}
\begin{subfigure}[t]{0.45\textwidth}
\includegraphics[width=\textwidth,keepaspectratio]{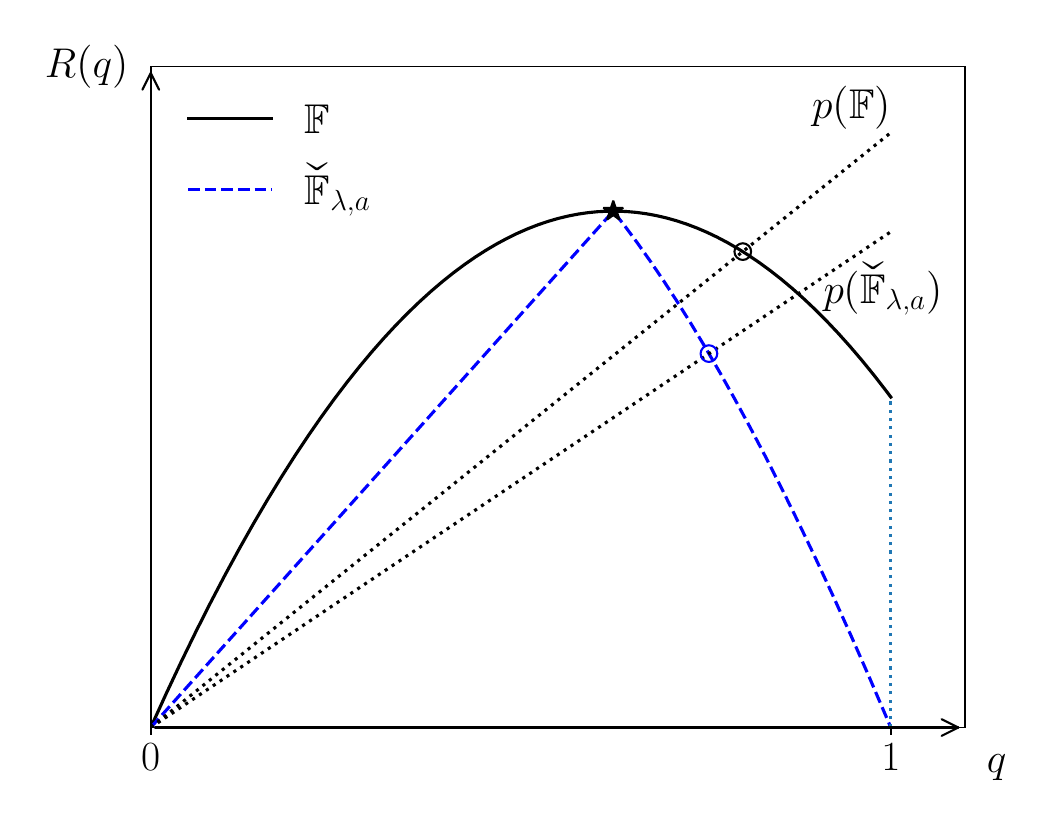}
\vspace{-25pt}
\caption{$\alpha =1$}
\label{fig:fig_mhr_largeq}
\end{subfigure}
\label{fig:largeq}
\end{figure}

\begin{figure}[htbp]
\caption{Illustration of Lemma~\ref{lem:large}. We first construct a distribution $\lF \distge \F$ that agrees with $\F$ at the price $p(\F)$. By the monotonicity of the pricing policy $p$, the resulting revenue under $\lF$ is weakly smaller than that under $\F$, yielding a strictly smaller approximation ratio. However, $\lF$ may not lie in $\lcF$, since its optimal sale probability is not necessarily equal to $1$. This issue arises only when $\alpha > 0$. To address it, we further construct a distribution $\llF$ by truncating values below the optimal monopoly price and rescaling the remaining mass.}
\centering
\captionsetup{justification=centering}
\begin{subfigure}[t]{0.45\textwidth}
\includegraphics[width=\textwidth,keepaspectratio]{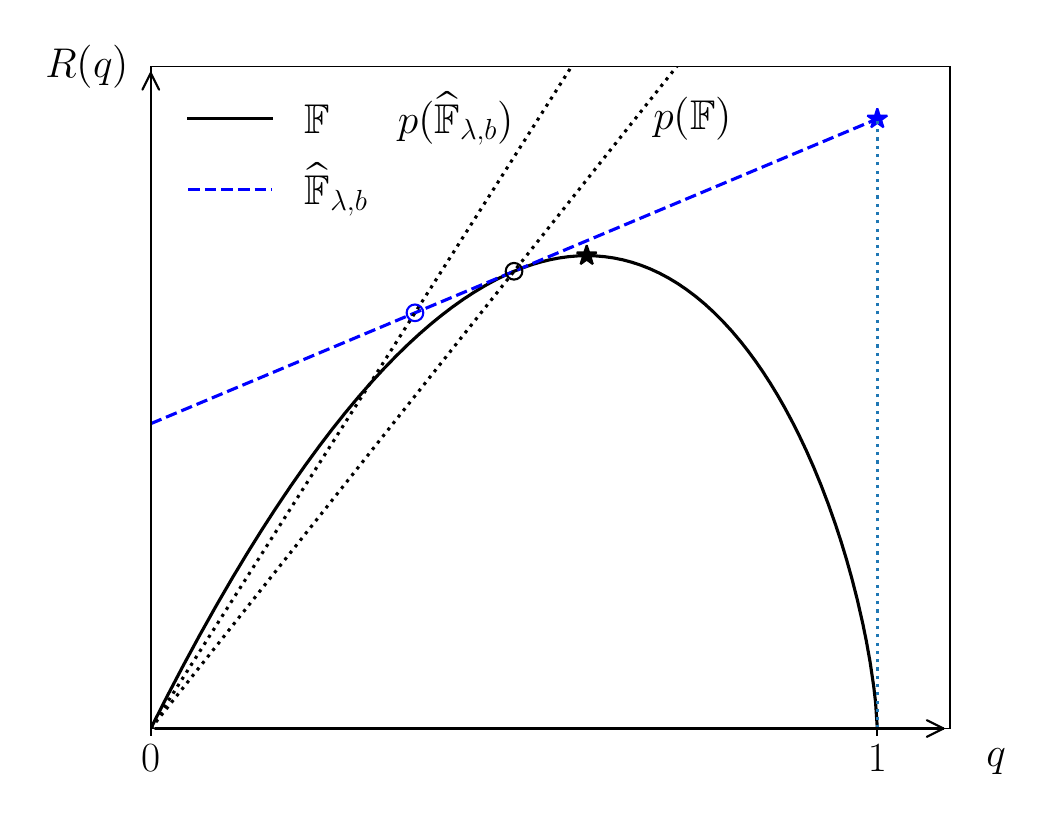}
\vspace{-25pt}
\caption{$\alpha = 0$}
\label{fig:fig_regular_smallq}
\end{subfigure}
\begin{subfigure}[t]{0.45\textwidth}
\includegraphics[width=\textwidth,keepaspectratio]{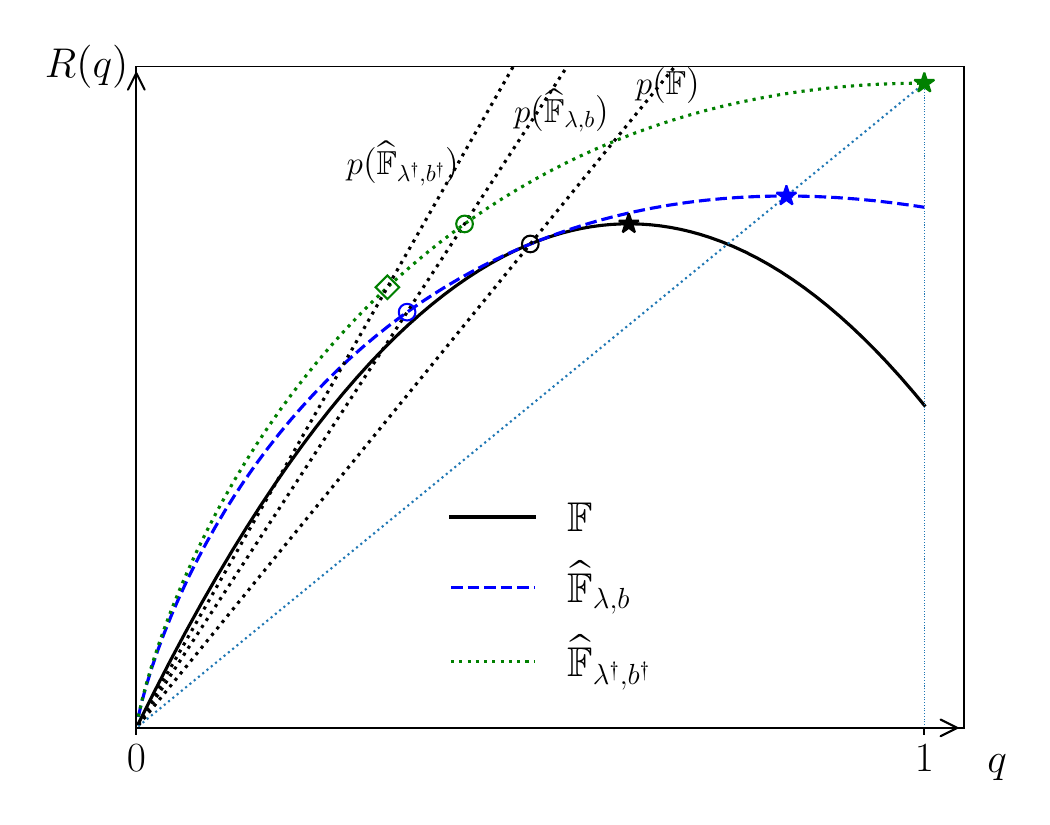}
\vspace{-25pt}
\caption{$\alpha =1$}
\label{fig:fig_mhr_smallq}
\end{subfigure}
\label{fig:smallq}
\end{figure}

\begin{lemma}
    Let $p:\Delta(\R) \to \R$ be any monotone pricing policy. For every $\F \in \cF$ with $p(\F) \ge p^*(\F)$, there exists $\lF \in \lcF$ such that $p(\lF) \ge p^*(\lF)$ and 
    \[
    \frac{\Rev(p(\lF), \lF)}{\OPT(\lF)} \le \frac{\Rev(p(\F), \F)}{\OPT(\F)}~.
    \]
    \label{lem:large}
\end{lemma}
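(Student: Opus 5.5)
Write $q(v)=1-\F(v)$, $v_0=p^*(\F)$, and $v_1=p(\F)\ge v_0$. We may assume $q(v_1)>0$; otherwise the ratio for $\F$ is $0$ and any $\lF$ with $p(\lF)\ge p^*(\lF)$ and zero revenue works. The proof has two steps, mirroring the figure caption.

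\textbf{Step 1: a dominating extremal distribution anchored at $v_1$.} Since $v\mapsto \cE^{-1}(q(v))$ is convex (Lemma~\ref{lem:cE_convex}), it lies above its supporting line at $v_1$. Pick a subgradient slope $\lambda\ge 0$ of this function at $v_1$. It is nonnegative because the function is nondecreasing. Define $b$ by $\lambda(v_1-b)=\cE^{-1}(q(v_1))$, i.e.\ the line hits zero at $b$, and take $b\ge 0$. If $b<0$, the line passes below $(0,0)$; in that case truncate at $0$, or replace the line by the chord through $(0,0)$, which still lies below the function on $[0,v_1]$ by convexity.

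For $v\ge b$ we have $\cE^{-1}(q(v))\ge \lambda(v-b)$. Since $\cE$ is decreasing, this gives $1-\lF(v)=\cE(\lambda(v-b))\ge q(v)$ for $v\ge b$, and trivially for $v<b$. So $\lF\distge\F$, with equality at $v_1$. By monotonicity of $p$, $p(\lF)\ge p(\F)=v_1$.

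By Lemma~\ref{lem:opt_price}, $p^*(\lF)$ is either $b$ or the point $\frac{1-(1-\alpha)\lambda b}{\alpha\lambda}$. I will check that $v_1\ge p^*(\lF)$. The revenue of $\lF$ on $[v_1,\infty)$ dominates that of $\F$, agrees with it at $v_1$, and is unimodal with peak $p^*(\lF)$. Moreover, $\F$'s revenue is decreasing past $v_0\le v_1$. I expect the tangency at $v_1$ to force the derivative of $\lF$'s revenue at $v_1$ to be $\le$ that of $\F$, which is $\le 0$. This gives $p^*(\lF)\le v_1\le p(\lF)$.

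The ratio then compares as
\[
\frac{\Rev(p(\lF),\lF)}{\opt(\lF)}\le \frac{\Rev(v_1,\lF)}{\opt(\lF)}=\frac{\Rev(v_1,\F)}{\opt(\lF)}\le\frac{\Rev(v_1,\F)}{\opt(\F)}.
\]
The first inequality holds because $\lF$'s revenue decreases beyond its peak. The last holds because $\opt(\lF)\ge\opt(\F)$ by dominance.

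\textbf{Step 2: forcing $\lambda b\ge 1$ (needed only when $\alpha>0$).} If $\lambda b<1$, the distribution $\lF$ is not in $\lcF$, since its optimal price lies strictly above $b$. Let $b^\dagger=p^*(\lF)$. Construct $\llF$ by conditioning $\lF$ on $v\ge b^\dagger$, which for the generalized Pareto family is again of the form $\widehat{\F}$. Concretely, the conditional survival $\cE(\lambda(v-b))/\cE(\lambda(b^\dagger-b))$ equals $\cE(\lambda^\dagger(v-b^\dagger))$ for $\lambda^\dagger=\lambda/(1+(1-\alpha)\lambda(b^\dagger-b))$, by the functional equation of $\cE$. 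Using the formula for $b^\dagger$, one checks that $\lambda^\dagger b^\dagger=1$, so $\llF\in\lcF$ with optimal price $b^\dagger$.

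Conditioning scales all survival probabilities above $b^\dagger$ by the common factor $1/\cE(\lambda(b^\dagger-b))$. Hence revenues at prices $\ge b^\dagger$, including $\opt$, scale by the same factor, and ratios at prices $\ge b^\dagger$ are preserved. Further, $\llF\distge\lF$, so $p(\llF)\ge p(\lF)\ge b^\dagger=p^*(\llF)$. Unimodality past the peak then gives
\[
\frac{\Rev(p(\llF),\llF)}{\opt(\llF)}\le \frac{\Rev(p(\lF),\llF)}{\opt(\llF)}=\frac{\Rev(p(\lF),\lF)}{\opt(\lF)}.
\]
Chaining this with Step 1 finishes the proof.

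\textbf{Main obstacle.} The delicate point is Step 1's claim $p^*(\lF)\le v_1$, together with the edge cases: $b<0$, non-differentiability of $\cE^{-1}\circ q$, and atoms. I would first try the derivative comparison at the tangency point. If that fails, I would argue directly that $\lF$'s revenue, being $\ge$ $\F$'s revenue and equal to it at $v_1$, cannot peak beyond $v_1$, using that $\log$ of revenue is concave in a suitable reparametrization for $\alpha$-regular laws.
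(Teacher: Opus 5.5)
This is essentially the paper's proof. The paper also builds $\lF$ tangent to $\cE^{-1}(q(\cdot))$ at $p(\F)$ and gets $\lF\distge\F$ from convexity. It checks $p^*(\lF)\le p(\F)$ by algebra that reduces exactly to $p(\F)f(p(\F))\ge q(p(\F))$. That is your derivative comparison, which in fact holds with equality, since the two survival functions share value and slope at the tangency point. It then conditions on $v\ge p^*(\lF)$ with $\lambda^\dagger=\lambda q_1^{1-\alpha}$, which equals your $\lambda/(1+(1-\alpha)\lambda(b^\dagger-b))$. The main line is correct; only your edge-case remarks need repair.

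First, $b<0$ cannot occur. Since $\cE^{-1}(q(0))=0$, convexity forces the supporting line at $p(\F)$ to be $\le 0$ at the origin, i.e.\ $b\ge 0$. Your proposed remedy would be wrong anyway: a convex function lies below, not above, its chord on $[0,p(\F)]$.

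Second, an arbitrary subgradient is not enough when $p(\F)=p^*(\F)$ sits at a kink of $\cE^{-1}\circ q$. With the left derivative, the revenue of $\lF$ can still be increasing at $p(\F)$, and then $p(\lF)\ge p^*(\lF)$ is not guaranteed. Take the right derivative instead: the derivative of $\Rev(\cdot,\lF)$ at $p(\F)$ then equals the right derivative of $\Rev(\cdot,\F)$, which is $\le 0$.

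Third, your dismissal of $q(p(\F))=0$ is false. Every member of $\lcF$ has an everywhere-positive survival function and $p(\lF)$ is finite, so no $\lF\in\lcF$ has zero revenue. In that degenerate case the lemma as literally stated fails, e.g.\ for $\F$ uniform on $[0,1]$ and any monotone $p$ with $p(\F)>1$. One can only produce members of $\lcF$ whose ratio tends to $0$: anchor the tangent at points $v\uparrow\sup\supp\F$ instead of at $p(\F)$, then apply your Step 2. This suffices for \Cref{thm:monotone} (read as an infimum) and \Cref{cor:monotone}. The paper's proof silently skips this case too, since its $\lambda$ and $b$ are undefined when $q(p(\F))=0$.
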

\begin{proof}{Proof.}
Let $q(v) = 1-\F(v)$ for every $v \in \R$, and $v_0 = p(\F)$. Consider the distribution $\lF$ with parameters
\[
\lambda = \left.\left(\cE^{-1}(q(v))\right)' \right|_{v_0} = f(v_0) \cdot q(v_0)^{\alpha-2}
\quad \text{and} \quad b = v_0 - \cE^{-1}(q(v_0))/\lambda~.
\]
This is the distribution that is tangent to $\F$ at $v_0$, i.e., $\lF(v_0) = \F(v_0)$, and whose $\alpha$-virtual value function $\varphi_{\lF}$ is a constant $\varphi_{\F}(v_0)$. 
Refer to Figure~\ref{fig:smallq} for a geometric illustration of $\F$ (in black solid line) and $\lF$ (in blue dashed line).

We first prove that $\lF \distge \F$. Notice that for every $v < b$, we have $1-\lF(v) = 1 \ge 1-\F(v)$. Thus, it suffices to study $v \ge b$:
\begin{align*}
\cE^{-1}\left(1-\lF(v) \right) & = \lambda (v-b) = \lambda (v-v_0)+ \lambda (v_0-b) \\
& = \left(\cE^{-1}(q(v_0)) \right)' \cdot (v-v_0) +\cE^{-1} (q(v_0)) \le \cE^{-1}(q(v)) = \cE^{-1}(1-\F(v))~,
\end{align*}
where the first three equalities follow by the definition of our construction. The inequality holds by the convexity of $\cE^{-1}(q(v))$, as established in \Cref{lem:cE_convex}. By the monotonicity of $\cE^{-1}$, we have that $1-\lF(v) \ge 1-\F(v)$ for every $v$.
As a consequence, $\OPT(\lF) \ge \OPT(\F)$.

Next we hope to show $\Rev(p(\lF), \lF) \le \Rev(p(\F), \F).$
We first verify that $p(\F) \ge p^*(\lF)$.
Since $v_0 = p(\F) \ge p^*(\F)$, we have that the revenue curve of $\F$ is decreasing at $v_0$. Thus, the derivative of $v \cdot q(v)$ is non-positive at $v_0$: 
\begin{equation}
\label{eqn:virtual_ge_0}
(v \cdot q(v))'|_{v_0} = q(v_0) - v_0 f(v_0) \le 0~.
\end{equation}
If $\alpha=0$, we verify that $\lambda b \ge 1$:
\begin{align*}
\lambda b & = \left(\cE^{-1}(q(v_0))\right)' \cdot v_0 - \left(\cE^{-1}(q(v_0))\right) \\
& = \left(1/q(v_0)-1\right)' \cdot v_0 - \left(1/q(v_0)-1\right) = \left(f(v_0) \cdot v_0 - q(v_0)\right)/q^2(v_0) + 1 \ge 1~.
\end{align*}
Therefore, by Lemma~\ref{lem:opt_price}, $p^*(\lF) = b \le v_0 = p(\F)$.
For $\alpha \in (0,1]$, by Lemma~\ref{lem:opt_price}, 
\[
p^*(\lF) = \max \left( \frac{1-(1-\alpha)\lambda b}{\alpha \lambda}, b \right)~, \quad \text{and} \quad p(\F) = v_0~.
\]
Notice that $v_0 \ge b$ by the definition of $b$. If suffices to verify that $v_0 \ge \frac{1-(1-\alpha)\lambda b}{\alpha \lambda}$:
\begin{align*}
& \phantom{\iff} \ v_0 \ge \frac{1-(1-\alpha)\lambda b}{\alpha \lambda}  \iff \alpha \lambda v_0 + (1-\alpha) \lambda b \ge 1 \iff \lambda v_0 - (1-\alpha) \cE^{-1}(q(v_0)) \ge 1 \\
& \iff \lambda v_0 - (q(v_0)^{\alpha-1} - 1) \ge 1  \iff f(v_0) \cdot q(v_0)^{\alpha-2}\cdot v_0 - q(v_0)^{\alpha-1} \ge 0~,
\end{align*}
where the last inequality holds by \eqref{eqn:virtual_ge_0}.
By the monotonicity of the pricing policy $p$, we have 
\[
p(\lF) \ge p(\F) \ge p^*(\lF)~.
\]
Consequently, by the unimodality of the revenue curve for $\lF$, we have that
\[
\Rev(p(\lF), \lF) \le \Rev(p(\F), \lF) = \Rev(p(\F), \F).
\]
Therefore, we have
\[
\frac{\Rev(p, \lF)}{\OPT(\lF)}\le \frac{\Rev(p, \F)}{\OPT(\F)}.
\]

The proof is not yet complete, since when $\alpha \in (0,1]$, we not necessarily have $\lF \in \lcF$, i.e., $\lambda b \ge 1$. Consider the case when $\lambda b < 1$. Let 
\[
v_1 = p^*(\lF) = \frac{1-(1-\alpha) \lambda b}{ \alpha \lambda} \quad \text{and} \quad q_1 = 1-\lF(v_1) = \cE(\lambda(v_1-b))~.
\]
Consider the distribution $\llF$ with parameter:
\[
\lambda^\dagger = \lambda\cdot q_1^{1-\alpha} \quad \text{and} \quad  b^\dagger = v_1~.
\]
Then we have $\lambda^\dagger b^\dagger = 1$, and hence $\llF \in \lcF$.
This distribution is constructed from $\lF$ by removing all mass below $v_1$ and then normalizing the upper tail proportionally. We illustrate the construction of $\llF$ in the green dotted line in \Cref{fig:fig_mhr_smallq} and formalize this property in the following claim. The proof of the claim is deferred to Appendix~\ref{app:proofs}.
\begin{claim}
For every $v \ge v_1$, we have $
1-\lF(v) = q_1 \cdot (1-\llF(v)).$
\label{claim1}
\end{claim}
Finally, the lemma follows immediately since $\llF \in \lcF$ satisfies the stated condition in the lemma:
\[
\frac{\Rev(p(\llF), \llF)}{\opt(\llF)} \le \frac{\Rev(p(\lF),\llF)}{\opt(\llF)} = \frac{\Rev(p(\lF),\lF)}{\opt(\lF)} \le \frac{\Rev(p(\F), \F)}{\OPT(\F)}~.
\]
Here, the first inequality holds by the fact that $\Rev(v,\llF)$ is decreasing in $v$ and $p(\llF) \ge p(\lF)$ by the monotonicity of $p$; and the equality holds by the above claim. \Cref{fig:fig_mhr_smallq} provides a geometric illustration of the comparison:
$\frac{\mathsf{Green}\diamond}{\mathsf{Green}\star}\le \frac{\mathsf{Green}\circ}{\mathsf{Green}\star} =  \frac{\mathsf{Blue}\circ}{\mathsf{Blue}\star}\le  \frac{\mathsf{Black}\circ}{\mathsf{Black}\star}.$
\end{proof}

Finally, we state an easy-to-verify corollary of Theorem~\ref{thm:monotone}, which we will later use to identify the approximation ratios attained by different monotone pricing policies.

\begin{corollary}
\label{cor:monotone}
    A monotone pricing policy $p:\Delta(\R) \to \R$ is $\Gamma$-approximate if and only if 
    \begin{itemize}
\item for every $\sF \in \scF$, $p(\sF) \ge \lb$;
\item for every $\lF \in \lcF$, $p(\lF) \le \ub$.
\end{itemize}
where $\lb$ and $\ub$ are defined in \Cref{def: LBUB}.
\end{corollary}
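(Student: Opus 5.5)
The plan is to derive the corollary directly from Theorem~\ref{thm:monotone} together with Lemma~\ref{lem:lb-ub}. Being $\Gamma$-approximate means $\Rev(p(\F),\F) \ge \Gamma\cdot\OPT(\F)$ for every $\F\in\cF$. Theorem~\ref{thm:monotone} says the infimum of the ratio over $\cF$ equals the infimum over $\scF\cup\lcF$. Hence $p$ is $\Gamma$-approximate if and only if $\Rev(p(\F),\F)\ge \Gamma\cdot\OPT(\F)$ for every $\F\in\scF\cup\lcF$. By Lemma~\ref{lem:lb-ub}, this is equivalent to
\[
\mathsf{LB}(\F,\Gamma)\le p(\F)\le \mathsf{UB}(\F,\Gamma) \quad \text{for all } \F\in\scF\cup\lcF .
\]
This is already stronger than the stated conditions, so the ``only if'' direction follows at once by keeping only the lower inequality on $\scF$ and the upper inequality on $\lcF$.

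For the ``if'' direction, I must show that the two one-sided conditions suffice. Fix $\sF\in\scF$. Here $p^*(\sF)=a$ and the support of $\sF$ ends at $a$. Any price above $a$ yields zero revenue, so $\mathsf{UB}(\sF,\Gamma)=a$ whenever $\Gamma>0$. The missing upper inequality $p(\sF)\le a$, however, need not hold for an arbitrary monotone $p$.

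To handle this, I will not try to establish both bounds on each family. Instead I will revisit the proof of Theorem~\ref{thm:monotone}, i.e., Lemmas~\ref{lem:small} and~\ref{lem:large}, which give a finer statement:
\begin{itemize}
\item every $\F$ with $p(\F)\le p^*(\F)$ is dominated, in ratio, by some $\sF\in\scF$ that also satisfies $p(\sF)\le p^*(\sF)$;
\item every $\F$ with $p(\F)\ge p^*(\F)$ is dominated by some $\lF\in\lcF$ that also satisfies $p(\lF)\ge p^*(\lF)$.
\end{itemize}
Now take an arbitrary $\F\in\cF$ and split into two cases.
\begin{itemize}
\item If $p(\F)\le p^*(\F)$, pick the $\sF$ from Lemma~\ref{lem:small}. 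Then $p(\sF)\le p^*(\sF)\le\mathsf{UB}(\sF,\Gamma)$, and the hypothesis gives $p(\sF)\ge\mathsf{LB}(\sF,\Gamma)$. Lemma~\ref{lem:lb-ub} then shows $\sF$ has ratio at least $\Gamma$, hence so does $\F$.
\item If $p(\F)\ge p^*(\F)$, pick the $\lF$ from Lemma~\ref{lem:large}. Then $p(\lF)\ge p^*(\lF)\ge\mathsf{LB}(\lF,\Gamma)$, and the hypothesis gives the upper bound. The same argument gives ratio at least $\Gamma$ for $\lF$, and hence for $\F$.
\end{itemize}
Here I use that $p^*(\cdot)\in[\mathsf{LB},\mathsf{UB}]$, which is immediate from the definitions since $\Gamma\le 1$.

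The main subtlety is this last step: recognizing that the stated one-sided conditions alone do not control distributions in $\scF$ on which $p$ overprices, or distributions in $\lcF$ on which $p$ underprices. The fix is to use the sign-preserving refinement built into Lemmas~\ref{lem:small} and~\ref{lem:large}, rather than the bare equality of minima in Theorem~\ref{thm:monotone}. Everything else is bookkeeping with Lemma~\ref{lem:lb-ub}.
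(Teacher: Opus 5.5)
Your proof is correct and matches the paper's proof. Necessity comes from Lemma~\ref{lem:lb-ub} applied to the boundary distributions, which lie in $\cF$; routing this through Theorem~\ref{thm:monotone} is harmless but unnecessary. Sufficiency comes from the case split on $p(\F)$ versus $p^*(\F)$, using the sign-preserving conclusions $p(\sF)\le p^*(\sF)$ and $p(\lF)\ge p^*(\lF)$ of Lemmas~\ref{lem:small} and~\ref{lem:large}. The paper argues sufficiency by contradiction rather than directly, which is only a cosmetic difference.
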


\Cref{thm:monotone} has several important implications. First, it converts an intractable optimization over an infinite-dimensional, nonconvex ambiguity set into a low-dimensional parametric optimization. It reveals a structural property of worst-case distributions: despite the richness of the $\alpha$-regular class, the most adversarial distributions always lie on its boundary and can be characterized by only two parameters. 
This insight makes both theoretical analysis and numerical computation feasible. Second, the reduction provides a unifying perspective under which a wide range of pricing policies, including mean pricing, $L^\eta$-norm pricing, superquantile-based pricing, and other monotone pricing policies, can be analyzed within a common framework. As a result, it enables the systematic derivation of tight approximation guarantees and the identification of optimal parameters across different pricing policies.
In particular, by instantiating the functional $p(\F)$ as the mean $\|\F\|_1$, the $L^\eta$-norm $ \|\F\|_\eta$ or the superquantile $\mathrm{CVaR}_{q}(\F)$, we can characterize the optimal mean pricing, $L^\eta$-norm pricing, and superquantile pricing in the following problem, respectively. In each case, the optimal policy is obtained by choosing the optimal discount factor $\omega$.
\begin{align}
    \max_{\omega} \min_{\F\in \cF} \frac{\Rev\of{\omega \cdot p(\F), \F}}{\OPT(\F)}
    \label{eq: linear}
\end{align}
By solving \eqref{eq: linear}, the optimal pricing policy within each cluster and its approximation ratio are provided below, and the proofs are the same as the results presented in Section~\ref{sec:application}.
\begin{proposition}
    For MHR distributions, 
    \begin{enumerate}
        \item Mean pricing $p(\F)= 0.823 \cdot \|\F\|_1$ achieves an approximation ratio of 0.771.
        \item $L^{1.37}$-norm pricing $p(\F)= 0.805 \cdot \|\F\|_{1.37}$ achieves an approximation ratio of 0.774. 
        \item Superquantile pricing $p(\F) = 0.787 \cdot \mathrm{CVaR}_{0.92}$ achieves an approximation ratio of 0.787.
    \end{enumerate}
\end{proposition}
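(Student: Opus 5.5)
The three pricing policies are all monotone, so I will apply Corollary~\ref{cor:monotone} with $\alpha=1$. In that case $\cE(v)=e^{-v}$, the family $\scF$ consists of scaled exponentials truncated at $a$ (with $\lambda a\le 1$), and $\lcF$ consists of shifted exponentials $1-\lF(v)=e^{-\lambda(v-b)}$ for $v\ge b$ (with $\lambda b\ge 1$). By scale invariance ($p$ is positively homogeneous, and $\Rev$ and $\OPT$ scale linearly), I can normalize $\lambda=1$. The check then reduces to two one-parameter families: $t:=a\in(0,1]$ for the truncated exponentials and $t:=b\ge 1$ for the shifted exponentials. For each policy $\omega\cdot p$ I must verify two inequalities: $\omega p(\sF)\ge \mathsf{LB}(\sF,\Gamma)$ for all $a\le 1$, and $\omega p(\lF)\le \mathsf{UB}(\lF,\Gamma)$ for all $b\ge 1$.

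\textbf{Step 1: closed forms for the statistics.} On the small side the distribution is $\mathrm{Exp}(1)$ with an atom of mass $e^{-a}$ at $a$.
\begin{itemize}
\item The mean is $1-e^{-a}$.
\item The $L^\eta$-norm is $\big(\int_0^a \eta v^{\eta-1}e^{-v}\,dv\big)^{1/\eta}$, an incomplete gamma integral.
\item $\mathrm{CVaR}_q$ is computed piecewise: if $e^{-a}\ge q$, then $\mathrm{VaR}_q=a$ and $\mathrm{CVaR}_q=a$; otherwise $\mathrm{VaR}_q=\ln(1/q)$ and $\mathrm{CVaR}_q=\mathrm{VaR}_q+q^{-1}\E[(s-\mathrm{VaR}_q)^+]$.
\end{itemize}
On the large side the distribution is $b+\mathrm{Exp}(1)$. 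Its mean is $b+1$, its $\mathrm{CVaR}_q$ is $b+\ln(1/q)+1$, and its $L^\eta$-norm is $\big(\E(b+X)^\eta\big)^{1/\eta}$.

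\textbf{Step 2: the LB/UB thresholds.} For $\sF$ we have $\OPT=ae^{-a}$, and for a price $v\le a$ the revenue is $ve^{-v}$. Hence $\mathsf{LB}$ is the smaller root of $ve^{-v}=\Gamma ae^{-a}$. For $\lF$ we have $\OPT=b$, and $\mathsf{UB}$ is the larger root of $ve^{-(v-b)}=\Gamma b$. Both roots can be expressed through the Lambert $W$ function, or simply computed numerically.

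\textbf{Step 3: verification.} With $\omega$ and $\Gamma$ fixed to the stated values, I reduce each check to the sign of a single-variable function, $g(a)=\omega p(\sF)-\mathsf{LB}$ on $(0,1]$ and $h(b)=\mathsf{UB}-\omega p(\lF)$ on $[1,\infty)$. I will first handle the asymptotic regimes analytically:
\begin{itemize}
\item As $a\to0$, both sides are linear in $a$. For the mean, $\omega a$ is compared with $\mathsf{LB}\approx\Gamma a$, which requires $\omega\ge\Gamma$.
\item As $b\to\infty$, $\mathsf{UB}\approx b+\ln(1/\Gamma)$, while $\omega p\approx\omega b$. Since $\omega<1$, the inequality holds eventually.
\end{itemize}
On the remaining compact ranges I will certify positivity by a grid evaluation combined with Lipschitz bounds. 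Lemma~\ref{lem:lb-lipschitz} controls $\mathsf{LB}$; the statistics are smooth in their parameters; and $\mathsf{UB}$ is controlled by an analogous derivative bound.

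\textbf{Tightness.} To show the stated ratios are the best achievable with these policies, I will exhibit the binding instances. For the mean, one binding point is expected at an interior $a$ and one at an interior $b$; the optimal $\omega$ equalizes them, which gives 0.823 and 0.771. The same structure determines $\eta=1.37$ and $q=0.92$ through an outer one-dimensional optimization.

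\textbf{Main obstacle.} I expect the hard part to be the rigorous numerical certification. For $\mathrm{CVaR}$ the small-side formula is piecewise in $a$ around $a=\ln(1/q)$, and the two sides are nearly tight simultaneously, since $\omega=\Gamma=0.787$, so the margins are thin. My approach there is to check the kink point separately and use interval arithmetic with rounded constants, so that the reported three-digit values hold.
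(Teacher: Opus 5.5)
Your proposal is correct and follows essentially the same route as the paper. Monotonicity plus the boundary-family reduction (Theorem~\ref{thm:monotone}, equivalently Corollary~\ref{cor:monotone}), together with homogeneity of the statistic, collapses the problem to one-parameter families of truncated and shifted exponentials, where the statistics and revenues have closed forms that are then evaluated numerically. The differences are cosmetic: you normalize $\lambda=1$ and phrase the checks through $\mathsf{LB}/\mathsf{UB}$, recovering point masses as the $a\to 0$ limit, while the paper normalizes $a=1$/$b=1$ (or $\mathrm{CVaR}_q=1$) and computes ratios directly; your planned Lipschitz/interval-arithmetic certification is more rigorous than the paper, which only reports the numerical optimum.
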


Our reduction in \Cref{thm:monotone} not only enables the evaluation of specific pricing policies, but also allows us to characterize the optimal monotone hidden pricing mechanism (equivalently, the optimal monotone concave pricing policy) and its performance, which is presented in \Cref{sec:optimal}.

Beyond these primary applications, we highlight two additional insights.
\begin{itemize}
    \item The theorem admits an alternative interpretation in the setting of robust pricing with monotone statistics. For example, if the seller wishes to price an item knowing only the mean of the valuation distribution, under the assumption that the true distribution lies in the MHR family, then the optimal price is the mean multiplied by $0.823$. The mean can be replaced by any monotone statistic, and our result identifies the corresponding worst-case distributions. We discuss this viewpoint in Section~\ref{sec:application}.
    \item The theorem applies to all monotone pricing policies, including quantile-based policies that are not necessarily concave. Our analysis can be straightforwardly adapted to recover the existing approximation guarantees for pricing at a discounted quantile. For brevity, we omit the detailed calculations, as they are similar to our derivations for $L^\eta$-norm and superquantile pricing in \Cref{sec:application} and follow directly from \Cref{thm:monotone}. Our reduction offers an alternative derivation of the quantile-based robust pricing results in~\cite{ms/AllouahBB23}.
\end{itemize}

\section{Optimal Hidden Pricing}
\label{sec:optimal}

Our reduction established in \Cref{thm:monotone} enables the evaluation of specific types of pricing policies: within each class, the optimal policy can be obtained by optimizing a single scalar parameter in \eqref{eq: linear}. This unified framework allows us to compare the effectiveness of eliciting different types of information from the buyer, such as the mean, $L^\eta$-norm, or superquantile, within a common analytical structure. However, when the buyer is allowed to report the full distribution $\F$, designing the optimal pricing becomes significantly more challenging, since it involves determining an operator $h$ that maps distributions $\F$ to pricing functions, resulting in an infinite-dimensional optimization problem.
In this section, we characterize the optimal hidden pricing mechanism with respect to the family of $\alpha$-regular distributions $\cF_\alpha$ for every $\alpha \in [0,1]$. 

Recall that designing the optimal monotone hidden pricing mechanism is equivalent to designing the optimal monotone concave pricing policy. We first illustrate our construction of the pricing policy $p$, and then provide an explicit construction of the corresponding hidden rule $h$.

By Theorem~\ref{thm:monotone}, once monotonicity of $p$ is imposed, it suffices to verify the approximation ratio for distributions in $\scF \cup \lcF$. 
Consider the following construction. We start by setting $p(\sF)=\lb $ for all $\sF\in\scF$, and then extend $p$ to all distributions so that it is both concave and monotone. Equivalently, $p$ is defined as the concave and monotone envelope of $\mathsf{LB}(\cdot,\Gamma)\vert_{\scF}$. This construction guarantees that $p$ satisfies both concavity and monotonicity. Moreover, this construction yields the minimal concave and monotone functional $p$ that satisfies the first condition of Corollary~\ref{cor:monotone}.
Therefore, if this extension additionally satisfies $p(\lF)\le \ub$ for all $\lF \in \lcF$, we may conclude that $p$ achieves the desired approximation guarantee. This is indeed our construction, and $\Gamma$ is chosen to be the largest constant for which the above condition holds.

Although Theorem~\ref{thm:scoring} characterizes how the corresponding hidden pricing rule $h$ can be derived from $p$, we provide a more intuitive construction below. Moreover, this explicit construction enables the use of numerical methods to pin down the value of the optimal approximation ratio.

We first restrict our pricing rule $h$ to the following family of feasible pricing functions.

\begin{definition}[Feasible Pricing Functions]
For every $\Gamma \in [0,1]$, define $H^\Gamma$ to be the following space of feasible functions:
\[
H^\Gamma := 
\left\{
h: \R \to \R \;\middle|\;
\begin{aligned}
& \E_{s \sim \sF}\off{h(s)} \ge \lb, \quad \forall (\lambda, a) \in K := \{(\lambda,a) \mid \alpha \lambda a \le 1\} \subseteq \R^2 \\
& h \text{ is non-decreasing}
\end{aligned}
\right\}~.
\]
\end{definition}
Intuitively, the seller delegates the selection of the pricing rule to the buyer, but restricts her choice to the admissible family $H^\Gamma$. Since the buyer knows the underlying $\F$, she will choose the function $h\in H^\Gamma$ that minimizes the expected posted price. This mirrors the intuition seen in earlier examples such as the $L^2$-norm or superquantile pricing rules, where the buyer’s best response identifies the statistic of interest. 
Formally, we define a proper pricing function as follows.

\begin{definition}[Optimal Monotone Pricing Function]
For every $\Gamma \in [0,1]$, define $h^\Gamma: \R \times \Delta(\R) \to \R$ to be the following pricing function: 
\[
\forall \F \in \Delta(\R), \quad h^\Gamma(\cdot,\F) = \argmin_{h \in H^\Gamma} \Ex{s \sim \F}{h(s)}~.
\]
\end{definition}

For a given parameter $\Gamma$, the seller commits ex ante to the pricing rule $h^\Gamma$.  
Then a buyer who knows $\F$ will truthfully report $\F$ since $h^\Gamma(\cdot, \F)$ minimizes her expected payment among all feasible pricing functions in $H^\Gamma$. By the definition of $H^\Gamma$, a larger $\Gamma$ imposes a larger $\lb$ across all distributions $\sF$ in $\scF$, which shrinks the feasible set $H^\Gamma$ and results in a higher expected payment $\Ex{s \sim \F}{h^\Gamma(s,\F)}$. Intuitively, increasing $\Gamma$ leads to ``larger'' $h^\Gamma$. On the other hand, higher prices can worsen performance against distributions $\lF$ in $\lcF$, where overpricing leads to a lower approximation ratio. Consequently, the choice of $\Gamma$ involves a fundamental tradeoff: increasing $\Gamma$ improves guarantees against distributions in 
$\scF$ but may degrade performance against distributions in 
$\lcF$. The optimal pricing rule must balance the performance across these two classes.
We therefore define the largest value of $\Gamma$ for which the resulting pricing rule achieves the same approximation ratio against all distributions in $\lcF$ and prove that it is the tight approximation ratio.
\begin{definition}[Tight Approximation Ratio]
For every $\alpha \in [0,1]$, define
\[
\Gamma^* := 
\max_{\Gamma \in [0,1]} 
\left\{
\Gamma \;\middle|\;
\E_{s \sim \lF}\!\left[h^{\Gamma}(s,\lF)\right] \le \ub, 
\quad \forall\, \lF \in \lcF
\right\}.
\]
\end{definition}
\Cref{thm:optimal} establishes the optimal monotone hidden pricing mechanism and the optimality of $\Gamma^*$.
\begin{theorem}
\label{thm:optimal}
Over the family of $\alpha$-regular distributions $\cF_\alpha$, the hidden pricing mechanism induced by $h^{\Gamma^*}$ attains a $\Gamma^*$-approximation ratio;
moreover, no monotone hidden pricing mechanism can achieve an approximation ratio strictly greater than $\Gamma^*$.
\end{theorem}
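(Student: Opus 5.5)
Write $p^\Gamma(\F) := \min_{h \in H^\Gamma} \E_{s\sim\F}[h(s)] = \E_{s\sim\F}[h^\Gamma(s,\F)]$. The plan has two parts: a positive part that checks $p^{\Gamma^*}$ against \Cref{cor:monotone}, and a negative part showing that any monotone concave policy is pointwise at least $p^\Gamma$.

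\textbf{Step 1: $p^\Gamma$ is a monotone concave policy, and $h^\Gamma$ is proper.}
Concavity holds because $p^\Gamma$ is a pointwise minimum of functionals that are linear in $\F$. Monotonicity holds because every $h \in H^\Gamma$ is non-decreasing, so $\F_1 \distge \F_2$ gives $\E_{\F_1}[h]\ge \E_{\F_2}[h]$ for each such $h$, and taking the minimum over $h$ preserves this. Properness of $h^\Gamma$ is immediate from the argmin definition: for all $\F,\F'$,
\[
\E_{s\sim\F}[h^\Gamma(s,\F)] \le \E_{s\sim\F}[h^\Gamma(s,\F')].
\]
I need to verify that the minimum is attained (or handle it by an $\epsilon$-argument). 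I would do this with a compactness argument: restrict to non-decreasing $h$ bounded by a suitable envelope and apply Helly's selection theorem. The induced mechanism is then the hidden pricing mechanism with posted price $p^\Gamma(\F)$.

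\textbf{Step 2: $\Gamma^*$-approximation.}
For $\sF\in\scF$, every $h\in H^{\Gamma^*}$ satisfies $\E_{\sF}[h]\ge \lb$, so $p^{\Gamma^*}(\sF)\ge \mathsf{LB}(\sF,\Gamma^*)$. For $\lF\in\lcF$, the definition of $\Gamma^*$ gives $p^{\Gamma^*}(\lF)\le \mathsf{UB}(\lF,\Gamma^*)$. \Cref{cor:monotone} (via \Cref{thm:monotone}) then yields the $\Gamma^*$ guarantee. One subtlety is that $\Gamma^*$ is a maximum. I would show that the set of feasible $\Gamma$ is closed, using continuity of $\mathsf{LB}$ and $\mathsf{UB}$ in $\Gamma$ together with the same compactness as in Step 1. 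If closedness fails, I would state the result with a supremum instead.

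\textbf{Step 3: optimality, the main obstacle.}
Suppose a monotone concave $p$ is $\Gamma$-approximate with $\Gamma>\Gamma^*$. I want to show $p(\F)\ge p^\Gamma(\F)$ for every $\F$. Then
\[
p^\Gamma(\lF)\le p(\lF)\le \mathsf{UB}(\lF,\Gamma) \quad \text{for all } \lF \in \lcF,
\]
which contradicts the maximality of $\Gamma^*$.

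By \Cref{thm:scoring}, $p$ is implemented by a proper rule $h(s,\F')$, and $p(\F)=\E_{\F}[h(\cdot,\F)]$. Fix $\F$ and set $g:=h(\cdot,\F)$. For every $\F'$, properness gives $\E_{\F'}[g]\ge p(\F')$. In particular, for every $\sF$,
\[
\E_{\sF}[g]\ge p(\sF)\ge \mathsf{LB}(\sF,\Gamma),
\]
where the second inequality is \Cref{cor:monotone}. If $g$ were non-decreasing, then $g\in H^\Gamma$ and $p(\F)=\E_\F[g]\ge p^\Gamma(\F)$, as required. The difficulty is that the subgradient $g$ need not be monotone.

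I would attempt a monotonization: replace $g$ by a non-decreasing $\tilde g$ that keeps the inequalities $\E_{\F'}[\tilde g]\ge p(\F')$ for all $\F'$ and satisfies $\E_\F[\tilde g]\le \E_\F[g]$. A candidate is the supporting-hyperplane representation of a monotone concave functional. Specifically, I would take $\tilde g(s):=\inf_{t\ge s} g(t)$, or more robustly the lower monotone envelope obtained by ironing $g$ against $\F$. I would then argue, using the monotonicity of $p$, that some subgradient of a monotone concave functional at $\F$ can be chosen non-decreasing.

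The cleanest route is via Hahn--Banach. The set $\{\F' : p(\F')\ge c\}$ is convex and upward-closed under $\distge$, so a separating linear functional can be taken to be integration against a non-decreasing function. This gives a non-decreasing supergradient. I expect justifying this separation step carefully, including the infinite-dimensional topology and unbounded supports, to be the main technical work.
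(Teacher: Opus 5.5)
Your Steps 1--2 match the paper's Lemma~\ref{lem:positive}. Your plan for optimality is a legitimate alternative to the paper's route: show that every $\Gamma$-approximate monotone concave $p$ satisfies $p\ge p^\Gamma$ pointwise, by exhibiting a non-decreasing supergradient that lies in $H^\Gamma$.

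\textbf{The gap.} Step~3 stops at the step that carries the whole argument. You never show that $g=h(\cdot,\F)$ can be replaced by a non-decreasing function, and the Hahn--Banach route you call cleanest does not deliver this as described.
\begin{itemize}
\item Separating a superlevel set $\{\F' : p(\F')\ge c\}$ from a point certifies quasi-concavity. It does not give an affine majorant $p(\F')\le p(\F)+\ell(\F'-\F)$. For that you must separate the hypograph, and non-verticality of the separator needs its own argument in this infinite-dimensional setting.
\item Upward-closedness does not force the separator to be monotone. The perturbations $\F''-\F'$ with $\F''\distge\F'$ cannot be rescaled inside $\Delta(\R)$, so nothing rules out a separating function that decreases on a region irrelevant to the separation.
\end{itemize}

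\textbf{How to close it.} Your first candidate works, with an argument you did not give. You also do not need Theorem~\ref{thm:scoring}, since the mechanism's proper rule $h$ is part of the hypothesis. Let $\tilde g(s)=\inf_{t\ge s} g(t)$.
\begin{itemize}
\item For $t\ge s$, properness and monotonicity give $g(t)=\E_{\delta_t}[h(\cdot,\F)]\ge p(\delta_t)\ge p(\delta_s)$. So $\tilde g$ is finite and non-decreasing.
\item Since $\tilde g\le g$, we get $\E_{\F}[\tilde g]\le p(\F)$.
\item Given $\F'$ and $\epsilon>0$, choose a Borel, countably-valued $T$ with $T(s)\ge s$ and $g(T(s))\le\tilde g(s)+\epsilon$. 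For example, take $T$ constant on countably many subintervals of each level interval $\tilde g^{-1}([k\epsilon,(k+1)\epsilon))$.
\item Then $\F'':=T_\#\F'\distge\F'$, and
\[
\E_{\F'}[\tilde g]+\epsilon\;\ge\;\E_{\F''}[h(\cdot,\F)]\;\ge\;\E_{\F''}[h(\cdot,\F'')]=p(\F'')\;\ge\;p(\F').
\]
This uses properness on all of $\Delta(\R)$ and monotonicity of $p$. It does not matter that $\F''$ need not be $\alpha$-regular.
\end{itemize}
Hence $\E_{\sF}[\tilde g]\ge p(\sF)\ge\lb$ for every $(\lambda,a)\in K$. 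So $\tilde g\in H^\Gamma$, and $p(\F)\ge p^\Gamma(\F)$ for every $\F$. Applying this at $\lF\in\lcF$ gives the contradiction with $\Gamma>\Gamma^*$ that you set up.

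\textbf{Comparison with the paper.} Completed this way, your proof differs from the paper's and is more elementary. The paper never monotonizes a supergradient of $p$. Instead it lower-bounds $p(\lF)$ by the mixture program $\mathrm{P}(\lF)$, using only monotonicity and concavity of $p$ on a mixture $\tilde\F\distle\lF$ of the $\sF$'s. It then moves monotonicity to the dual side through strong duality $\mathrm{P}(\lF)=\mathrm{D}(\lF)$. That step requires truncating $K$, applying Sion's theorem, and extending the dual with a linear tail of slope $\alpha^{-1/(1-\alpha)}$, which relies on $\alpha>0$ and Lemma~\ref{lem:lb-lipschitz}.

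Your route avoids compactness and duality and proves $p\ge p^\Gamma$ on all of $\Delta(\R)$. The paper's route yields the primal mixture characterization, which is what its discretized upper-bound certificate in Appendix~\ref{sec:numerical} is built on.
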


The proof of the theorem is divided into two parts. In Lemma~\ref{lem:positive}, we provide a sufficient condition to prove that $h^\Gamma$ is $\Gamma$-approximate. 
In Lemma~\ref{lem:negative}, we prove the same condition is also necessary for the existence of a $\Gamma$-approximation monotone hidden pricing mechanism.

By the definition of $H^\Gamma$, the first constraint of $H^\Gamma$ enforces the starting point of our construction (i.e., $p(\sF)=\lb$), while the second constraint imposes the monotonicity of $p$. We emphasize that it is far from obvious that the hidden rule $h$ must be monotone for every distribution $\F$. It is straightforward to verify that this condition is sufficient to ensure the monotonicity of $p$, which makes the proof of Lemma~\ref{lem:positive} relatively direct. 
In contrast, establishing its necessity is significantly more subtle. This issue forms the core of our analysis in Lemma~\ref{lem:negative} and constitutes the most technically involved part of the proof.

Our main theorem then follows as a corollary of the two lemmas and the definition of $\Gamma^*$. 

\begin{lemma}
\label{lem:positive}
The hidden pricing mechanism with $h^{\Gamma}$ is $\Gamma$-approximate, if 
\[
\Ex{s \sim \lF}{h^{\Gamma}(s,\lF)} \le \ub, \quad \forall \lF \in \lcF~.
\]
\end{lemma}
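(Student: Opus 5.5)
The plan is to show that the induced pricing policy $p(\F) := \E_{s\sim\F}[h^\Gamma(s,\F)] = \min_{h\in H^\Gamma}\E_{s\sim\F}[h(s)]$ is concave, so that it is implementable, and monotone. Once both properties hold, it satisfies the two conditions of Corollary~\ref{cor:monotone}, and Theorem~\ref{thm:monotone} then gives $\Gamma$-approximation over all of $\cF_\alpha$.

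\textbf{Properness and concavity.} For fixed $h$, the map $\F\mapsto \E_{s\sim\F}[h(s)]$ is linear in $\F$ under mixtures. Hence $p$ is a pointwise infimum of linear functionals and is concave. Properness of $h^\Gamma$ is immediate from the definition: $\E_{s\sim\F}[h^\Gamma(s,\F)]$ is the minimum over $H^\Gamma$, so $\E_{s\sim \F}[h^\Gamma(s,\F)]\le \E_{s\sim\F}[h^\Gamma(s,\F')]$ for every report $\F'$, because $h^\Gamma(\cdot,\F')\in H^\Gamma$. The induced mechanism is therefore a hidden pricing mechanism whose effective posted price is $p(\F)$.

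\textbf{Monotonicity.} Let $\F_1\distge\F_2$. Every $h\in H^\Gamma$ is non-decreasing, so first-order dominance gives $\E_{\F_1}[h]\ge\E_{\F_2}[h]$. Taking the infimum over $h\in H^\Gamma$ on both sides yields $p(\F_1)\ge p(\F_2)$.

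\textbf{The two conditions.} For $\sF\in\scF$, every $h\in H^\Gamma$ satisfies $\E_{\sF}[h]\ge\lb$ by the first constraint of $H^\Gamma$, since $(\lambda,a)\in K$. Hence $p(\sF)\ge\lb$. For $\lF\in\lcF$, the hypothesis of the lemma is exactly $p(\lF)\le\ub$. Corollary~\ref{cor:monotone} then gives the $\Gamma$-approximation.

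\textbf{Main obstacle.} The delicate step is that the minimum defining $h^\Gamma$ must be attained, or at least approximable, with $H^\Gamma$ nonempty. Nonemptiness follows by taking $h(s)=\sup_{(\lambda,a)\in K}\lb$. This constant is finite because $\lb\le p^*(\sF)$ scales like $a$, so I would first normalize; alternatively use $h(s)=c\,s$ with $c$ large, since $\lb\le a$ and $\E_{\sF}[s]$ is proportional to $a$ at fixed $\lambda a$. If the argmin is not attained, I would replace it with an $\epsilon$-optimal selection. Concavity and monotonicity of the infimum are unaffected, and the hypothesis is phrased in terms of the value $p$, so the argument goes through. I would also check that the buyer's decision threshold coincides with $p(\F)$, so that $\Rev = \Rev(p(\F),\F)$; this follows from the hidden pricing mechanism definition.
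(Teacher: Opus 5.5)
Your proof is correct and matches the paper's argument: properness holds because $h^\Gamma(\cdot,\F')\in H^\Gamma$, monotonicity holds because every function in $H^\Gamma$ is non-decreasing, and Corollary~\ref{cor:monotone} then applies, with the $\scF$-condition coming from the constraints defining $H^\Gamma$ and the $\lcF$-condition being exactly the hypothesis.

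Your concavity remark and existence digression are not needed, since the paper's definition of $h^\Gamma$ already presupposes the argmin. Within that digression, the constant function does not lie in $H^\Gamma$: $\mathsf{LB}(\widecheck{\F}_{0,a},\Gamma)=\Gamma a$ is unbounded in $a$. Your fallback $h(s)=c\,s$ with $c$ large does work.
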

\begin{proof}{Proof.}
    We first verify that the pricing function $h^\Gamma$ is proper:
\[
\forall \F,\F', \quad \Ex{s \sim \F}{h^\Gamma(s,\F)} = \min_{h \in H^\Gamma} \Ex{s \sim \F}{h(s)} \le \Ex{s \sim \F}{h^\Gamma(s,\F')}~,
\]
where the inequality holds since $h^\Gamma(\cdot,\F') \in H^\Gamma$.

Next, we verify that $h^\Gamma$ is monotone:
\[
\forall \F \distge \F', \quad \Ex{s \sim \F}{h^\Gamma(s,\F)} \ge \Ex{s \sim \F'}{h^\Gamma(s,\F)} \ge \min_{h \in H^\Gamma} \Ex{s \sim \F'}{h(s)} = \Ex{s \sim \F'}{h^\Gamma(s,\F')}~,
\]
where the first inequality holds by a standard coupling argument and the fact that $h^\Gamma(\cdot,\F)$ is non-decreasing.
Therefore, by Corollary~\ref{cor:monotone}, the hidden pricing mechanism induced by $h^\Gamma$ is $\Gamma$-approximate if and only if:
\begin{itemize}
    \item $\forall \sF \in \scF$, $\E_{s \sim \sF}[h^\Gamma(s,\sF)] \ge \lb$~;
    \item $\forall \lF \in \lcF$, $\E_{s \sim \lF}[h^\Gamma(s,\lF)] \le \ub$~.
\end{itemize}
Notice that the first family of conditions holds by the definition of $H^\Gamma$. Indeed, we have $h^\Gamma(\cdot,\sF) \in H^\Gamma$, and every $h \in H^\Gamma$ satisfies that
$\Ex{s \sim \sF}{h(s)} \ge \lb, \, \forall \sF \in \scF~.$
Observe that the second family of conditions is the stated assumption of the lemma. This concludes the proof. 
\end{proof}

\begin{lemma}
\label{lem:negative}
If there exists a $\Gamma$-approximate monotone hidden pricing mechanism, then we have 
\[
\Ex{s \sim \lF}{h^{\Gamma}(s,\lF)} \le \ub, \quad \forall \lF \in \lcF~.
\]
\end{lemma}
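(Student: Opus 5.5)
The plan is to fix an arbitrary $\Gamma$-approximate monotone hidden pricing mechanism. Let $h$ be its proper pricing rule and $p(\F)=\E_{s\sim\F}[h(s,\F)]$ the induced concave monotone policy (\Cref{thm:scoring}). Since $p$ is monotone and $\Gamma$-approximate, the ``only if'' direction of \Cref{cor:monotone} gives $p(\sF)\ge\lb$ for all $\sF\in\scF$ and $p(\lF)\le\ub$ for all $\lF\in\lcF$. Now fix any $\G:=\lF\in\lcF$. Because $\E_{s\sim\G}[h^\Gamma(s,\G)]=\min_{g\in H^\Gamma}\E_{s\sim\G}[g(s)]$, it suffices to exhibit some $g\in H^\Gamma$ with $\E_{s\sim\G}[g(s)]\le p(\G)$. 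Then $\E_{s\sim\G}[h^\Gamma(s,\G)]\le p(\G)\le\ub$, which is the claim.

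The natural candidate is the slice $h_\G(\cdot):=h(\cdot,\G)$, which satisfies $\E_{\G}[h_\G]=p(\G)$. It also satisfies the first constraint of $H^\Gamma$: for every $\sF\in\scF$, properness gives $\E_{\sF}[h_\G]\ge\E_{\sF}[h(\cdot,\sF)]=p(\sF)\ge\lb$. The main obstacle is that $h_\G$ need not be non-decreasing. Monotonicity of $p$ does not obviously force the supergradient slice to be monotone, especially off the support of $\G$. So I will monotonize it downward by setting
\[
g(s):=\inf_{t\ge s} h_\G(t).
\]
This $g$ is non-decreasing and satisfies $g\le h_\G$ pointwise, so $\E_\G[g]\le\E_\G[h_\G]=p(\G)$. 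The remaining task is to show that the lowering preserves the first constraint. Here monotonicity of $p$ is what makes it work.

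To check feasibility, fix $\sF\in\scF$ and $\epsilon>0$. Choose a measurable selection $\tau(s)\ge s$ with $h_\G(\tau(s))\le g(s)+\epsilon$. If needed, I will restrict to a countable dense grid, or use the fact that $\sF$ has bounded support $[0,a]$ and approximate $h_\G$ by simple functions. Let $\F'$ be the pushforward of $\sF$ under $\tau$. Since $\tau(s)\ge s$, a coupling gives $\F'\distge\sF$. Then
\[
\E_{\sF}[g]\ \ge\ \E_{\F'}[h_\G]-\epsilon\ \ge\ \E_{\F'}[h(\cdot,\F')]-\epsilon\ =\ p(\F')-\epsilon\ \ge\ p(\sF)-\epsilon\ \ge\ \lb-\epsilon,
\]
using properness for the second inequality and monotonicity of $p$ for the third. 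Letting $\epsilon\to0$ gives $g\in H^\Gamma$.

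Technical points to handle along the way are these. First, the infimum could be $-\infty$, or $\F'$ could lie outside the domain where $h$ is defined. Properness holds for all distributions in $\Delta(\R)$, and $g\ge$ the value needed on the relevant range by the displayed chain, so I expect this to be harmless. Second, the $\argmin$ defining $h^\Gamma$ might not be attained. In that case the same argument bounds the infimum, which is all that is used.
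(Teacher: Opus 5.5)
Your argument is correct, and it takes a genuinely different route from the paper. The paper works through LP duality. It sets up the primal $\mathrm{P}(\F)$ over mixtures $\G$ of minimal distributions $\sF$ dominated by $\F$. It then proves $\mathrm{P}(\lF)=\mathrm{D}(\lF)$ for $\lF\in\lcF$: it truncates $K$, applies Sion's theorem, and extends the truncated dual solution with a linear tail of slope $\alpha^{-1/(1-\alpha)}$, which needs $\alpha>0$ and \Cref{lem:lb-lipschitz}. Finally it uses monotonicity plus concavity of $p$ (Jensen over the mixture) to get $\ub\ge p(\lF)\ge\mathrm{P}(\lF)$.

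You never leave the dual side. The mechanism's own slice $h(\cdot,\lF)$ already meets the $\mathsf{LB}$ constraints by properness. You restore monotonicity by taking the running infimum and pushing $\sF$ forward along $\tau(s)\ge s$. Properness at $\tau_{\#}\sF$ together with monotonicity of $p$ then does the work that duality and concavity do in the paper.

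Your route is shorter. It needs no compactness, no minimax theorem, and no attainment of a primal optimum, which the paper tacitly assumes. It works for all $\alpha\in[0,1]$. It actually proves $\E_{s\sim\F}[h^\Gamma(s,\F)]\le p(\F)$ for every $\F\in\Delta(\R)$, i.e., the pointwise minimality of the monotone concave envelope that the paper only asserts informally.

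What the paper's route buys is an impossibility certificate. Any mixture $\G$ with $\E_{\G}[\lb]>\ub$ rules out $\Gamma$, and this is exactly what Appendix~\ref{sec:numerical} discretizes to get the $0.796$ bound. Your argument produces no such computable object.

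Your two loose ends close easily.
\begin{itemize}
\item \emph{Finiteness.} Properness with true distribution $\delta_t=\widecheck{\F}_{0,t}\in\scF$ gives $h(t,\lF)\ge h(t,\delta_t)=p(\delta_t)\ge\Gamma t$. Hence $g(s)\ge\Gamma s\ge 0$ on $[0,\infty)$; extend $g$ constantly to the negative reals. Every expectation in your chain is then well defined.
\item \emph{Measurability.} Since $g$ is non-decreasing, its level sets $g^{-1}([j\epsilon,(j+1)\epsilon))$ are intervals. Take $\tau$ constant on countably many subintervals of each level set, with right endpoints increasing to its supremum. This works with $2\epsilon$ in place of $\epsilon$, so no measurable-selection theorem is needed.
\end{itemize}
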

\begin{proof}{Proof.}
For an arbitrary distribution $\F \in \Delta(\R)$, consider the following optimization. We abuse P($\F$) to denote both the program and its value. 
\begin{align}
\label{lp:primal}
\max_{\G \in \Delta(K)}: \quad  &  \Ex{(\lambda,a) \sim \G}{\lb} = \int_K \lb \dd \G \tag{P($\F$)}\\
\text{subject to}: \quad & \int_{K} \left( 1-\sF(s) \right)  \dd \G \le 1-\F(s) & \forall s \in \R \notag
\end{align}
Recall that $h^\Gamma(\cdot,\F)$ is the optimal solution of the following program:
\begin{align}
\label{lp:dual}
\min_{h:\R \to \R}: \quad  & \Ex{s \sim \F}{h(s)} \tag{D($\F$)}\\
\text{subject to}: \quad & \Ex{s \sim \sF}{h(s)} \ge \lb & \forall (\lambda, a) \in K \notag \\
    & h(\cdot) \text{ is non-decreasing} \notag
\end{align}

We verify that the two programs form a primal-dual pair.
Let the Lagrangian multiplier be a non-negative finite Borel measure $\mu \in \mathcal{M}(\R)$. The Lagrangian is
\begin{align*}
\mathcal{L}(\G,\mu) & = \int_{K} \lb \dd \G - \int_{\R} \int_{K} \left( 1-\sF(s) \right)  \dd \G \dd \mu + \int_{\R} (1-\F(s)) \dd \mu \\
& = \int_{K} \left( \lb - \int_{\R} \left( 1-\sF(s) \right) \dd \mu \right) \dd \G + \int_{\R} (1-\F(s)) \dd \mu~.
\end{align*}
Because $\G$ ranges over probability measures on $K$,
\[
\sup_{\G} \mathcal{L}(\G,\mu) = \int_{\R} (1-\F(s))\dd \mu + \sup_{(\lambda,a) \in K} \left( \lb - \int_{\R_+} \left( 1-\sF(s) \right) \dd \mu \right)~.
\]
Thus, the dual problem is
\begin{align*}
\min_{\mu}: \quad  & \int_{\R} (1-\F(s)) \dd \mu \\
\text{subject to}: \quad & \int_{\R} \left( 1-\sF(s) \right) \dd \mu \ge \lb & \forall (\lambda,a) \in K
\end{align*}
Define the cumulative $h(s) = \mu([0,s])$. Through integration-by-parts, we have that 
\[
\int_{\R} (1-\F(s)) \dd \mu = \int_{\R} h(s) \dd \F(s) \quad \text{and} \quad \int_{\R} (1-\sF(s)) \dd \mu = \int_{\R} h(s) \dd \sF(s)~.
\] 
Moreover, $\mu \ge 0$ implies that $h$ is non-negative and non-decreasing. Therefore, the program is equivalent to \eqref{lp:dual}. 

As a consequence, we have weak duality that \ref{lp:primal} $\le$ \ref{lp:dual}. 
However, we cannot directly invoke Sion’s minimax theorem to claim strong duality, even though $\mathcal{L}(\G,\mu)$ is bilinear in $\G$ and $\mu$. The difficulty lies in the fact that the domain $K = \{(\lambda,a): \alpha \lambda a\le 1\}$ is unbounded. Consequently, the feasible set of distributions supported on $K$ is not compact under the weak topology, and Sion's compactness requirement fails. 
In the next lemma, we prove that strong duality holds for every $\lF \in \lcF$ by introducing a carefully chosen truncation of $K$, which restores compactness and allows the minimax argument to go through.

\begin{lemma} \label{lem::primal-dual-used-discrete}
For every $\lF \in \lcF$, we have
\begin{equation}
\label{eqn:duality}
\mathrm{P}(\lF) = \mathrm{D}(\lF)~.
\end{equation}    
\end{lemma}
\begin{proof}{Proof.}
Fix an arbitrary $\bar{a} \ge b$, and define the truncated domain $\bar{K} = K \cap \{(\lambda,a) : a \in [b, \bar{a}] \}$. 
Since $\alpha \lambda a \le 1$ and $a \ge b$, the variable $\lambda$ is automatically bounded by $1/\alpha b$; hence $\bar{K}$ is compact.
Let $\bar{\mathrm{P}}(\lF)$ be the primal program obtained by restricting $\G$ to distributions supported on $\bar{K}$, and let $\bar{\mathrm{D}}(\lF)$ be the corresponding dual program.
Then, we have
\[
\mathrm{P}(\lF) \ge \bar{\mathrm{P}}(\lF) = \bar{\mathrm{D}}(\lF)~.
\]
The inequality follows because $\bar{K} \subseteq K$, so the feasible set of the truncated primal is a subset of that of (P($\lF$)). The equality follows from Sion’s minimax theorem: on the truncated domain $\bar{K}$, compactness is restored, and the Lagrangian remains convex–concave and bilinear in $\G$ and $\mu$, ensuring strong duality.

Let $\bar{h}$ be the optimal solution of ($\bar{\mathrm{D}}(\lF)$). Then $\bar{h}$ is non-decreasing and satisfies that 
\[
\Ex{x \sim \sF}{\bar{h}(x)} \ge \lb, \qquad \forall (\lambda, a) \in \bar{K}~.
\]
We extend $\bar{h}$ to a feasible solution of ($\mathrm{D}(\lF)$). Define $h$ as follows:\footnote{We restrict attention to $\alpha \in (0,1]$, as the expression $\alpha^{-\frac{1}{1-\alpha}}$ is ill-defined at $\alpha=0$. In Section~\ref{sec:negative}, we show that when $\alpha=0$, corresponding to the class of regular distributions, hidden pricing mechanisms—equivalently, concave pricing policies—cannot achieve any non-trivial (i.e., strictly positive) approximation ratio.}
\[
h(s) = \begin{cases} \bar{h}(b) & s \in [0,b] \\
\bar{h}(s) & s \in [b, \bar{a}] \\
\bar{h}(\bar{a}) + \alpha^{-\frac{1}{1-\alpha}} \cdot (s - \bar{a}) & s \in [\bar{a}, \infty) 
\end{cases}
\]
The monotonicity of $h$ follows straightforwardly by the definition. Next, we verify that
\[
\Ex{s \sim \sF}{h(s)} \ge \lb, \qquad \forall (\lambda, a) \in K~.
\]
\begin{itemize}
\item For $a \le b$, we have
\[
\Ex{s \sim \sF}{h(s)} = \bar{h}(b) = \Ex{x \sim \widecheck{\F}_{0,b}}{\bar{h}(s)} \ge \mathsf{LB}(\widecheck{\F}_{0,b},\Gamma) \ge \mathsf{LB}(\sF,\Gamma)~.
\]
Here, $\widecheck{\F}_{0,b}$ is the point-mass distribution at $b$. The last inequality is due to $\widecheck{\F}_{0,b} \distge \sF$.

\item For $a \in [b, \bar{a}]$, we have
\[
\Ex{s \sim \sF}{h(s)} \ge \Ex{s \sim \sF}{\bar{h}(s)} \ge \lb~,
\]
where the first inequality holds by the fact that $h(s) \ge \bar{h}(s)$, for all $s$.

\item For $a \ge \bar{a}$, we have
\begin{align*}
\Ex{s \sim \sF}{h(s)} & \ge \Ex{s \sim \sF[\bar{a}]}{\bar{h}(s)} + \alpha^{-\frac{1}{1-\alpha}} \cdot \int_{\bar{a}}^{a} (s-\bar{a}) \dd \sF(s) \\
& = \Ex{s \sim \sF[\bar{a}]}{\bar{h}(s)} + \alpha^{-\frac{1}{1-\alpha}} \cdot \int_{\bar{a}}^{a} \cE(\lambda s) \dd s \tag{integration by parts} \\
& \ge \mathsf{LB}(\sF[\bar{a}],\Gamma) + \alpha^{-\frac{1}{1-\alpha}} \cdot \cE(\lambda a) \cdot (a-\bar{a}) \tag{by the monotonicity of $\cE$}\\
& \ge \mathsf{LB}(\sF[\bar{a}],\Gamma) + \alpha^{-\frac{1}{1-\alpha}} \cdot \cE(\alpha^{-1}) \cdot (a-\bar{a}) \tag{by $\alpha \lambda a \le 1$}\\
& = \mathsf{LB}(\sF[\bar{a}],\Gamma) + (a-\bar{a}) \\
& \ge \lb~. \tag{by Lemma~\ref{lem:lb-lipschitz}}
\end{align*}
\end{itemize}
Therefore, $h \in H^\Gamma$ and we have
\begin{align*}
\mathrm{D}(\lF) \le \Ex{\lF}{h(s)} & = \Ex{\lF}{\bar{h}(s)} + \alpha^{-\frac{1}{1-\alpha}} \cdot \int_{\bar{a}}^{\infty} (s-\bar{a}) \dd \lF(s) \\
& = \bar{\mathrm{D}}(\lF) + \alpha^{-\frac{1}{1-\alpha}} \cdot \int_{\bar{a}}^{\infty} \cE(\lambda (s-b)) \dd s \\
& \le \mathrm{P}(\lF) + \alpha^{-\frac{1}{1-\alpha}} \cdot \int_{\bar{a}}^{\infty} \cE(\lambda (s-b)) \dd s~.
\end{align*}
As $\bar{a} \to \infty$, the tail integral tends to $0$, and so $\mathrm{D}(\lF) \le \mathrm{P}(\lF)$. Combined with weak duality, this yields $\mathrm{D}(\lF) = \mathrm{P}(\lF)$, completing the proof.
\end{proof}

Finally, we prove \Cref{lem:negative}.
Suppose there exists a pricing rule $h$ achieving an approximation ratio of $\Gamma$. Let $p(\F) = \Ex{s \sim \F}{h(s,\F)}$ denote the corresponding expected posted price, which is monotone and concave.
For any given $\lF \in \lcF$, let $\G(\lambda,a)$ be the optimal solution of the primal program ($\mathrm{P}(\lF)$). Consider the random mixture of distributions $\{\sF\}$, where $(\lambda,a)$ is drawn from $\G$; we denote the resulting distribution by $\tilde\F$. The feasibility constraints $1-\tilde\F(s) = \int_{K} \left( 1-\sF(s) \right) \dd \G \le 1-\lF(s)$ of the program ensure that the distribution $\tilde\F$ is stochastically dominated by $\lF$. Therefore,
\begin{align*}
\ub & \ge p(\lF) \tag{by Lemma~\ref{lem:lb-ub}} \\
& \ge p(\tilde\F) \ge \Ex{(\lambda, a)\sim \G}{p(\sF)} \tag{by the monotonicity and concavity of $p$} \\
& \ge \Ex{(\lambda, a)\sim \G}{\lb} \tag{by Lemma~\ref{lem:lb-ub}} \\
& = \mathrm{P}(\lF) \tag{by the definition of $\G$}\\
& = \Ex{s \sim \lF}{h^\Gamma(s,\lF)} \tag{by strong duality \eqref{eqn:duality}}~,
\end{align*}
which concludes the proof.
\end{proof}

Since the expressions for the lower and upper bounds $\mathsf{LB}$ and $\mathsf{UB}$ involve solutions to equations with Lambert W functions when $\alpha=1$, the approximation ratio $\Gamma^*$ is characterized implicitly. While these expressions are not elementary, they enable efficient numerical evaluation. In Appendix~\ref{sec:numerical}, we 
use computer-assisted experiments to pin down the value of $\Gamma^*$ for MHR distributions, showing that it lies between $0.79$ and $0.796$. The same approach extends to computing $\Gamma^*(\alpha)$ for any $\alpha \in [0,1]$.

\begin{proposition}
    Over the family of MHR distributions, the optimal monotone hidden pricing mechanism achieves an approximation ratio in between $0.79$ and $0.796$.
    \label{prop: optimal numerical}
\end{proposition}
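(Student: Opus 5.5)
By Theorem~\ref{thm:optimal}, the optimal monotone hidden pricing ratio for $\alpha=1$ equals $\Gamma^*$. So the proposition reduces to a two-sided numerical certification: $0.79 \le \Gamma^* \le 0.796$.

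\emph{Lower bound $\Gamma^*\ge 0.79$.} I will use Lemma~\ref{lem:positive} together with Lemma~\ref{lem::primal-dual-used-discrete}. With $\Gamma=0.79$, it suffices to show that for every $\lF\in\lcF$ the dual value $\mathrm{D}(\lF)=\Ex{s\sim\lF}{h^{\Gamma}(s,\lF)}$ is at most $\ub$.

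Scale invariance removes one parameter. Rescaling values by $c$ maps $\sF\mapsto\widecheck{\F}_{\lambda/c,ca}$, and both $\mathsf{LB}$ and $\mathsf{UB}$ scale linearly. Hence I may normalize $b=1$ and only need to range over $\lambda\ge 1$.

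Rather than computing $h^\Gamma$ exactly, I only need some feasible $h\in H^\Gamma$ with $\E_{\lF}[h]\le\ub$, since $\mathrm{D}$ is a minimum. For each $\lambda$ on a fine grid I will construct such an $h$ by solving a finite LP:
\begin{itemize}
\item $h$ is piecewise linear and non-decreasing on a grid over $[b,\bar a]$;
\item the constraints $\E_{\sF}[h]\ge\lb$ are imposed on a grid of $(\lambda',a)\in\bar K$;
\item $h$ is extended to $[\bar a,\infty)$ by the linear tail $e\cdot(s-\bar a)$, exactly as in the proof of Lemma~\ref{lem::primal-dual-used-discrete}.
\end{itemize}
The Lipschitz bound of Lemma~\ref{lem:lb-lipschitz}, together with monotonicity of $h$ and monotonicity of $\E_{\sF}[h]$ in $(\lambda',a)$ under stochastic dominance, lets the grid constraints imply the continuum constraints after adding a small slack margin. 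For $\alpha=1$, the values $\mathsf{LB}$ and $\mathsf{UB}$ come from solving $v e^{-\lambda v}=\Gamma\cdot\opt$, which is Lambert-W. These are monotone in their parameters and can be bracketed rigorously.

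\emph{Upper bound $\Gamma^*\le 0.796$.} I will use Lemma~\ref{lem:negative} in its primal form. At $\Gamma=0.796$, it suffices to exhibit one $\lF\in\lcF$ and a finitely supported $\G$ on $K$ with
\[
\sum_i g_i\,\cE(\lambda_i s)\ind[s<a_i]\le 1-\lF(s)\quad\forall s,
\qquad
\sum_i g_i\,\mathsf{LB}(\widecheck{\F}_{\lambda_i,a_i},\Gamma)>\ub .
\]
Weak duality then gives $\mathrm{D}>\ub$, so no monotone hidden mechanism is $0.796$-approximate.

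The candidate $\lF$ and the mixture come from the LP solved in the lower-bound computation: I take the most binding $\lambda$ and read off the dual solution there. Checking the pointwise constraint for all $s$ is a finite verification on intervals, because each component survival function is monotone and the mixture has finitely many atoms.

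The main obstacle is the lower bound. It requires certifying a statement uniformly over the continuum $\lambda\ge1$ and over the infinite constraint family, whereas the upper bound needs only a single certificate. I would handle the continuum in two ways:
\begin{itemize}
\item Exploit monotonicity in $\lambda$: both $\ub$ and the dual value move monotonically when $\lF$ is perturbed, so interval enclosures on each grid cell suffice.
\item Treat the regimes $\lambda\to1$ and $\lambda\to\infty$ analytically. As $\lambda\to\infty$ the distribution approaches a point mass and $h\equiv$ const works.
\end{itemize}
The leftover gap between $0.79$ and $0.796$ is the slack absorbed by discretization error.
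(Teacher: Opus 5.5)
Your route is the paper's own (Appendix~\ref{sec:numerical}). You normalize one parameter away: the paper fixes $\lambda=1$ and varies $b\ge 1$, and your $b=1$, $\lambda\ge 1$ is equivalent. You certify $\Gamma=0.79$ with an LP over non-decreasing grid functions, using the slope-$e$ tail from Lemma~\ref{lem::primal-dual-used-discrete}. You pass from grid to continuum by monotonicity of $\E_{\widecheck{\F}_{\lambda,a}}[h]$, $\mathsf{LB}$ and $\mathsf{UB}$ in the parameters. You certify $0.796$ with one finitely supported mixture dominated by a single $\widehat{\F}$; the paper uses $b=1.95$, i.e.\ your $\lambda=1.95$.

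\textbf{The gap.} One step fails as written: ``as $\lambda\to\infty$ \dots\ $h\equiv$ const works.'' No constant function lies in $H^\Gamma$. For every $a$ we have $(0,a)\in K$, and $\widecheck{\F}_{0,a}$ is the point mass at $a$. Feasibility therefore forces $h(a)\ge\mathsf{LB}(\widecheck{\F}_{0,a},\Gamma)=\Gamma a$ for all $a$.

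This regime is not optional, because no finite grid covers $\lambda\in[\lambda_{\max},\infty)$. You need one explicit $h$ that is feasible and achieves the bound uniformly there.

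The paper uses mean pricing, $h(s)=c\,s$ with $c=\max_{a\in[0,1]}\mathsf{LB}(\widecheck{\F}_{1,a},\Gamma)/(1-e^{-a})<0.9$. This $h$ is feasible on all of $K$ by scaling, since both $\E_{\widecheck{\F}_{\lambda,a}}[s]=(1-e^{-\lambda a})/\lambda$ and $\mathsf{LB}$ scale by $1/\lambda$. In your normalization it gives $\E_{\widehat{\F}_{\lambda,1}}[c\,s]=c(1+1/\lambda)<1\le\mathsf{UB}(\widehat{\F}_{\lambda,1},\Gamma)$ once $\lambda\ge 10$. An alternative also works for large $\lambda$: the constant $\Gamma\bar a$ on $[0,\bar a]$ with $\bar a=1/\Gamma$, followed by the slope-$e$ tail. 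A global constant does not.

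\textbf{A secondary issue: where the upper-bound certificate comes from.} The dual of your lower-bound LP does not automatically supply one, for three reasons.
\begin{itemize}
\item At $\Gamma=0.79$ that LP's value is at most $\mathsf{UB}$ by construction, so it certifies nothing.
\item Even re-solved at $0.796$, its right-hand sides carry the upward rounding and Lipschitz slack that made the lower bound rigorous. The dual objective therefore overstates the true $\sum_i g_i\,\mathsf{LB}(\widecheck{\F}_{\lambda_i,a_i},\Gamma)$.
\item With piecewise-linear $h$, the dual constraints only compare survival functions averaged over grid cells, not pointwise.
\end{itemize}
The paper instead discretizes the primal P directly at $\Gamma=0.796$. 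It uses exact $\mathsf{LB}$ values at the support points and the shifted right-hand side $1-\widehat{\F}_{1,b}(s+\Delta_a)$ at grid points $s$. By monotonicity of both survival functions, this guarantees pointwise dominance on every cell. The resulting value, above $2.3765$, exceeds $\mathsf{UB}(\widehat{\F}_{1,1.95},0.796)\le 2.3756$.
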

 
\section{Negative Results}
\label{sec:negative}
While monotonicity is a standard and practically desirable assumption, it is natural to ask whether better performance could be achieved by removing this constraint.
To complement our positive results, we derive upper bounds on the approximation ratio attainable by hidden pricing mechanisms, equivalently, concave pricing policies, as well as by arbitrary prior‐independent mechanisms.
Our first upper bound applies to all concave pricing policies. It demonstrates that restricting attention to monotone policies incurs only a negligible loss in approximation performance, thereby justifying the practical focus on monotonicity.
\begin{theorem}
    No concave pricing policy can achieve an approximation ratio better than $0.801$ (resp. $0$) for MHR distributions (resp. regular distributions).
\end{theorem}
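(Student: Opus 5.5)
The plan is to exploit concavity along mixtures. A concave $p$ cannot be monotone-envelope-free, but concavity alone already gives a useful inequality. For any distributions $\F_1,\dots,\F_k$ and weights $w_i$,
\[
p\Big(\sum_i w_i \F_i\Big) \ge \sum_i w_i\, p(\F_i).
\]
Now suppose $p$ is $\Gamma$-approximate on $\cF$. By \Cref{lem:lb-ub}, every $\F\in\cF$ must satisfy $p(\F)\in[\mathsf{LB}(\F,\Gamma),\mathsf{UB}(\F,\Gamma)]$. The contradiction we aim for uses a family of distributions $\F_1,\dots,\F_k\in\cF$ whose mixture $\bar\F=\sum_i w_i\F_i$ also lies in $\cF$, and which satisfies
\[
\mathsf{UB}(\bar\F,\Gamma) < \sum_i w_i \,\mathsf{LB}(\F_i,\Gamma).
\]
Concavity then forces $p(\bar\F)\ge \sum_i w_i p(\F_i) \ge \sum_i w_i \mathsf{LB}(\F_i,\Gamma) > \mathsf{UB}(\bar\F,\Gamma)$, which is impossible. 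The ratio $0.801$ (resp. $0$) will then be the supremum of $\Gamma$ for which no such certificate exists within the searched family.

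\emph{Regular case ($\alpha=0$).} Mixtures of regular distributions are not regular in general, so I would take the $\F_i$ to be point masses $\delta_{v_i}$, which are regular and have $\mathsf{LB}(\delta_{v},\Gamma)=\Gamma v$. For the mixture I would choose weights so that $\bar\F$ is an equal-revenue-type distribution. Concretely, take $v_i$ on a geometric grid from $1$ to $H$, with masses chosen so that $\bar\F$ approximates (a discretization of) $1-\bar\F(v)=1/v$ on $[1,H]$ with an atom at $H$. This is regular up to the discretization, or I can replace the grid by the continuous mixture $\int \delta_v\, \mathrm{d}\bar\F(v)=\bar\F$ directly, using concavity in integral form. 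Then $\sum w_i\,\mathsf{LB}(\F_i,\Gamma)=\Gamma\,\E_{\bar\F}[v]\approx \Gamma(1+\ln H)$. Meanwhile $\opt(\bar\F)=1$ and any price has revenue at most $1$, so $\mathsf{UB}(\bar\F,\Gamma)\le H$. This is not directly a contradiction, so I refine it. Any price $p\ge \Gamma(1+\ln H)$ yields revenue $1$ only if it lies in the support; we need $\Rev(p,\bar\F)\ge\Gamma$. For the exact equal-revenue distribution every $p\le H$ gives revenue $1$, so I would instead truncate the other way: use $\F_i$ concentrated below, and make $\bar\F$ have tail $\Gamma$-violating above a small threshold. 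If point masses fail, I would replace them with truncated equal-revenue pieces (which are regular) whose $\mathsf{LB}$ scales linearly in their top point. Taking $H\to\infty$ should drive the admissible $\Gamma$ to $0$. This step, finding a regular mixture whose $\mathsf{UB}$ is smaller than $\Gamma$ times the mixture mean, is where I expect the main difficulty for $\alpha=0$.

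\emph{MHR case.} Mixtures of MHR distributions need not be MHR, so the outer distribution $\bar\F$ must be checked to be MHR. I would parametrize by exponential-type boundary distributions. The $\F_i$ would be $\sF$-type (truncated exponentials and point masses, as in the proof of \Cref{lem:negative}), and $\bar\F$ would be a shifted exponential $\lF$. Its survival function is a log-concave curve and must dominate-match $\sum w_i(1-\F_i)$. Unlike \Cref{lem:negative}, we may not use domination, since monotonicity is absent. We need exact mixture equality $\bar\F=\sum w_i\F_i$, with the components chosen MHR. So I would decompose a fixed $\lF$ (or a few candidate MHR $\bar\F$) as a mixture of MHR components and maximize $\sum w_i\mathsf{LB}(\F_i,\Gamma)$ subject to $\sum w_i\F_i=\bar\F$. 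This is a linear program over measures on component parameters. I would discretize it, solve it numerically, and bisect on $\Gamma$ until the certificate inequality flips, which occurs near $0.801$. To make the bound rigorous, I would exhibit the explicit finite mixture found numerically, verify that each component is MHR and that the mixture equals $\bar\F$ exactly, and evaluate $\mathsf{LB}$ and $\mathsf{UB}$ via the Lambert-W expressions at $\Gamma=0.801$ with interval arithmetic.

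The main obstacle is choosing the decomposition. Exact mixture equality with MHR components is restrictive, so I expect the best certificates to come from mixing shifted exponentials with different shifts, equivalently point masses convolved with a common exponential. Such a mixture is MHR whenever the shift distribution is log-concave. The numeric search over the shift distribution is then the crux.
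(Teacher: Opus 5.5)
The regular case has a genuine gap. The MHR part is a workable search, but it is heavier than necessary and, as written, cannot produce the exact finite certificate you describe.

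\textbf{Regular case.} Your certificate framework is the right one: components in the class, mixture in the class, and $\sum_i w_i\,\mathsf{LB}(\F_i,\Gamma)>\mathsf{UB}(\bar{\F},\Gamma)$. The paper uses its simplest instance. Every component is a point mass, so the left side becomes $\Gamma\,\E_{s\sim\bar{\F}}[s]$, and $\bar{\F}=\widehat{\F}_{\lambda,b}\in\lcF$.

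The truncated equal-revenue mixture can never give a certificate. Up to truncation it is the member of $\lcF$ with $\lambda b=1$. Its revenue is flat on $[1,H]$, so $\mathsf{UB}(\bar{\F},\Gamma)=H$, which always exceeds $\Gamma(1+\ln H)$. Your proposed repair, replacing point masses by truncated equal-revenue pieces, targets the wrong object: the components were never the problem, the outer distribution was.

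The missing idea is a regular distribution with infinite mean whose revenue \emph{strictly decreases} past the monopoly price to a limit below $\Gamma\cdot\opt$. Take $\alpha=0$ and $\bar{\F}=\widehat{\F}_{\lambda,b}$, i.e.\ survival $1/(1+\lambda(v-b))$ on $[b,\infty)$, with $\lambda b>1/\Gamma$. Then $\opt(\bar{\F})=b$, and the revenue $v/(1+\lambda(v-b))$ decreases to $1/\lambda<\Gamma b$. Hence $\mathsf{UB}(\bar{\F},\Gamma)=\Gamma b(\lambda b-1)/(\Gamma\lambda b-1)<\infty$, while $\E_{s\sim\bar{\F}}[s]=\infty$. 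Point-mass components force $p(\bar{\F})\ge\Gamma\,\E[s]=\infty$, a contradiction for every $\Gamma>0$. If you want finite quantities, truncate at $H$: $\mathsf{UB}$ stays bounded while the mean grows like $\ln H/\lambda$.

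\textbf{MHR case.} No LP or bisection over decompositions is needed. The same point-mass certificate with $\widehat{\F}_{1,1.7}$ (mean $2.7$, $\opt=1.7$) gives $2.7\Gamma\le\mathsf{UB}(\widehat{\F}_{1,1.7},\Gamma)$. Equivalently $2.7\Gamma-1.7\le\ln(2.7/1.7)$, which fails for $\Gamma>0.801$.

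Two corrections to your plan as written:
\begin{itemize}
\item An exact finite mixture of $\widecheck{\F}$-type components equal to $\widehat{\F}_{\lambda,b}$ cannot exist. Any $\widecheck{\F}_{\lambda',a}$ with $\lambda'>0$ puts positive mass below $b$, so only point masses are admissible. A finite mixture of point masses is atomic, so it cannot equal $\widehat{\F}_{\lambda,b}$. You must use the continuous mixture $\bar{\F}=\int\delta_s\,\mathrm{d}\bar{\F}(s)$. The integral form of concavity is justified by properness: $p(\delta_s)=h(s,\delta_s)\le h(s,\bar{\F})$, so $p(\bar{\F})=\E_{s\sim\bar{\F}}[h(s,\bar{\F})]\ge\E_{s\sim\bar{\F}}[p(\delta_s)]\ge\Gamma\,\E_{s\sim\bar{\F}}[s]$.
\item Your guess that mixtures of shifted exponentials give the best certificates points the wrong way. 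For $\lambda b\ge 1$, a component $\widehat{\F}_{\lambda,b}$ contributes only $\mathsf{LB}(\widehat{\F}_{\lambda,b},\Gamma)=\Gamma b$. Splitting that same component into point masses contributes $\Gamma(b+1/\lambda)$.
\end{itemize}
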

\begin{proof}{Proof.}
Suppose a concave pricing policy $p:\Delta(\R) \to \R$ achieves a $\Gamma$ approximation ratio. Then, we have $p(\delta_v) \ge \Gamma v$ for every $v \in \R_+$.
Thus by the concavity of $p$, for every $\lF \in \lcF$, 
\[
\ub \ge p(\lF) \ge \Ex{s \sim \lF}{p(\delta_s)} \ge \Ex{s \sim \lF}{\Gamma \cdot s}~.
\]
Notice that the left-hand side is decreasing in $\Gamma$ whereas the right-hand side is increasing in $\Gamma$. Thus, every pair of $(\lambda,b)$ yields an upper bound on the achievable approximation ratio $\Gamma$.

Specifically, for $\alpha = 1$, let $\lambda = 1$ and $b=1.7$. Then, we have
\[
\mathsf{UB}(\widehat{\F}_{1,1.7},\Gamma) \ge \Gamma \cdot \Ex{s \sim \widehat{\F}_{1,1.7}}{s} = 2.7 \cdot \Gamma~.
\]
Using computer-assisted calculations, we obtain an upper bound of $0.801$.

We further note that for any $\alpha \in [0,1]$, an upper bound on the approximation ratio can be derived numerically by optimizing the choices of $\lambda$ and $b$. However, when $\alpha=0$, $\lF \in \lcF$ has infinite mean, only a trivial approximation ratio $\Gamma=0$ can be attained over the class of regular distributions.
\end{proof}

Our second upper bound concerns arbitrary prior independent mechanisms. That is, we provide upper bound for the program~\eqref{lp:opt}.
As discussed in Section~\ref{sec:example}, it is somewhat surprising that for uniform distributions, hidden pricing is already optimal among all feasible mechanisms. 

Motivated by the positive part of \Cref{thm: uniform}, we consider the ambiguity set $\cF$ containing the following distributions:
\begin{itemize}
    \item $U[0.6,1]$, the uniform distribution on $[0.6,1]$;
    \item $\{\delta_v \mid v \in [0.6,1] \}$, a set of point mass on value $v$.
\end{itemize}
\Cref{thm: uniform} shows that our hidden pricing mechanism achieves an exact $\frac{7}{8}$-approximation on all distributions in this set.
For comparison, \cite{stoc/FengHL21} proved an impossibility result of $0.934$ using a similar ambiguity set (with the constant $0.6$ replaced by $0.5$). 
The obstacle posed by this ambiguity set is the need to deter the buyer from reporting the distribution $U[0.6,1]$ when the true distribution is a point mass $\delta_v$. The reverse manipulation is impossible: when the true distribution is $U[0.6,1]$, any report of a point mass $\delta_{v'}$ is detected with probability one by the seller. As a result, ensuring incentive compatibility against this single type of misreport fully characterizes the challenge of eliciting the buyer's distributional information.

In addition, our technique yields sharper impossibility results for all $\alpha$-regular distributions. Specifically, we obtain a unified analysis by replacing the uniform distribution $U[0.6,1]$ by a carefully chosen $\alpha$-regular distribution, and considering the ambiguity set that also includes all point-mass distributions.
\begin{theorem}
\label{thm:negative}
   No prior-independent mechanism can achieve an approximation ratio better than $\frac{7}{8}$ (resp. $0.838$ and $0.822$) for uniform distributions (resp. MHR and regular distributions).
\end{theorem}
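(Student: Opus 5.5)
I will prove the impossibility by restricting nature to a small ambiguity set and solving the resulting program \eqref{lp:opt} exactly. An upper bound derived on a subset of $\cF$ remains valid for $\cF$ itself.

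For uniform distributions, take $\cF_0 = \{U[0.6,1]\} \cup \{\delta_v \mid v \in [0.6,1]\}$. Under $\delta_v$ the sample equals $v$ almost surely. So for a report of $\delta_v$ we may define the seller's allocation and payment as functions of $(s,v')$, and any report $\delta_{v'}$ with $s\neq v'$ can be punished. Hence the only relevant IC constraint is that a buyer whose truth is $\delta_v$ (value $v$, sample $s=v$) must not gain by reporting $(v',U[0.6,1])$. Let
\[
X(v') := x(v,v',U[0.6,1]) \quad\text{and}\quad T(v') := t(v,v',U[0.6,1]),
\]
evaluated at sample $s=v$, and write $\tilde X(v') = \E_{s\sim U}[x(s,v',U)]$ and $\tilde T(v')$ for the corresponding expected quantities under truthful reporting of $U$. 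The constraints are then:
\begin{itemize}[leftmargin=2em]
\item Myerson-style IC/IR within the report $U[0.6,1]$, which makes $\tilde X$ monotone with a payment identity.
\item Revenue under $\delta_v$ at least $\Gamma v$. Since the deviation guarantees utility at least $\sup_{v'} \bigl(v\,x(v,v',U)-t(v,v',U)\bigr)$, the truthful utility must be at least this value. Truthful revenue is at most $v$ minus that utility, so the requirement becomes $\sup_{v'}\bigl(v\,x(v,v',U) - t(v,v',U)\bigr) \le (1-\Gamma)v$ for every $v\in[0.6,1]$.
\item Revenue under $U$ at least $\Gamma\cdot \opt(U[0.6,1]) = 0.6\,\Gamma$.
\end{itemize}

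This is an infinite LP in the functions $x(s,v',U)$ and $t(s,v',U)$. I would relax it by choosing, for each sample $s$, a single constraint $v'$, namely the buyer types that would be most tempted. I then dualize, attaching a measure over (true point mass, deviation type) pairs together with multipliers on the $U$-revenue constraint. Guided by the uniform-case proof, whose tight points are $a=0.6b$ and $a=b$, the dual certificate should put weight on $U[0.6,1]$ and on the point masses. For each fixed $s$ the integrand becomes linear in $x\in[0,1]$ and in $t$; nonpositivity of the coefficients forces $\Gamma\le 7/8$. I expect the main obstacle to be guessing this dual measure. I would try weights proportional to the density of the sample at $v$, so that they cancel the $1/(b-a)$ factors, and then verify pointwise optimality. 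Matching the exact value $7/8$ against \Cref{thm: uniform} serves as a check.

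For MHR and regular distributions, I would replace $U[0.6,1]$ by a parametrized $\alpha$-regular distribution, such as a truncated or shifted generalized Pareto $\lF$ or $\sF$, together with all point masses. The same reduction yields a one-parameter family of LPs, which I would discretize and solve numerically for the best parameter, obtaining $0.838$ and $0.822$. The rigour concern is that the discretized LP must be a genuine relaxation: I would keep only finitely many constraints, so its optimum upper-bounds $\Gamma$, and round the resulting bound up.
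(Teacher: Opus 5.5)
\textbf{There is a genuine gap in both parts.} In the uniform case your setup matches the paper's: the same ambiguity set, the same observation that only the deviation $\delta_v\to(v',U[0.6,1])$ matters, and the same constraint $v\,x(v,v',U)-t(v,v',U)\le(1-\Gamma)v$. But you never exhibit the dual certificate, which is the whole proof.

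Your density weights are the right ingredient. The decisive choice, however, is left open, and the form you sketch does not work as stated. The constraint holds pointwise in $s$, so a deviation $v'$ chosen as a function of the sample (e.g.\ $v'=s$, or the ``most tempted'' type) binds $x,t$ only on a null set of $(s,v')$ pairs. The mechanism can redefine $x,t$ there without changing any interim quantity. Nor can payments be disposed of ``for each fixed $s$'': they are linked only across types, through the envelope identity.

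The missing idea is to use one fixed deviation for every sample, namely the top type $\bar v=\max\supp(\F_0)$, and then average over $s\sim\F_0$. Write $x_0(v),t_0(v)$ for the interim quantities under report $\F_0$. The bounds $x_0(s,\bar v)\le1$ and $t_0(\bar v)\le\bar v\,x_0(\bar v)-\int_0^{\bar v}x_0(u)\,du$ give
\[
(1-\Gamma)\E[s]\ \ge\ \E_s\bigl[(s-\bar v)\,x_0(s,\bar v)\bigr]+\bar v\,x_0(\bar v)-t_0(\bar v)\ \ge\ \E[s]-\bar v+\int_0^{\bar v}x_0(u)\,du,
\]
that is, $\bar v-\int_0^{\bar v}x_0\ge\Gamma\,\E[s]$. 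Add $c$ times this to Myerson's bound $\Gamma\,\opt(\F_0)\le\int_0^{\bar v}\bigl(vf_0(v)-(1-\F_0(v))\bigr)x_0(v)\,dv$, and maximize pointwise over $x_0(v)\in[0,1]$. This yields
\[
(c\,\E[s]+\opt(\F_0))\,\Gamma\le c\,\bar v+\int_0^{\bar v}\bigl(vf_0(v)-(1-\F_0(v))-c\bigr)^+dv,
\]
which for $U[0.6,1]$ and $c=1$ reads $1.4\,\Gamma\le1.225$.

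For MHR and regular distributions the gap is larger. The constants $0.838$ and $0.822$ are not what a generic search over $\sF$ or $\lF$ returns. They equal $\min_{c\ge\alpha/(2-\alpha)}\frac{c+1}{c(c+1)\ln\frac{c+1}{c}+1}$ at $\alpha=1$ and $\alpha=0$. They come from the specific choice $\F_0(v)=(c+1)(1-1/v)$ on $[1,\frac{c+1}{c}]$, built so that $vf_0(v)-(1-\F_0(v))\equiv c$ and the positive-part integral vanishes. Then $\E[s]=(c+1)\ln\frac{c+1}{c}$, $\opt(\F_0)=1$, and $\alpha$-regularity of $\F_0$ is exactly $c\ge\alpha/(2-\alpha)$.

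Your suggested families cannot play this role:
\begin{itemize}
\item For $\F_0=\sF$, the monopoly price is the top atom $a$. Posting price $a$ on $\F_0$-reports and charging $u$ on $\delta_u$-reports is incentive compatible, since point-mass buyers have values $\le a$. It extracts $\opt$ on every distribution in the set, so that instance certifies nothing.
\item $\lF$ has unbounded support, and infinite mean when $\alpha=0$. So neither the top-type deviation nor the averaged constraint is available, and nothing suggests the claimed values would emerge.
\end{itemize}

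Your rigour safeguard also fails. Against a continuous $\F_0$, keeping finitely many point-mass constraints (finitely many samples $s$) restricts the mechanism only on a null set and gives the trivial bound $\Gamma\le1$. The constraints must be aggregated over $s$, as above.
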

\begin{proof}{Proof.}
Consider the following set of distributions:
\[
\cF = \{\F_0\} \cup \{\delta_u \mid u \in \text{supp}(\F_0)\}~.
\]
Our analysis is unified for the three families of distributions. The choice of $\F_0$ is specified at the end of the proof to attain different ratios $\frac{7}{8}=0.875$, $0.838$, and $0.822$ for uniform distributions, MHR distributions, and regular distributions, respectively. 
We will choose $\F_0$ with a bounded support and hence, a finite mean, which guarantees the validity of the following analysis implicitly.

With respect to this specific ambiguity set, the corresponding optimization~\eqref{lp:opt} can be significantly simplified.
Notice that when the underlying distribution is $\F_0$, the buyer is only able to misreport her private value from $v$ to $v'$, but not the underlying distribution as $\delta_{v'}$, since otherwise, the seller is able to detect the misreport with probability $\P_{s}[s\ne v'] = 1$ and punish the buyer by setting $t(s,v,\delta_{v'}) = \infty$. Similarly, when the underlying distribution is $\delta_v$, the buyer is only able to misreport the underlying distribution as $\F_0$ but not $\delta_{v'}$ for the same reason.

This allows us to drop a significant number of IC constraints and only have the following:
\begin{align*}
(v,\F_0) \to (v',\F_0): \quad & \Ex{s\sim \F_0}{v \cdot x(s,v,\F_0) - t(s,v,\F_0)} \ge \Ex{s\sim \F_0}{v \cdot x(s,v',\F_0) - t(s,v',\F_0)} \\
(v,\delta_v) \to (v',\F_0): \quad & v \cdot x(v,v,\delta_v) - t(v,v,\delta_v) \ge v \cdot x(v,v',\F_0) - t(v,v',\F_0)
\end{align*}

For the purpose of establishing an upper bound of~\eqref{lp:opt}, dropping constraints can only weakly increase the objective value. The above discussion clarifies that this would not cause any loss.
Next, notice that the variables $x(v,v,\delta_v)$ and $t(v,v,\delta_v)$ only appear in the following constraints:
\begin{align*}
\text{IC } (v,\delta_v) \to (v',\F_0): \quad & v \cdot x(v,v,\delta_v) - t(v,v,\delta_v) \ge v \cdot x(v,v',\F_0) - t(v,v',\F_0) \\
\text{Approximation:} \quad & t(v,v,\delta_v) \ge \Gamma \cdot \opt(\delta_v) = \Gamma \cdot v \\
\text{Allocation:} \quad & x(v,v,\delta_v) \le 1
\end{align*}
Combining the approximation constraint and the allocation constraint, we have that 
\[
(1-\Gamma) \cdot v \ge v \cdot x(v,v,\delta_v) - t(v,v,\delta_v) \ge v \cdot x(v,v',\F_0) - t(v,v',\F_0)~.
\]
Notice that this relaxation of the IC constraint is without loss of optimality by setting $x(v,v,\delta_v)=1$ and $t(v,v,\delta_v) = \Gamma \cdot v$, since the variables $x(v,v,\delta_v)$ and $t(v,v,\delta_v)$ are not involved in other constraints. 
After all these derivations, we reach the following optimization, where we use $x_0(s,v),t_0(s,v)$ to denote $x(s,v,\F_0),t(s,v,\F_0)$ for notation simplicity.
\begin{align*}
\max_{\Gamma,x_0,t_0}: \quad & \Gamma \\
\text{subject to}: \quad & \Ex{s\sim \F_0}{v \cdot x_0(s,v) - t_0(s,v)} \ge \Ex{s\sim \F_0}{v \cdot x_0(s,v') - t_0(s,v')} & \forall v,v' \\
& \Ex{s \sim \F_0}{v \cdot x_0(s,v) - t_0(s,v)} \ge 0 & \forall v \\
& \Ex{s,v\sim \F_0}{t_0(s,v)} \ge \Gamma \cdot \opt(\F_0) \\
& (1-\Gamma) \cdot s \ge s \cdot x_0(s,v) - t_0(s,v) & \forall s,v \\ 
& 0 \le x_0(s,v) \le 1 & \forall s,v
\end{align*}
Recall that we assume $\F_0$ has a finite mean. We further take the expectation over $s \sim \F_0$ in the fourth family of the constraints:
\begin{equation}
\label{eqn:pointmass_to_f0}
(1-\Gamma)\cdot \Ex{s \sim \F_0}{s} \ge \Ex{s\sim\F_0}{s \cdot x_0(s,v) - t_0(s,v)}~, \quad \forall v~.
\end{equation}
Then, we abuse the notation $x_0(v) := \E_{s}[x_0(s,v)]$ and $t_0(v) := \E_s[t_0(s,v)]$ to denote the expected allocation and payment.
The first two IC and IR constraints can be rewritten as
\[
v \cdot x_0(v) - t_0(v) \ge v \cdot x_0(v') - t_0(v') \quad \text{and} \quad v \cdot x_0(v) - t_0(v) \ge 0~.
\]
As an implication of the envelope theorem~\cite{milgrom2002envelope, MOR/Myerson81}, we have
\begin{equation}
\label{eqn:apply_myerson}
t_0(v) = v \cdot x_0(v) - \int_0^v x_0(u) \dd u + t_0(0) \le v \cdot x_0(v) - \int_0^v x_0(u) \dd u~. 
\end{equation}
We now apply the constraint~\eqref{eqn:pointmass_to_f0} to $\bar{v} := \max \{v \in \text{supp}(\F_0)\}$:
\begin{align*}
(1-\Gamma)\cdot \Ex{s \sim \F_0}{s} & \ge \Ex{s\sim\F_0}{s \cdot x_0(s,\bar{v}) - t_0(s,\bar{v})} \\
& = \Ex{s\sim\F_0}{(s-\bar{v}) \cdot x_0(s,\bar{v})} + \bar{v} \cdot x_0(\bar{v}) - t_0(\bar{v}) \ge \Ex{s\sim\F_0}{s} - \bar{v} + \int_0^{\bar{v}} x_0(u) \dd u~,
\end{align*}
where the last inequality holds by~\eqref{eqn:apply_myerson} and that $x_0(s,\bar{v}) \le 1$. Rearranging the equation gives that
\[
\bar{v} - \int_0^{\bar{v}} x_0(u) \dd u \ge \Gamma \cdot \Ex{s\sim\F_0}{s}~.
\]

By Myerson's lemma, the approximation constraint gives that
\[
 \Gamma \cdot \opt(\F_0) \le \Ex{v \sim \F_0}{t_0(v)} =  \Ex{v \sim \F_0}{\varphi_0(v)\cdot x_0(v)} =\int_0^{\bar{v}}\of{v-\frac{1-\F_0(v)}{f_0(v)}} \cdot x_0(v) f_0(v) \dd v  ~,
\]
where $f_0(\cdot)$ is the probability density distribution of $\F_0$ and $\varphi_0(\cdot)$ is the virtual value function of $\F_0$.
Adding up the above two inequalities with an appropriate choice of constant $c$, we have that
\begin{align*}
\left(c \cdot \Ex{s\sim \F_0}{s} + \opt(\F_0)\right) \cdot \Gamma & \le c \cdot \bar{v} + \int_0^{\bar{v}} \left(f_0(v) \cdot v - (1-\F_0(v)) - c\right) \cdot x_0(v) \dd v \\
& \le c \cdot \bar{v} + \int_0^{\bar{v}} \left(f_0(v) \cdot v - (1-\F_0(v)) - c\right)^+ \dd v~.
\end{align*}

\paragraph{Uniform Distributions.} Consider $\F_0 = U[0.6,1]$, we have that $\E[s] = 0.8$, $\opt(\F_0) = 0.6$, and
\[
f_0(v) = \frac{1}{0.4}, \quad \F_0(v) = \frac{v-0.6}{0.4}, \quad \forall v \in [0.6,1]~.
\]
By using $c=1$, we have
\[
1.4 \cdot \Gamma_{\text{uni}} \le 1 + \int_{0.6}^{1} (5v-3.5)^+ \dd v = 1.225~ \Rightarrow \Gamma_{\text{uni}} \le \frac{7}{8}~.
\]

\paragraph{$\alpha$-Regular Distributions.} 
Motivated by the inequality that we have derived, We consider $\F_0$ with 
\[
f_0(v) \cdot v - (1-\F_0(v)) - c = 0, \quad \forall v \in \supp(\F_0)~.
\]
The differential equation leads to the following distribution: 
\[
f_0(v) = \frac{c+1}{v^2}, \quad \F_0(v) = (c+1) \cdot \left(1-\frac{1}{v} \right), \quad \forall v \in \left[1,\frac{c+1}{c}\right]~.
\]
The simple calculation gives $\E[s] = (c+1) \ln\left(\frac{c+1}{c}\right)$ and $\opt(\F_0)=1$.
This leads to an upper bound of the approximation ratio
\[
\Gamma \le \frac{c+1}{c(c+1)\ln\left(\frac{c+1}{c}\right) + 1}~.
\]
A final step is to verify the $\alpha$-regularity of $\F_0$. That is, the following quantity needs to be non-decreasing for $v \in [1,\frac{c+1}{c}]$:
\[
(1-\alpha) \cdot v - \frac{1-\F_0(v)}{f_0(v)} = \frac{c}{c+1} v^2 - \alpha \cdot v~.
\]
Taking the derivative with respect to $v$, we have that $\frac{2c}{c+1}v-\alpha\ge 0$ for all $v\in[1,\frac{c+1}{c}]$, equivalent to $c \ge \frac{\alpha}{2-\alpha}$. To sum up, the approximation ratio $\Gamma_\alpha$ for $\alpha$-regular distributions is no larger than 
\[
\Gamma_\alpha \le \min_{c \ge \frac{\alpha}{2-\alpha}} \frac{c+1}{c(c+1)\ln\left(\frac{c+1}{c}\right) + 1}~.
\]
We include a plot of the function. Refer to Figure~\ref{fig:negative}. 
Specifically, for MHR distributions when $\alpha=1$, the ratio is $\sim 0.838$; for regular distributions when $\alpha=0$, the ratio is $\sim 0.822$. This concludes the proof of the theorem.
\begin{figure}[htbp]
\centering
\captionsetup{justification=centering}
\caption{Approximation ratio upper bound $\Gamma_\alpha$}
\label{fig:negative}
\includegraphics[width=0.4\textwidth,keepaspectratio]{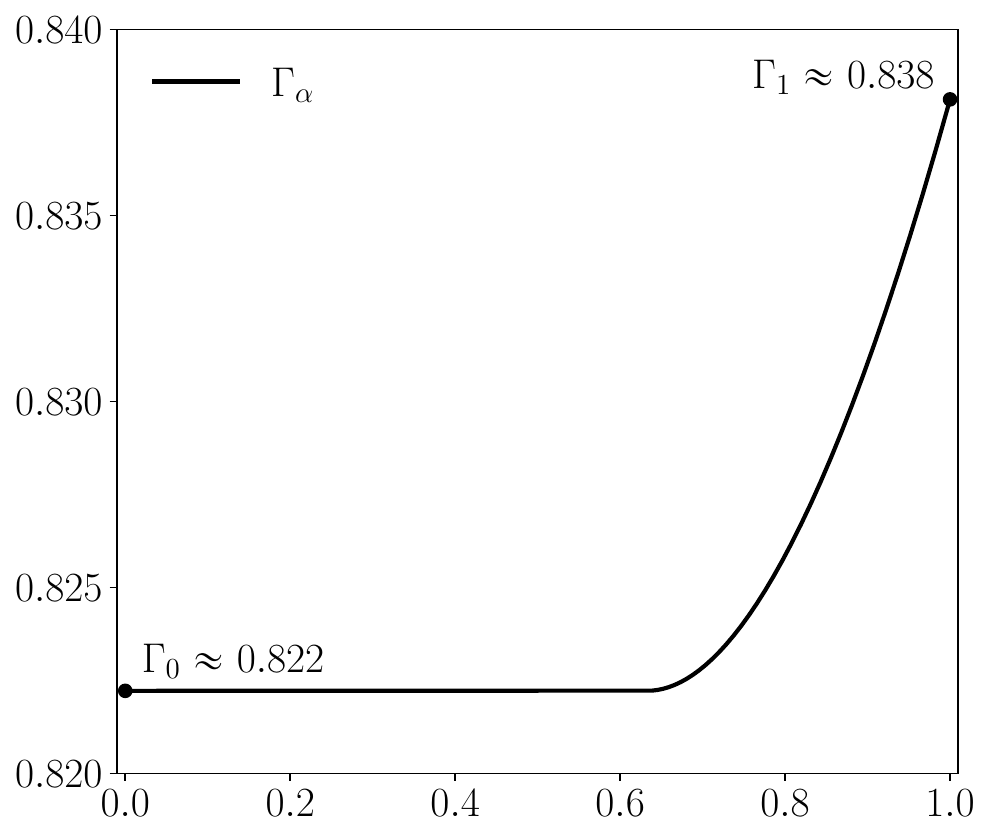}
\end{figure}
\end{proof} 
\section{Robust Pricing with a Monotone Statistic}
\label{sec:application}

In Section~\ref{sec:monotone}, we provide a general reduction for monotone pricing policies. In this section, we apply this reduction to characterize the optimal pricing policy within each cluster determined by specific statistics. We first formalize the equivalence between the monotone pricing policy based on a specific statistic and a robust pricing problem with the same statistic, and further reduce the candidate worst-case distributions from the two-dimensional space to a one-dimensional parametric family for homogeneous and monotone pricing rules. Then, we compute the optimal approximation ratios for robust pricing with $L^\eta$-norm and superquantile information, or equivalently, for the optimal $L^\eta$-norm and superquantile pricing rules.

Consider a robust pricing setting in which the seller observes a monotone statistic $\Psi(\F)=\theta$ of the value distribution. 
In other words, the seller does not know the true distribution $\F$, but only that it lies in the ambiguity set
\begin{align}
    \cF (\theta)= \left\{\F \in \cF \mid \Psi(\F) = \theta \right\}.
    \label{eq: ambiguity set}
\end{align}
The seller's objective is to design a pricing policy that maximizes the worst-case approximation ratio over this ambiguity set. We further assume the statistic $\Psi$ is homogeneous.
\begin{definition}
    A functional/statistic $\Psi: \Delta(\R) \to \R$ is homogeneous if, for any distributions $\F_1$ and $\F_2$ satisfying $\F_1(\beta v)=\F_2(v)$ for all $v$, we have $\Psi(\F_1) = \beta \Psi(\F_2)$.
\end{definition}
Each of our motivating statistics---mean, $L^\eta$-norms, superquantiles, quantiles---indeed satisfies homogeneity. 
In this section, we solve the robust pricing problem under homogeneous statistics and compute the optimal approximation ratio. Specifically, we are solving the following problem:
\begin{align}
    \max_{\tilde{p}:\R \to \R}\, \min_{\theta} \, \min_{\F\in \cF(\theta)} \frac{\Rev(\tilde{p}(\theta),\F)}{\OPT(\F)}~.
    \label{eq: policy scale invariant}
\end{align}
For instance, when $\Psi(\F) = \|\F \|_1$, problem \eqref{eq: policy scale invariant} becomes the robust pricing problem with mean information. We show that this robust pricing problem with mean information is equivalent to finding the optimal mean pricing policy in problem \eqref{eq: linear}, which is formalized in the following theorem.
\begin{theorem}
For any homogeneous statistics $\Psi$, the robust pricing problem  under $\Psi$ information in \eqref{eq: policy scale invariant} is equivalent to the $\Psi$-pricing problem \eqref{eq: linear}:
\[
\max_{\tilde{p}:\R \to \R}\, \min_{\theta} \, \min_{\F\in \cF(\theta)} \frac{\Rev(\tilde{p}(\theta),\F)}{\OPT(\F)} = \max_{\omega} \min_{\F \in \cF} \frac{\Rev(\omega\cdot \Psi(\F), \F)}{\opt(\F)}~.
\]
Moreover, the optimal approximation ratio is achieved by a linear pricing policy $\tilde{p}^*(\theta) = \omega^* \cdot \theta$ where
$\omega^*$ is an optimizer of the right-hand side.
\label{thm:scale}
\end{theorem}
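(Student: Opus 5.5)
The plan is to prove the two inequalities separately, using scale invariance of both the class $\cF$ and the ratio $\Rev(\cdot,\F)/\opt(\F)$.

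\emph{The ``$\ge$'' direction} is immediate. Any $\omega$ induces the linear policy $\tilde p(\theta)=\omega\theta$. For every $\theta$ and every $\F\in\cF(\theta)$ we have $\tilde p(\theta)=\omega\Psi(\F)$. Hence
\[
\min_\theta\min_{\F\in\cF(\theta)}\frac{\Rev(\omega\theta,\F)}{\opt(\F)}=\min_{\F\in\cF}\frac{\Rev(\omega\Psi(\F),\F)}{\opt(\F)},
\]
since the sets $\cF(\theta)$ partition $\cF$. Taking the maximum over $\omega$ gives LHS $\ge$ RHS. The same identity shows that $\tilde p^*(\theta)=\omega^*\theta$ attains the RHS value, so the ``moreover'' part follows once equality is established.

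\emph{The ``$\le$'' direction.} Fix an arbitrary $\tilde p$. It suffices to find a single $\omega$ with
\[
\min_{\F\in\cF}\frac{\Rev(\omega\Psi(\F),\F)}{\opt(\F)}\ \ge\ \min_\theta\min_{\F\in\cF(\theta)}\frac{\Rev(\tilde p(\theta),\F)}{\opt(\F)} =: \Gamma .
\]
First I need two observations. For $\beta>0$, let $\F^\beta(v)=\F(v/\beta)$.
\begin{itemize}
\item $\alpha$-regularity is scale invariant: $\varphi_{\alpha,\F^\beta}(v)=\beta\varphi_{\alpha,\F}(v/\beta)$. Alternatively, this follows from \Cref{lem:cE_convex}, since convexity of $v\mapsto\cE^{-1}(1-\F(v/\beta))$ is preserved.
\item $\Rev(\beta p,\F^\beta)=\beta\Rev(p,\F)$ and $\opt(\F^\beta)=\beta\opt(\F)$.
\end{itemize}
By homogeneity, $\Psi(\F^\beta)=\beta\Psi(\F)$. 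Therefore $\F\mapsto\F^{\beta}$ with $\beta=\theta'/\theta$ is a bijection from $\cF(\theta)$ onto $\cF(\theta')$ for $\theta,\theta'>0$.

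Now choose a reference value $\theta_0>0$ with $\cF(\theta_0)\neq\emptyset$, and set $\omega=\tilde p(\theta_0)/\theta_0$. Take any $\F\in\cF$ with $\theta=\Psi(\F)>0$, and put $\beta=\theta_0/\theta$. Then $\F^\beta\in\cF(\theta_0)$ and
\[
\frac{\Rev(\omega\theta,\F)}{\opt(\F)}=\frac{\Rev(\beta\omega\theta,\F^\beta)}{\opt(\F^\beta)}=\frac{\Rev(\tilde p(\theta_0),\F^\beta)}{\opt(\F^\beta)}\ \ge\ \Gamma .
\]
So the linear policy with slope $\omega$ is at least as good as $\tilde p$ restricted to the single slice $\theta_0$, and hence at least as good as $\tilde p$ overall. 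Equality of the two sides follows, along with the optimality of $\omega^*\theta$.

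\emph{Main obstacle: degenerate cases.} The main difficulty is bookkeeping rather than a conceptual step. I must handle distributions with $\Psi(\F)=0$ (or negative values), and possibly empty slices $\cF(\theta)$. For nonnegative valuations and monotone $\Psi$, a value $\Psi(\F)=0$ should force degenerate or low-value behaviour. I would either exclude such $\F$, noting $\opt(\F)>0$ conventions, or observe that $\cF(0)$ is scale-closed, so any policy price $\tilde p(0)$ competes against all rescalings. This forces the ratio there to be governed by a limit. I would check that the linear policy with $\omega\cdot 0=0$ does no worse than what $\tilde p$ achieves on that slice, or else simply restrict attention to $\theta>0$ as the paper's statistics (mean, $L^\eta$-norm, CVaR) satisfy $\Psi(\F)>0$ for every nonzero distribution.
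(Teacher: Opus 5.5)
Your proposal is correct and is essentially the paper's argument. The paper bounds the left-hand side by the single slice $\theta=1$, i.e.\ by $\max_\omega \min_{\F\in\cF(1)} \Rev(\omega,\F)/\opt(\F)$, and then transports that slice's optimizer to all of $\cF$ via $\F'(v)=\F(\Psi(\F)\,v)$; this is your rescaling step with $\theta_0=1$, phrased policy-by-policy as dominating each $\tilde p$ by the slope $\tilde p(\theta_0)/\theta_0$. Your checks that $\cF_\alpha$ is closed under scaling and that the argument needs $\Psi(\F)>0$ make explicit points the paper uses without stating them.
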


\Cref{thm:scale} establishes the equivalence between a robust pricing problem under statistic-based information and the corresponding optimization over statistic-based pricing policies. In addition, it demonstrates that when the statistic $\Psi$ is homogeneous, it is without loss of generality to restrict attention to homogeneous (linear in $\Psi$) pricing policies in the robust pricing problem. The key observation is that the optimal revenue function $\opt: \Delta(\R) \to \R$ is itself homogeneous in the valuation distribution. Consequently, the robust pricing problem reduces to optimizing the discount factor $\omega$ multiplying the statistic $\Psi(\F)$. Based on this simplification of the seller's problem, we further reduce the nature's decisions to one-parameter distribution families in \eqref{eq: scale ab} and \eqref{eq: scale psi}.

\begin{theorem}
For any monotone and homogeneous statistic $\Psi$, the robust pricing problem \eqref{eq: policy scale invariant} and the $\Psi$-pricing problem \eqref{eq: linear} are equivalent to the following two problems in \eqref{eq: scale ab} and \eqref{eq: scale psi}.
    \begin{align}
        \max_{\omega} \min_{\F \in \cF} \frac{\Rev(\omega\cdot \Psi(\F), \F)}{\opt(\F)} = \max_{\omega} \min_{\F \in\scFone \cup \lcFone} \frac{\Rev(\omega\cdot \Psi(\F), \F)}{\opt(\F)} 
        \label{eq: scale ab}
    \end{align}
        where $\scFone$, $\lcFone$ are the set of boundary distributions of $\alpha$-regular distributions with $a = 1$ or $b=1$, respectively:
\[
\scFone := \left\{ \widecheck{\F}_{\lambda,1} \mid \lambda \ge 0, \, \alpha \lambda \le 1 \right\} \quad \text{and} \quad 
\lcFone := \left\{\widehat{\F}_{\lambda, 1} \mid  \lambda\ge 1 \right\}~.
\]
 At the same time,    
    \begin{align}
        \max_{\omega} \min_{\F \in \cF} \frac{\Rev(\omega\cdot \Psi(\F), \F)}{\opt(\F)}
        = \max_{\omega}\min_{\F\in \scF\of{1} \cup \lcF\of{1}} \frac{\Rev(\omega\cdot \Psi(\F),\F)}{\OPT(\F)}~
        \label{eq: scale psi}
    \end{align}
where $\scF\of{1} = \left\{\F \in \scF \mid \Psi(\F) =1 \right\}, \lcF\of{1} = \left\{\F \in \lcF \mid \Psi(\F) = 1  \right\}$.
\label{thm: scale-free}
\end{theorem}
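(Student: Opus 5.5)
The plan is to combine Theorem~\ref{thm:monotone} with a scaling argument. First, fix $\omega\ge 0$ and note that $p_\omega(\F):=\omega\cdot\Psi(\F)$ is a monotone pricing policy, because $\Psi$ is monotone. Theorem~\ref{thm:monotone} then gives, for every fixed $\omega$,
\[
\min_{\F\in\cF}\frac{\Rev(\omega\Psi(\F),\F)}{\opt(\F)}=\min_{\F\in\scF\cup\lcF}\frac{\Rev(\omega\Psi(\F),\F)}{\opt(\F)}.
\]
For $\omega<0$ the ratio is the same as at $\omega=0$, so nothing changes there. Taking $\max_\omega$ on both sides reduces both \eqref{eq: scale ab} and \eqref{eq: scale psi} to showing that the two-parameter family $\scF\cup\lcF$ can be shrunk to the stated one-parameter families without changing the inner minimum. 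Since Theorem~\ref{thm:scale} already identifies \eqref{eq: policy scale invariant} with \eqref{eq: linear}, this also covers the robust pricing problem.

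The key step is scale invariance. For $\beta>0$, let $\F^\beta$ denote the distribution of $\beta s$ with $s\sim\F$, so that $\F^\beta(\beta v)=\F(v)$. The steps are:
\begin{enumerate}
\item Homogeneity gives $\Psi(\F^\beta)=\beta\Psi(\F)$.
\item Directly, $\Rev(\beta p,\F^\beta)=\beta\Rev(p,\F)$, and hence $\opt(\F^\beta)=\beta\opt(\F)$.
\item It follows that the ratio $\Rev(\omega\Psi(\F),\F)/\opt(\F)$ is invariant under $\F\mapsto\F^\beta$.
\item The boundary families are closed under scaling, with explicit parameters. For $\sF$, the survival function of the scaled distribution at $v$ is $\cE(\lambda v/\beta)$ truncated at $\beta a$, so $(\sF)^\beta=\widecheck{\F}_{\lambda/\beta,\beta a}$. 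Similarly, $(\lF)^\beta=\widehat{\F}_{\lambda/\beta,\beta b}$.
\item The defining constraints are scale invariant: $\alpha\lambda a$ and $\lambda b$ are unchanged, so membership in $\scF$ and $\lcF$ is preserved.
\end{enumerate}

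For \eqref{eq: scale ab}, take any $\sF\in\scF$ with $a>0$ and scale by $\beta=1/a$. This gives $\widecheck{\F}_{\lambda a,1}$, and $\alpha(\lambda a)\cdot 1\le 1$ holds, so it lies in $\scFone$ and has the same ratio. Likewise, $\lF$ with $b>0$ maps to $\widehat{\F}_{\lambda b,1}$ with $\lambda b\ge 1$, which is in $\lcFone$. Hence the minimum over $\scF\cup\lcF$ equals the minimum over $\scFone\cup\lcFone$; the reverse inequality is trivial since the latter is a subset.

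For \eqref{eq: scale psi}, apply the same idea but scale by $\beta=1/\Psi(\F)$. This normalizes $\Psi$ to $1$ while staying in $\scF$ or $\lcF$, which lands in $\scF(1)\cup\lcF(1)$ with the same ratio.

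The main obstacle is the degenerate cases where the scaling factor is undefined or the scaled distribution is degenerate: $a=0$ or $b=0$ (a point mass at $0$, or $\lambda b\ge1$ failing), $\Psi(\F)=0$, or $\Psi(\F)=\infty$. I would handle them as follows:
\begin{itemize}
\item If $\Psi(\F)=0$, monotonicity and homogeneity force the price to be $0$. Then the ratio is $0$, unless $\opt=0$ as well, in which case the distribution is excluded or treated by convention.
\item If $a=0$ or $b=0$, the distribution is the point mass at $0$, or has $\opt=0$ and the ratio is not meaningful, so I would exclude it.
\item For \eqref{eq: scale psi} I would additionally assume $\Psi$ is positive and finite on the relevant distributions. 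If $\Psi$ is infinite (e.g.\ the mean when $\alpha=0$), the ratio is already $0$ at every $\omega>0$, and these distributions can be discarded without changing the value.
\end{itemize}
Once these conventions are fixed, the argument is the scaling bijection above.
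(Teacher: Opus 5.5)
Your proof is correct and is essentially the paper's argument: apply \Cref{thm:monotone} to the monotone policy $\omega\Psi$ for each fixed $\omega$, then use homogeneity of $\Psi$ and $\opt$ to rescale each boundary distribution without changing the ratio (to $\widecheck{\F}_{a\lambda,1}$ or $\widehat{\F}_{b\lambda,1}$ for \eqref{eq: scale ab}, or to $\Psi=1$ for \eqref{eq: scale psi}), with the reverse inequality coming from set inclusion. One correction to your edge-case remarks: distributions with $\Psi(\F)=\infty$ cannot be discarded ``without changing the value''---they force the left-hand side of \eqref{eq: scale psi} to $0$ but do not appear on the right-hand side (e.g.\ for the mean with $\alpha=0$, $\widehat{\cF}(1)=\emptyset$), so \eqref{eq: scale psi} genuinely requires $\Psi$ to be finite and positive, which is the hypothesis you add and the paper uses implicitly.
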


\Cref{thm: scale-free} shows that, in the robust pricing problem, the search for the worst-case distribution can be reduced to optimization over a single parameter, substantially improving tractability. The theorem also provides two alternative scaling formulations that are equivalent but offer different computational conveniences.
For different forms of statistical information, either \eqref{eq: scale ab} or \eqref{eq: scale psi} can be applied to derive the optimal pricing policy and the corresponding approximation ratio. In the subsections that follow, we focus on MHR ($\alpha=1$) distributions. We characterize the optimal pricing policy and its approximation ratio under different representative information, including $L^\eta$-norm and superquantile information. We present proofs based on either \eqref{eq: scale ab} or \eqref{eq: scale psi}, depending on which approach yields greater clarity or notational simplicity. Each approach offers distinct analytical advantages, and we deliberately use both to illustrate alternative proof techniques.

\subsection{$L^\eta$-Norm Information}
 
In this subsection, we study the robust pricing problem in which the seller observes the $L^\eta$-norm information of the valuation distribution, $\|\F\|_\eta = \E[v^\eta]^{1/\eta}=\theta$, for some $\eta \ge 1$. 
The seller seeks a posted price that maximizes the worst-case approximation ratio over all distributions consistent with this information. In \Cref{prop: norm}, we provide a reformulation of the corresponding robust pricing problem with the help of \Cref{thm: scale-free}. Since $p(\F) = \E[v^\eta]^{1/\eta}$ can be implemented by the hidden pricing mechanism $h(s,\F) = 1/\eta \cdot  \left((\eta-1)\|\F\|_{\eta}  + s^\eta / \|\F\|_{\eta}^{\eta-1} \right)$, \Cref{prop: norm} characterizes both the optimal robust price and its approximation ratio under $L^\eta$-norm information, as well as the performance guarantee of the hidden pricing mechanism defined above. 

We remark that when $\eta < 1$, the quantity $\E[v^\eta]^{1/\eta}$ is known as the quasinorm of $\F$. Although it is monotone, the quasinorm is a strictly convex---rather than concave---functional. Thus, it cannot be implemented by a hidden pricing mechanism. Nevertheless, the quasinorm remains meaningful distributional information in robust pricing, and \Cref{prop: norm} can still characterize the optimal approximation and price under a robust pricing scheme given quasinorm information with $\eta < 1$.
    \begin{proposition}
    When the seller knows the $L^\eta$-norm of the distribution, i.e., $\Psi(\F)= \|\F\|_\eta$ in Problem \eqref{eq: policy scale invariant}, the optimal discount factor $\omega$ and the corresponding approximation ratio $\Gamma$ can be solved by the following program.
    \begin{subequations} 
        \begin{align}
        \setcounter{equation}{-1}
            \max_ {\omega,\Gamma}: \quad & \Gamma \label{eq:norm-obj}\\
        \mbox{subject to} : \quad & \Gamma \cdot e^{-\lambda} \le \omega \cdot \|\sF[1]\|_\eta \cdot e^{-\lambda \omega \|\sF[1]\|_\eta} & \forall \lambda\in [0,1]\label{eq:norm-a}\\
      &   \Gamma \cdot 1 \le \omega \cdot \|\lF[1]\|_\eta \cdot \min \left\{ e^{-\lambda(\omega \|\lF[1]\|_\eta-1)}, 1\right\}
      & \forall \lambda\in [1,\infty) \label{eq:norm-b}
        \end{align}
        \end{subequations}
Here, $ \|\widecheck{\F}_{\lambda,1}\|_\eta = (\lambda^{-\eta}\cdot \int_0^{\lambda} t^{\eta}e^{-t}dt + e^{-\lambda})^{1/\eta}$ 
and 
$\|\widehat{\F}_{\lambda,1}\|_\eta = (e^{\lambda }\cdot  \lambda^{-\eta} \cdot \int_{\lambda}^\infty t^{\eta}e^{-t}dt)^{1/\eta}.$
\label{prop: norm}
 \end{proposition}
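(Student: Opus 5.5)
We start from \Cref{thm:scale} and then use formulation \eqref{eq: scale ab} of \Cref{thm: scale-free}. This is legitimate because $\Psi(\F)=\|\F\|_\eta$ is monotone for $\eta\ge 1$: first-order dominance preserves $\E[v^\eta]$ ordering via coupling. It is also homogeneous, since scaling values by $\beta$ scales $\|\F\|_\eta$ by $\beta$. The problem \eqref{eq: policy scale invariant} therefore reduces to
\[
\max_\omega \min_{\F\in\scFone\cup\lcFone}\frac{\Rev(\omega\|\F\|_\eta,\F)}{\opt(\F)}.
\]
Writing this as a max of $\Gamma$ subject to $\Rev(\omega\|\F\|_\eta,\F)\ge\Gamma\cdot\opt(\F)$ for each $\F$ in the two one-parameter families, it remains to write each constraint explicitly with $\alpha=1$, where $\cE(v)=e^{-v}$.

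\textbf{Family $\scFone$.} Here $\widecheck{\F}_{\lambda,1}$ has survival function $e^{-\lambda v}$ on $[0,1)$ and an atom of mass $e^{-\lambda}$ at $1$, with $\lambda\in[0,1]$. By \Cref{lem:opt_price} (the case $\alpha\lambda a=\lambda\le 1$), the optimal price is $1$, so $\opt=e^{-\lambda}$.

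For the norm, compute $\E[v^\eta]=\int_0^1 v^\eta\lambda e^{-\lambda v}\,dv+e^{-\lambda}$. Substituting $t=\lambda v$ gives $\lambda^{-\eta}\int_0^\lambda t^\eta e^{-t}\,dt+e^{-\lambda}$, which matches the stated formula.

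Next, compare the price with $1$. Since $\widecheck{\F}_{\lambda,1}$ is supported on $[0,1]$, we have $\|\widecheck{\F}_{\lambda,1}\|_\eta\le 1$. The relevant range is $\omega\le 1$ (at $\omega>1$ this family is already violated), so the price $\omega\|\widecheck{\F}_{\lambda,1}\|_\eta$ lies in $[0,1]$. Its revenue is therefore $\omega\|\cdot\|_\eta e^{-\lambda\omega\|\cdot\|_\eta}$, giving \eqref{eq:norm-a}.

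\textbf{Family $\lcFone$.} Here $\widehat{\F}_{\lambda,1}$ has survival function $1$ on $[0,1)$ and $e^{-\lambda(v-1)}$ for $v\ge 1$, with $\lambda\ge 1$. By \Cref{lem:opt_price}, $p^*=1$ and $\opt=1$.

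For the norm, $\E[v^\eta]=\int_1^\infty v^\eta\lambda e^{-\lambda(v-1)}\,dv$. Substituting $t=\lambda v$ gives $e^\lambda\lambda^{-\eta}\int_\lambda^\infty t^\eta e^{-t}\,dt$.

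The revenue at price $x=\omega\|\cdot\|_\eta$ is $x\min\{e^{-\lambda(x-1)},1\}$. The $\min$ handles both cases $x\ge 1$ and $x<1$ at once, which yields \eqref{eq:norm-b}.

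The main care point is justifying that this unified revenue expression, and the restriction $\omega\|\widecheck{\F}_{\lambda,1}\|_\eta\le 1$, are without loss. If $\omega>1$, the point-mass member $\lambda=0$ alone forces revenue $0$, so such $\omega$ is never optimal. The rest of the argument is direct substitution.
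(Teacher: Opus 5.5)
Your proposal is correct and takes essentially the same route as the paper. Both reduce via \eqref{eq: scale ab} to the families $\{\widecheck{\F}_{\lambda,1}\}_{\lambda\le 1}$ and $\{\widehat{\F}_{\lambda,1}\}_{\lambda\ge 1}$, compute the norms with $t=\lambda v$, and read off revenue and $\opt$ for each; your argument that $\omega>1$ is excluded by the point mass $\widecheck{\F}_{0,1}$ (so the price on $\scFone$ stays below the truncation point) just makes explicit the paper's restriction to $\omega\in[0,1]$.
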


\paragraph{Mean Information.} The $L^\eta$-norm information framework subsumes mean information as a special case when $\eta = 1$. In this case, we simplify the formulas of $\|\sF[1]\|_\eta$ and $\|\lF[1]\|_\eta$:
\begin{align*}
    \|\widecheck{\F}_{\lambda,1}\|_1  = \frac{1}{\lambda} \cdot \int_0^{\lambda} te^{-t}dt + e^{-\lambda} = \frac{1-e^{-\lambda}}{\lambda}; \quad \|\lF[1]\|_1  = e^{\lambda} \cdot \frac{1}{\lambda} \int_\lambda^\infty t e^{-t} dt = 1 + \frac{1}{\lambda}.
\end{align*}
Consequently, the optimization problem \eqref{eq:norm-obj} simplifies to the following form.
        \begin{align*}
            \max_ {\omega,\Gamma}: \quad & \Gamma \\
        \mbox{subject to}:\quad
        & \Gamma \le \omega \cdot \frac{1-e^{-\lambda}}{\lambda} \cdot e^{-\omega (1-e^{-\lambda}) + \lambda} & \forall \lambda\in [0,1] \\
      & \Gamma \le \omega\cdot \of{1+\frac{1}{\lambda}}\cdot \min\offf{ e^{\lambda-\omega (\lambda+1)}, 1} & \forall \lambda\in [1,\infty)
        \end{align*}
Numerically solving the problem, we characterize the optimal price and its approximation ratio when the seller knows the mean of the distribution.
 \begin{proposition}
        When the seller knows $\|\F\|_1 = \theta$, then the optimal price is $\sim 0.823 \cdot \theta$, which achieves an approximation ratio of 0.771. 
        \label{prop:mean}
    \end{proposition}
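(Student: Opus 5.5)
The plan is to specialize \Cref{prop: norm} to $\eta=1$, which gives the simplified two-constraint program displayed just above, and then solve that program numerically in $\omega$. By \Cref{thm:scale} and \Cref{thm: scale-free} (reformulation \eqref{eq: scale ab}), the robust mean-pricing problem reduces to a linear policy $\tilde p(\theta)=\omega\theta$ evaluated only against the two one-parameter families $\widecheck{\F}_{\lambda,1}$ with $\lambda\in[0,1]$ and $\widehat{\F}_{\lambda,1}$ with $\lambda\ge 1$.

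For $\widecheck{\F}_{\lambda,1}$ we have $\opt=e^{-\lambda}$ at price $1$. Since $\omega\|\widecheck{\F}_{\lambda,1}\|_1\le 1$ in the relevant range, the revenue is $\omega m\, e^{-\lambda\omega m}$ with $m=(1-e^{-\lambda})/\lambda$. Let
\[
g_1(\omega)=\min_{\lambda\in[0,1]} \omega m(\lambda) e^{\lambda-\lambda\omega m(\lambda)},\qquad g_2(\omega)=\min_{\lambda\ge1}\omega\Bigl(1+\tfrac1\lambda\Bigr)\min\Bigl\{e^{\lambda-\omega(\lambda+1)},1\Bigr\}.
\]
The ratio of a given $\omega$ is $\Gamma(\omega)=\min\{g_1(\omega),g_2(\omega)\}$, and we maximize this over $\omega$.

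Structural observations localize the optimum:
\begin{itemize}
\item $g_1$ is nondecreasing in $\omega$ wherever $\omega m\le 1/\lambda$. This reflects underpricing, and the worst case there occurs near $\lambda=1$, where $m=1-e^{-1}$.
\item $g_2$ is eventually decreasing in $\omega$, reflecting overpricing. Its inner minimum is either at $\lambda=1$ or at an interior critical point. For $\lambda\to\infty$ the ratio tends to $\omega e^{\lambda(1-\omega)-\omega}$, which grows when $\omega<1$, so the relevant minimizer is moderate.
\end{itemize}
Hence the optimal $\omega$ equates $g_1$ and $g_2$. I would compute each inner minimum by checking its first-order condition in $\lambda$ together with the endpoints, then solve $g_1(\omega)=g_2(\omega)$ by bisection, expecting $\omega\approx0.823$ and $\Gamma\approx0.771$.

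The main obstacle is certifying the numerical values rigorously, namely that the inner minima over $\lambda$ are correctly located on a noncompact range. I would handle this by showing monotonicity of each expression in $\lambda$ outside a compact window, for instance for $\lambda\ge\lambda_0$ where the exponent $\lambda(1-\omega)$ dominates. On the compact window I would use a fine grid with Lipschitz bounds. I would also exhibit the two binding worst-case distributions to confirm that $0.771$ is tight for $\omega=0.823$, and check that perturbing $\omega$ in either direction lowers one of $g_1,g_2$.
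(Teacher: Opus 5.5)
Your route is the paper's. The paper specializes Proposition~\ref{prop: norm} to $\eta=1$ (via \eqref{eq: scale ab}), which gives exactly your $g_1$ and $g_2$, and then solves the two-constraint program numerically. Your plan to certify the inner minima over the noncompact $\lambda$-range goes beyond what the paper provides. The first-order/endpoint search you describe will land on $\omega\approx0.823$, $\Gamma\approx0.771$. However, two of the structural remarks you use to localize the optimum are wrong, and trusting them would give a different answer.

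First, the underpricing worst case is not near $\lambda=1$. Write $r(\lambda)=\omega m(\lambda)e^{\lambda(1-\omega m(\lambda))}$. Then $(\ln r)'(0)=\tfrac12-\omega<0$, so $r$ dips below its value $\omega$ at $\lambda=0$. At $\omega=0.823$ one gets $r(1)\approx0.84$, while $\min_{\lambda\in[0,1]} r\approx0.771$ is attained at an interior $\lambda\approx0.43$. If you checked only $\lambda\in\{0,1\}$, you would get $g_1(\omega)=\omega$ near the optimum and report $\omega\approx\Gamma\approx0.80$.

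Second, the tail of $g_2$ does not grow. Once $\lambda(1-\omega)\ge\omega$ (about $\lambda\ge4.65$ here), the cap $\min\{\cdot,1\}$ that you yourself wrote is active. The ratio then equals $\omega(1+1/\lambda)$ and decreases to $\omega$. So the tail contributes the unattained constraint $\Gamma\le\omega$, which is slack only because $0.771<0.823$.

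The binding instances are therefore $\widecheck{\F}_{\lambda,1}$ with $\lambda\approx0.43$, and $\widehat{\F}_{\lambda,1}$ at the interior critical point $\lambda(\lambda+1)=1/(1-\omega)$, i.e.\ $\lambda\approx1.93$. In the certification step, bound the tail below by $\omega$ rather than arguing monotone growth.
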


Beyond the mean, by numerically evaluating the optimal solution to problem \eqref{eq:norm-obj}, we obtain the optimal prices and the corresponding approximation ratios for different $\eta$. In \Cref{fig:moment_mhr}, we plot the optimal discount factors and the corresponding approximation ratios as a function of $\eta$.
We observe that $\eta \approx 1.37$ achieves the largest approximation ratio among all $L^\eta$-norm pricing policies.
\begin{proposition}
    When the seller knows $\|\F\|_{1.37}=\theta$, then the optimal price is $\sim 0.805 \cdot \theta$, which achieves an approximation ratio of 0.774. 
\end{proposition}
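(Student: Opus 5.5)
The plan is to reduce the claim to the one-parameter program of \Cref{prop: norm} specialized to $\eta = 1.37$, and then solve that program numerically.

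First I check that $\Psi(\F)=\|\F\|_{1.37}$ satisfies the hypotheses of \Cref{thm:scale} and \Cref{thm: scale-free}. It is monotone because $v\mapsto v^{\eta}$ is non-decreasing on $\R_+$, so first-order dominance preserves $\E[v^\eta]$. It is homogeneous because scaling values by $\beta$ scales $\E[v^\eta]^{1/\eta}$ by $\beta$. Hence the robust problem with $L^{1.37}$-norm information reduces to choosing a discount $\omega$ applied to $\theta$. By \eqref{eq: scale ab}, nature only needs to be considered on $\scFone\cup\lcFone$, which for $\alpha=1$ means $\lambda\in[0,1]$ and $\lambda\ge1$ respectively.

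Next I specialize \Cref{prop: norm}. For $\widecheck{\F}_{\lambda,1}$ with $\lambda\le 1$, \Cref{lem:opt_price} gives $p^*=1$ and $\opt=e^{-\lambda}$. The revenue at price $\omega\|\cdot\|_\eta\le 1$ is $\omega\|\cdot\|_\eta e^{-\lambda\omega\|\cdot\|_\eta}$, which is constraint \eqref{eq:norm-a}. For $\widehat{\F}_{\lambda,1}$ with $\lambda\ge1$, we have $\opt=1$, and the revenue is $\omega\|\cdot\|_\eta\min\{e^{-\lambda(\omega\|\cdot\|_\eta-1)},1\}$, which is constraint \eqref{eq:norm-b}. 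The two norms are the explicit incomplete-gamma expressions stated in \Cref{prop: norm}, evaluated at $\eta=1.37$. The whole task is then a two-variable max–min over $\omega$, with $\lambda$ ranging over a compact interval and a half-line.

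The final step is the numerical solution.
\begin{itemize}
\item For fixed $\omega$, I compute the worst case over $\lambda\in[0,1]$ and over $\lambda\in[1,\infty)$ on a fine grid, using the closed-form gamma integrals.
\item I handle the half-line by noting that as $\lambda\to\infty$, $\|\widehat{\F}_{\lambda,1}\|_\eta\to 1$. For $\omega<1$, the constraint then tends to $\omega$, so a finite truncation suffices once the ratio has become monotone in $\lambda$.
\item I optimize $\omega$ by bisection on the balance point. The first family of constraints is increasing in $\omega$ while $\omega\|\cdot\|\le1$, and the second is decreasing once overpricing occurs. The optimum should equalize the two binding worst cases, near $\omega\approx0.805$ with $\Gamma\approx0.774$.
\end{itemize}

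The main obstacle is certifying the numerics rigorously, in particular the inner minimum over $\lambda$. A fully rigorous version would use Lipschitz bounds in $\lambda$ for the norm expressions and grid error control. Short of that, I would report the value as computer-assisted, consistent with \Cref{prop:mean}.
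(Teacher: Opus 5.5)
Your proposal is correct and follows the paper's route. You check that $\|\F\|_{1.37}$ is monotone and homogeneous, so that \Cref{thm:scale} and \Cref{thm: scale-free} reduce the problem to choosing a discount $\omega$ against $\scFone \cup \lcFone$. You then specialize the program of \Cref{prop: norm} to $\eta=1.37$ and solve it numerically, which is exactly how the paper obtains $\omega\approx 0.805$ and $\Gamma\approx 0.774$; your remarks on certifying the grid error go beyond what the paper itself provides.
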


\begin{figure}[t]
\centering
\captionsetup{justification=centering}
\caption{Performance Ratio and Optimal Discount with $L^\eta$-norm and Superquantile Information}
\label{fig:cvar-mhr}
\begin{subfigure}[t]{0.45\textwidth}
\includegraphics[width=\textwidth,keepaspectratio]{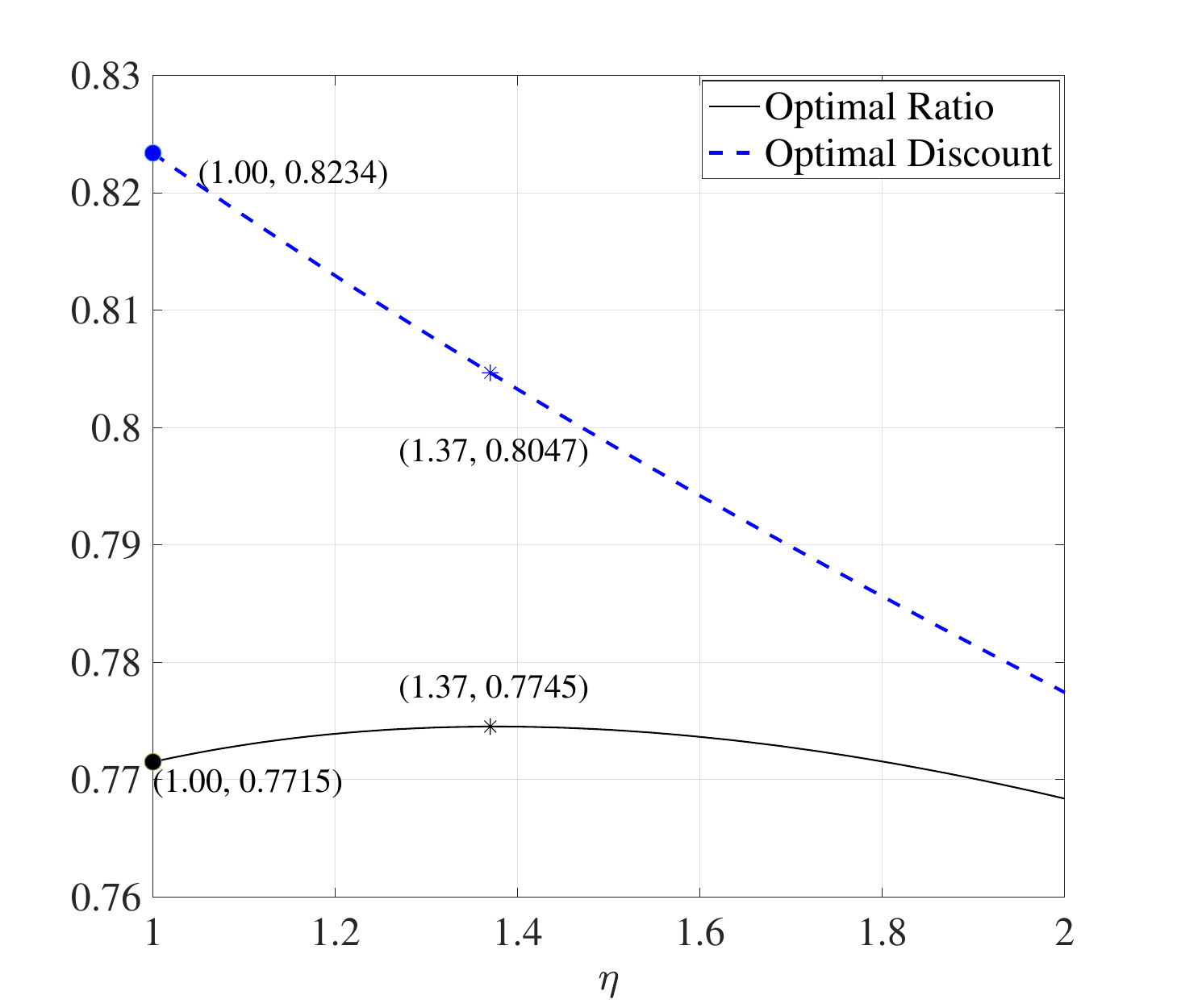}
\caption{$p(\F) = \omega \cdot \|\F\|_\eta = \omega \cdot \E[v^\eta]^{1/\eta}$}
\label{fig:moment_mhr}
\end{subfigure}
\begin{subfigure}[t]{0.45\textwidth}
\includegraphics[width=\textwidth,keepaspectratio]{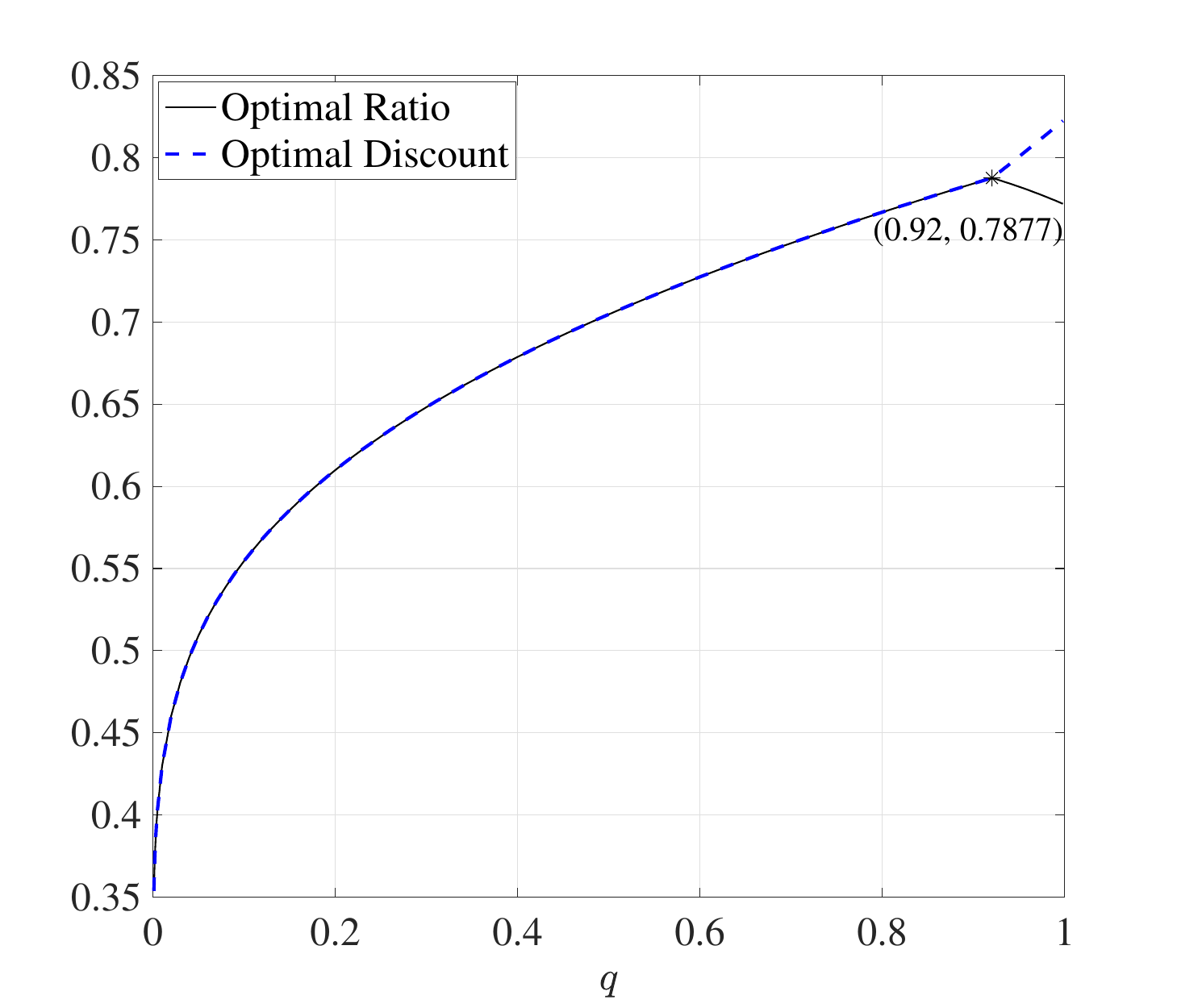}
\caption{$p(\F) = \omega \cdot \mathrm{CVaR}_{q} = \omega \cdot \E[v\mid v \ge \mathrm{VaR}_q]$}
\label{fig:fig_cvar_lowerq}
\end{subfigure}
\end{figure}

\subsection{Superquantile Information}
Suppose the seller can only collect aggregate information of a subset of consumers. For instance, privacy-preserving platform dashboards or A/B test may report the average value of a partial market. That is, the seller may learn the superquantile (conditional value at risk $\mathrm{CVaR}_{q}$): $\E[v\mid v\ge \mathrm{VaR}_{q}]$, which is the expected value among customers whose valuation is above a certain fraction of the whole market. Here, we denote $\mathrm{VaR}_{q} =\F^{-1}(1-q)$ for notational simplicity. Next, we investigate how the value of learning the conditional mean changes with $q$. This provides a clean way to quantify the marginal benefit of collecting comprehensive aggregate market information versus more targeted market segment information. We remark that the superquantile information also subsumes mean information as a special case when $q=1$.

\begin{proposition}
\label{prop:superquantile}
 When the seller knows $\mathrm{CVaR}_{q}$, i.e., $\Psi(\F)= \E[v\mid v\ge \mathrm{VaR}_q]$ in Problem \eqref{eq: policy scale invariant}, the optimal discount factor $\omega$ and the corresponding approximation ratio $\Gamma$ can be solved by:         \begin{subequations} 
        \label{eq:cvar}
        \begin{align}
        \setcounter{equation}{-1}
    \max_ {\omega,\Gamma}: \quad &  \Gamma \\
        \mbox{subject to}: \quad  &    \Gamma \le \omega& \label{eqn:cvar_1} \\
         &    \Gamma \le \frac{\omega\cdot e^{-\lambda \omega}}{q\cdot\of{1-\lambda-\ln q}\cdot -\frac{\ln\of{-q\cdot\of{\lambda+\ln q-1 }}}{\lambda} } & \forall \lambda\in (-\ln q, 1-\ln q) \label{eqn:cvar_2}\\
      &   \Gamma \le \frac{\omega\cdot \min\offf{1, q e^{-\lambda \of {\omega-1+1/\lambda } }}}{1-\frac{1}{\lambda}\of{1-\ln q}} & \forall \lambda\in [1-\ln q,\infty ). \label{eqn:cvar_3}
        \end{align}
        \end{subequations}
\end{proposition}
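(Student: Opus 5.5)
The plan is to apply \Cref{thm: scale-free}, in the form \eqref{eq: scale psi} with the normalization $\Psi(\F)=\mathrm{CVaR}_q(\F)=1$. With this normalization the candidate price is simply $\omega$, and the worst case is taken over the two one-parameter families $\scF(1)$ and $\lcF(1)$ with $\alpha=1$, so that $\cE(v)=e^{-v}$. For each family I will compute $\mathrm{VaR}_q$ and $\mathrm{CVaR}_q$ in closed form, rescale so that $\mathrm{CVaR}_q=1$, and write $\Rev(\omega,\F)\ge \Gamma\cdot\OPT(\F)$ as an explicit inequality in the rate parameter $\lambda$. The constraints \eqref{eqn:cvar_1}--\eqref{eqn:cvar_3} should then appear as these inequalities, after discarding redundant ones.

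\emph{Maximal family.} Let $\F$ be $b$ plus an exponential with rate $\lambda$. For $q<1$ the tail probability at $v\ge b$ is $e^{-\lambda(v-b)}$. This gives $\mathrm{VaR}_q=b-\ln q/\lambda$, and by memorylessness $\mathrm{CVaR}_q=b+(1-\ln q)/\lambda$. Setting this equal to $1$ gives $b=1-(1-\ln q)/\lambda$, and $b\ge 0$ forces $\lambda\ge 1-\ln q$. When $\lambda b\ge 1$, \Cref{lem:opt_price} gives $\OPT=b$. Posting $\omega$ yields revenue $\omega$ if $\omega\le b$, and otherwise $\omega e^{-\lambda(\omega-b)}=\omega\, q\, e^{-\lambda(\omega-1+1/\lambda)}$. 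Dividing by $b$ gives exactly \eqref{eqn:cvar_3}.

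There is a possible mismatch in the range: $\lcF$ requires $\lambda b\ge1$, whereas \eqref{eqn:cvar_3} is stated for all $\lambda\ge 1-\ln q$. I will handle the extra members, those with $\lambda b<1$, by noting that for them $b\le\OPT$. Hence the ratio with denominator $b$ is at least the true ratio. On the other hand, these distributions are still MHR, so the true ratio is a valid lower bound for $\Gamma$ in the original problem. The constraint with denominator $b$ is therefore implied by the true constraint, and adding it does not change the optimum of the program.

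\emph{Minimal family.} Let $\F$ be an exponential with rate $\lambda$ truncated at $a$, so it has an atom of mass $e^{-\lambda a}$ at $a$, with $\lambda a\le 1$. After rescaling, I work with $a$ normalized so that $\mathrm{CVaR}_q=1$, and there are two regimes.
\begin{enumerate}
\item If $e^{-\lambda a}\ge q$, the top $q$-fraction lies entirely in the atom, so $\mathrm{CVaR}_q=a=1$. The ratio is then $\omega e^{-\lambda\omega}/e^{-\lambda}$ when $\omega\le 1$, with $\OPT=a$ attained at $a$. The case $\lambda=0$ (point masses) gives $\Gamma\le\omega$, which is \eqref{eqn:cvar_1}. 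All other $\lambda$ in this regime give weaker constraints, because $e^{\lambda(1-\omega)}\ge1$. I also need to rule out $\omega>1$ being useful, and I will argue it is not, since it gives ratio $0$ on point masses.
\item If $e^{-\lambda a}<q$, then $\mathrm{VaR}_q=-\ln q/\lambda<a$. The superquantile is then computed as $\frac1q\bigl(\int_{\mathrm{VaR}_q}^{a} e^{-\lambda v}\,\mathrm{d}v+\mathrm{VaR}_q\, q\bigr)$, i.e.\ via $\mathrm{VaR}_q+\frac1q\E[(v-\mathrm{VaR}_q)^+]$. Imposing $\mathrm{CVaR}_q=1$ determines $a$ implicitly through $\lambda$, and solving this equation produces the logarithmic expression $-\ln(-q(\lambda+\ln q-1))/\lambda$. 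The condition $\lambda a\le 1$, together with $\lambda a>-\ln q$, then gives the range $\lambda\in(-\ln q,1-\ln q)$. Using $\OPT=\Rev(a,\F)=a e^{-\lambda a}$ and revenue $\omega e^{-\lambda\omega}$ (assuming $\omega\le a$) yields \eqref{eqn:cvar_2}.
\end{enumerate}

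The main obstacle is this second regime. It requires careful bookkeeping to express the normalized truncation point and $\OPT$ in closed form, to keep track of which parametrization of $\lambda$ (before or after rescaling) the formula uses, and to check the side conditions $\omega\le a$ and $\alpha\lambda a\le1$. I will do the computation first with $a$ free and rescale only at the end.
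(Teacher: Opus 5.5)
Your route is the same as the paper's: normalize $\mathrm{CVaR}_q=1$ via \eqref{eq: scale psi}, compute the ratio on the shifted and truncated exponential families, and split the truncated family by whether the top $q$-fraction lies in the atom. On the maximal family you are more careful than the paper. The paper uses $\OPT=b$ for every $\lambda\ge 1-\ln q$, although $\lambda b\ge 1$ holds only for $\lambda\ge 2-\ln q$. Your domination argument for $\lambda b<1$ is exactly what is needed there.

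The step that fails is in the second regime of the minimal family. You claim that $\lambda a\le 1$ together with $\lambda a>-\ln q$ gives $\lambda\in(-\ln q,1-\ln q)$. Under the normalization, $e^{-\lambda a}=q(1-\lambda-\ln q)$, so
\[
\lambda a=-\ln q-\ln(1-\lambda-\ln q),
\]
which tends to $+\infty$ as $\lambda\uparrow 1-\ln q$, because the truncation point $a$ blows up. The endpoints of the interval come from $0<e^{-\lambda a}<q$, i.e., being in the second regime with finite $a$; this is how the paper obtains them. Imposing $\lambda a\le 1$ would instead give only $-\ln q<\lambda\le 1-\ln q-\frac{1}{eq}$.

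For the remaining $\lambda$, the distribution is an MHR truncated exponential that lies outside $\scF$. Its optimal price is $1/\lambda<a$, so $\OPT=e^{-1}/\lambda>ae^{-\lambda a}$. Hence \eqref{eqn:cvar_2}, with denominator $ae^{-\lambda a}$, is weaker than a valid constraint and can be included harmlessly. Close this with the same domination argument you already use for $\lambda b<1$; the paper's proof leaves this unstated on both families.

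On your side condition $\omega\le a$: $\sF$ is supported on $[0,a]$ and $\mathrm{CVaR}_q=1$, so $a\ge 1$, and $\omega\le 1$ suffices. You also need the program itself, not just the true problem, to exclude $\omega>1$. The constraint \eqref{eqn:cvar_1} does not encode the zero revenue on the point mass at $1$. Instead, take $\lambda\to\infty$ in \eqref{eqn:cvar_3}: then $b\to 1$, and the right-hand side tends to $0$ whenever $\omega>1$.
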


By numerically solving the optimization problem above, we
compute the optimal approximation ratio and the corresponding optimal discount factor $\omega$ as a function of $\mathrm{CVaR}_q$. The results for different values of $q$ are depicted in \Cref{fig:fig_cvar_lowerq}, which indicates that the approximation ratio is maximized at approximately $q \approx 0.92$.

   \begin{proposition}
        When the seller knows $\mathrm{CVaR}_{0.92}=\theta$, then the optimal price is $\sim 0.787 \cdot \theta$, which achieves an approximation ratio of $0.787$. 
        \label{prop:cvar}
    \end{proposition}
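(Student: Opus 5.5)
The plan is to specialize the program of Proposition~\ref{prop:superquantile} to $q=0.92$ and solve it in the one remaining decision variable $\omega$. That proposition already uses Theorem~\ref{thm:scale} and Theorem~\ref{thm: scale-free} to reduce the robust problem to this program. Its constraints are:
\begin{itemize}
\item \eqref{eqn:cvar_1}, which comes from the point-mass/truncated end of $\scF$ (equivalently, the case $\lambda\le -\ln q$, where $\mathrm{VaR}_q$ sits at the atom);
\item \eqref{eqn:cvar_2}, which comes from the remaining minimal distributions $\scF(1)$;
\item \eqref{eqn:cvar_3}, which comes from the maximal distributions $\lcF(1)$.
\end{itemize}
Write $g_2(\omega,\lambda)$ and $g_3(\omega,\lambda)$ for the right-hand sides of \eqref{eqn:cvar_2} and \eqref{eqn:cvar_3}. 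The optimal ratio is then
\[
\Gamma^\dagger(\omega)=\min\Bigl\{\omega,\ \inf_{\lambda\in(-\ln q,1-\ln q)} g_2(\omega,\lambda),\ \inf_{\lambda\ge 1-\ln q} g_3(\omega,\lambda)\Bigr\},
\]
maximized over $\omega$. Before computing, I would re-derive the closed forms of $g_2$ and $g_3$ directly from \eqref{eq: scale psi}, so that the numerics rest on verified formulas. This means computing $\mathrm{CVaR}_q$, $\opt$, and $\Rev$ at price $\omega\cdot\mathrm{CVaR}_q$ for $\sF$ and $\lF$ with $\alpha=1$, using Lemma~\ref{lem:opt_price} for the optimal prices. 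The printed denominator of \eqref{eqn:cvar_2} looks typographically garbled, so I would reconstruct it as $\opt/\mathrm{CVaR}_q$ under the normalization $\Psi=1$.

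The structure of the argument is a monotonicity tradeoff in $\omega$.
\begin{itemize}
\item $\omega$ itself is increasing in $\omega$.
\item $g_3$ is non-increasing in $\omega$: overpricing on $\lcF$ is the only risk there.
\item $g_2$ is increasing on the range where $\lambda\omega\le 1$, i.e., where the induced price is below the optimal one.
\end{itemize}
So the binding pair at the optimum should be \eqref{eqn:cvar_1} against \eqref{eqn:cvar_3}, with $g_2$ slack. The optimum is then the $\omega^*$ solving $\omega=\inf_{\lambda}g_3(\omega,\lambda)$. That the reported ratio equals the reported discount ($0.787$) is consistent with \eqref{eqn:cvar_1} being tight.

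The two halves of the proof are as follows.
\begin{itemize}
\item \emph{Lower bound.} Fix $\omega=0.787$ and show $g_2(0.787,\lambda)\ge 0.787$ and $g_3(0.787,\lambda)\ge 0.787$ on their ranges. I would evaluate on a fine grid in $\lambda$ and bound the $\lambda$-derivatives, which are elementary in $e^{-\lambda\omega}$, $\ln$, and $1/\lambda$, to control the gaps between grid points. For $\lambda\to\infty$ in \eqref{eqn:cvar_3} the denominator tends to $1$ and the numerator to $\omega\cdot\min\{1,\cdot\}$, so the tail can be handled analytically.
\item \emph{Upper bound.} For $\omega<0.787$, constraint \eqref{eqn:cvar_1} gives $\Gamma<0.787$. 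For $\omega>0.787$, exhibit an explicit binding $\lambda_0\ge 1-\ln q$ with $g_3(\omega,\lambda_0)<0.787+\epsilon$; by the monotonicity of $g_3$ in $\omega$, this single $\lambda_0$ works for all larger $\omega$.
\end{itemize}

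I expect the main obstacle to be making the numerics rigorous near tightness. At $\omega^*$, the infimum of $g_3$ is attained at an interior $\lambda$ where the $\min\{1,\cdot\}$ switches branch, around $\lambda\approx 1/(\omega-1+1/\lambda)$-type balance points. The margin there is only in the third decimal, so the grid must be fine and the derivative bounds explicit. I would first locate this kink analytically, by solving $q e^{-\lambda(\omega-1+1/\lambda)}=1$ for $\lambda$, and split the interval at it, so that on each piece $g_3$ is smooth and unimodal.
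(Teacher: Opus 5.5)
Your plan follows the paper's own route. The paper simply specializes the program of Proposition~\ref{prop:superquantile} to $q=0.92$ and solves it numerically; your certificate (\eqref{eqn:cvar_1} binding against \eqref{eqn:cvar_3}, $g_2$ slack, grid plus derivative bounds, and an explicit $\lambda_0$ for the upper bound) is a rigorous version of that same computation.

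One correction to your anticipated obstacle. The infimum $\inf_\lambda g_3$ is not attained at the kink $\lambda=(1-\ln q)/(1-\omega)\approx 5.09$; there the price equals $b$ and $g_3=1$. It is attained at the smooth critical point of the overpricing branch, $\lambda(\lambda-1+\ln q)=(1-\ln q)/(1-\omega)$. At $\omega=0.787$ this gives $\lambda\approx 2.86$, $b\approx 0.62$ and $g_3\approx 0.7885$, and this is the $\lambda_0$ to use. Note that $g_3(\cdot,\lambda_0)$ is decreasing in $\omega$ only because $\omega>b$ there; past the kink, $g_3=\omega/b$ increases in $\omega$. Finally, the thinnest margin you actually have to certify at $\omega=0.787$ is in $g_2$, whose minimum is about $0.7874$ near $\lambda\approx 0.38$.
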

\Cref{prop:cvar} suggests that selectively excluding low-value segments can yield stronger revenue guarantees than relying on aggregate means, which offers a practical implication for market research and data collection. Rather than estimating the mean over the entire population, it may be more effective to focus on estimating tail-based statistics that emphasize medium and high valuations. 
\section{Conclusion}
In this paper, we study robust pricing in a data-limited environment, where the seller observes only a single sample from the valuation distribution while the buyer is fully informed. We develop a unified framework that bridges two major paradigms in the robust pricing literature, statistic-based pricing and sample-based pricing, by leveraging the buyer’s informational advantage. We introduce the hidden pricing mechanism, which allows the seller to implement pricing rules that traditionally require knowledge of distributional statistics (such as the mean, $L^\eta$-norms, and superquantiles), using only a single sample. The performance guarantees previously achievable only under known statistics can be recovered in this one-sample regime. To solve the resulting infinite-dimensional problem, we provide a tractable reduction of nature’s optimization. For monotone pricing policies, we show that nature’s worst-case response lies in a simple two-parameter family of distributions. This reduction enables sharp characterization of optimal pricing policies and approximation ratios across different information regimes. We complement these positive results with impossibility results, demonstrating that simple monotone and concave pricing policies are nearly optimal among all concave pricing policies and perform close to the best achievable by general prior-independent mechanisms. Overall, our results offer a unified perspective on robust pricing with scarce data and highlight how informational asymmetries between buyers and sellers can be systematically exploited to overcome fundamental limitations of small-sample regimes. 
\bibliographystyle{abbrvnat}
\bibliography{myref}

\newpage

\appendix

\section{Missing Proofs from Section~\ref{sec:prelim} and Section~\ref{sec:monotone}}
\label{app:proofs}

\begin{proof}{Proof of Lemma~\ref{lem:opt_price}.}
We first study $\sF$ and calculate the derivative of the revenue function $\Rev(v)=v\cE(\lambda v)$ for $v\in[0,a]$:
\begin{align*}
\alpha \in [0,1): \quad \Rev'(v) & = \of{1+(1-\alpha)\lambda v}^{-\frac{1}{1-\alpha}} - \frac{1}{1-\alpha} \cdot (1-\alpha)\lambda v \cdot \of{1+(1-\alpha)\lambda v}^{-\frac{1}{1-\alpha}-1} \\
& = \of{1+(1-\alpha)\lambda v}^{\frac{\alpha-2}{1-\alpha}}\cdot\of{1 - \alpha \lambda v}~; \\
\alpha = 1: \quad \Rev'(v) & = e^{-\lambda v} - v\lambda e^{-\lambda v} = e^{-\lambda v}\cdot \of{1-\lambda v}~.
\end{align*}
In both cases, the revenue function is increasing in $v$ if $\alpha \lambda v\le 1$ and is decreasing in $v$ otherwise. 
Recall that the distribution $\sF$ is truncated at $a$. Hence, $p^*$ is either $a$ or $\frac{1}{\alpha \lambda}$ as stated.

Next, we study $\lF$ and calculate the derivative of the revenue function $\Rev(v)=v\cE(\lambda (v-b))$ for $v \in [b,\infty)$:
\begin{align*}
\alpha \in [0,1): \quad \Rev'(v) & = \of{1+(1-\alpha)\lambda (v-b)}^{-\frac{1}{1-\alpha}} -\frac{1}{1-\alpha} \cdot (1-\alpha) \lambda v \cdot \of{1+(1-\alpha)\lambda (v-b)}^{-\frac{1}{1-\alpha}-1}\\
& = \of{1+(1-\alpha)\lambda(v-b)}^{\frac{\alpha-2}{1-\alpha}}\cdot \of{1-\alpha \lambda v - (1-\alpha) \lambda b}~; \\
\alpha = 1: \quad \Rev'(v) & = e^{-\lambda (v-b)} - \lambda v e^{-\lambda (v-b)} = e^{-\lambda (v-b)}\cdot \of{1-\lambda v}~.
\end{align*}
In both cases, the revenue function is increasing in $v$ if $1-\alpha \lambda v - (1-\alpha) \lambda b \ge 0$ and is decreasing in $v$ otherwise. Hence, $p^*$ is either $b$ or $\frac{1-(1-\alpha)\lambda b}{\alpha \lambda}$ as stated.
\end{proof}

\begin{proof}{Proof of Lemma~\ref{lem:lb-ub}.}
Let $\Rev(v)=v \cdot (1-\F(v))$ be the revenue function of $\F$. Its derivative equals
\[
\Rev'(v) = 1-\F(v)-v f(v) = -f(v) \cdot \varphi_{0,\F}(v)~.
\]
Hence, $\Rev'(v) \ge 0 \iff \varphi_{0,\F}(v) \le 0$.
Since $\varphi_{0,\F}(v) = \varphi_{\alpha,\F}(v) + \alpha v$ and $\varphi_{\alpha,\F}$ is increasing by $\alpha$-regularity, it follows that $\varphi_{0,\F}$ is also increasing.
By definition of $p^*(\F)$, $\varphi_{0,\F}$ crosses zero exactly once. Thus, $\Rev(v)$ is increasing on the region where $\varphi_{0,\F}(v) \le 0$, namely $[0, p^*(\F)]$, and decreasing on the region where $\varphi_{0,\F}(v) \ge 0$, namely $[p^*(\F),\infty)$.
The statement thus holds by the definition of $\mathrm{LB}$ and $\mathrm{UB}$.
\end{proof}

\begin{proof}{Proof of Lemma~\ref{lem:lb-lipschitz}.}
We first verify that $v \to v \cE(\lambda v)$ is log-concave on the region when $\alpha \lambda v \le 1$:
\begin{align*}
\alpha \in [0,1): \quad \left( \ln \left(v \cE(\lambda v) \right) \right)' & = \left( \ln v - \frac{1}{1-\alpha} \ln(1+(1-\alpha)\lambda v) \right)' = \frac{1}{v} - \frac{\lambda}{1+(1-\alpha)\lambda v}~; \\
\alpha = 1: \quad (\ln \left(v \cE(\lambda v) \right))' & = (\ln v - \lambda v)' = \frac{1}{v} - \lambda~.
\end{align*}
We further calculate the second derivative:
\begin{align*}
   \left(\frac{1}{v} - \frac{\lambda}{1+(1-\alpha)\lambda v}\right)' & = - \frac{1}{v^2} + \frac{(1-\alpha) \lambda^2}{(1+(1-\alpha)\lambda v)^2} \\
   & = -\frac{1+2(1-\alpha)\cdot \lambda v - \alpha (1-\alpha) \lambda^2v^2}{v^2(1+(1-\alpha)\lambda v)^2} \le 0~, \qquad \forall \alpha \lambda v \le 1
\end{align*}
Next, suppose $a \le a'$, and let $l = \lb$. Define $l' = l + a' - a$. It suffices to prove that $\Rev(l', \sF[a']) \ge \Gamma \cdot \opt(\sF[a'])$:
\begin{align*}
\Rev(l', \sF[a']) = l' \cE(\lambda l') & \ge \left(l \cE(\lambda l)\right) \cdot \left(a' \cE(\lambda a')\right) / \left(a \cE(\lambda a)\right) \\
& = \Gamma \cdot a' \cE(\lambda a') = \Gamma \cdot \opt(\sF[a'])~,
\end{align*}
where the inequality holds by the log-concavity of $v \to v\cE(\lambda v)$ and the fact that $l \le a,l' \le a'$ and $l+a' = l'+a$; the last equality holds by Lemma~\ref{lem:opt_price}.
\end{proof}

\begin{proof}{Proof of \Cref{claim1}.}
    For $\alpha\in [0,1)$, 
\begin{align*}
q_1 \cdot \left( 1-\llF(v) \right) & = q_1 \cdot \cE(\lambda^\dagger(v-b^\dagger))= q_1 \cdot \left( 1+\of{1-\alpha}\lambda^\dagger (v-b^\dagger) \right) ^{-\frac{1}{1-\alpha}} \\
& = \left( q_1^{-(1-\alpha)}+(1-\alpha) \lambda  \left(v-v_1 \right) \right)^{-\frac{1}{1-\alpha}} \\
& = \left( \lambda v_1 + (1-\alpha) \lambda  \left( v- v_1 \right) \right)^{-\frac{1}{1-\alpha}} = \left( (1-\alpha) \lambda v + \alpha \lambda v_1\right)^{-\frac{1}{1-\alpha}}  \\
& = \left( 1 + (1-\alpha) \lambda (v-b) \right)^{-\frac{1}{1-\alpha}} = 1-\lF(v).
\end{align*}
    For $\alpha = 1$, 
     \begin{align*}
     1-\llF(v) = & \cE(\lambda^\dagger(v-b^\dagger))= e^{-\lambda (v-\frac{1}{\lambda})} = e^{1-\lambda b} \cdot e^{-\lambda (v-b)} = \frac{1}{q_1}\of{ 1-\lF(v) } .
    \end{align*}
    Thus, for all $\alpha\in [0,1]$ and all $v\ge v_1$, we have $1-\lF(v) = q_1 \cdot (1-\llF(v))~.$
\end{proof}

\begin{proof}{Proof of \Cref{cor:monotone}.}
The necessity holds by Lemma~\ref{lem:lb-ub} and the fact that $\sF,\lF \in \cF$. 

Next, we prove the sufficiency of the two conditions by contradiction. Suppose there exists a monotone pricing policy $p$ satisfying the two conditions, but is not $\Gamma$-approximate. Then, there exists a distribution $\F \in \cF$ such that
\(
\Rev(p(\F), \F) < \Gamma \cdot \opt(\F)~.
\)
If $p(\F) \le p^*(\F)$, we apply Lemma~\ref{lem:small} and attain a distribution $\sF \in \scF$ such that
\[
\frac{\Rev(p(\sF),\sF)}{\opt(\sF)} \le \frac{\Rev(p(\F),\F)}{\opt(\F)} < \Gamma~. 
\]
Since $p(\sF) \le p^*(\sF)$, it must be the case that $p(\sF) < \lb$, contradicting the first condition of the statement.

If $p(\F) \ge p^*(\F)$, we apply Lemma~\ref{lem:large} and attain a distribution $\lF \in \lcF$ such that
\[
\frac{\Rev(p(\lF),\lF)}{\opt(\lF)} \le \frac{\Rev(p(\F),\F)}{\opt(\F)} < \Gamma~. 
\]
Since $p(\lF) \ge p^*(\lF)$, it must be the case that $p(\lF) > \ub$, contradicting the second condition of the statement.
\end{proof} 
\section{Proof of \Cref{prop: optimal numerical}}
\label{sec:numerical}

\newcommand{\aMax}{\bar{a}}
\newcommand{\aMin}{\underline{a}}
\newcommand{\aStep}{\Delta_a}
\newcommand{\nGridA}{n_{\mathbb{A}}}
\newcommand{\GridA}{\mathbb{A}}

\newcommand{\lamMax}{\bar{\lambda}}
\newcommand{\lamMin}{\underline{\lambda}}
\newcommand{\lamStep}{\Delta_{\lambda}}
\newcommand{\nGridL}{n_{\mathbb{L}}}
\newcommand{\GridL}{\mathbb{L}}

\newcommand{\bMax}{\bar{b}}
\newcommand{\bMin}{\underline{b}}
\newcommand{\bStep}{\Delta_{b}}
\newcommand{\nGridB}{n_{\mathbb{B}}}
\newcommand{\GridB}{\mathbb{B}}

\newcommand{\hDisc}{h^b}

\newcommand{\FhatLamBprime}{\widehat{\F}_{\lambda,b'}}
\newcommand{\FhatOneBprime}{\widehat{\F}_{1,b'}}
\newcommand{\FhatOneB}{\widehat{\F}_{1,b}}

\newcommand{\FcheckLamPrimeA}{\widecheck{\F}_{\lambda',a}}
\newcommand{\FcheckLamAprime}{\widecheck{\F}_{\lambda,a'}}
\newcommand{\FcheckOneA}{\widecheck{\F}_{1,a}}
\newcommand{\FcheckOneAlpha}{\widecheck{\F}_{1,1 / \alpha}}
\newcommand{\FcheckZeroAmin}{\widecheck{\F}_{0,\aMin}}

\newcommand{\boundUpperLam}{\mathsf{UB}(\FhatLamBprime,\Gamma)}
\newcommand{\boundUpperOne}{\mathsf{UB}(\FhatOneBprime,\Gamma)}
\newcommand{\boundLowerLam}{\mathsf{LB}(\FcheckLamPrimeA,\Gamma)}
\newcommand{\boundLowerOne}{\mathsf{LB}(\FcheckOneA,\Gamma)}
\newcommand{\boundLowerOneAlpha}{\mathsf{LB}(\FcheckOneAlpha,\Gamma)}
\newcommand{\boundLowerZeroAmin}{\mathsf{LB}(\FcheckZeroAmin,\Gamma)}

As established in Section~\ref{sec:monotone} and \ref{sec:optimal}, verifying the $\Gamma$-approximation requires demonstrating that for any $\lF \in \lcF$, there exists a function $h^{\Gamma}(\cdot,\lF)$ such that
$$\Ex{s \sim \lF}{h^{\Gamma}(s,\lF)} \le \ub,$$
and $h^{\Gamma}(~\cdot~,\lF) \in 
H^\Gamma$, where $H^\Gamma$ is defined as:
\begin{align*} 
H^\Gamma = 
\left\{
h: \R \to \R \;\middle|\;
\begin{aligned}
& \E_{s \sim \sF}\off{h(s)} \ge \lb, \quad \forall (\lambda, a) \in K = \{(\lambda,a) \mid \alpha \lambda a \le 1\} \subseteq \R^2; \\
& h \text{ is non-decreasing}
\end{aligned}
\right\}.
\end{align*}

It suffices to consider distributions of the form $\widehat{\mathbb{F}}_{1, b}$ with $b \geq 1$.\footnote{ This reduction holds because the general solution can be constructed by rescaling: $h^{\Gamma}(s,\lF) \triangleq h^{\Gamma}(\lambda s,\widehat{\mathbb{F}}_{1, \lambda b})$.} Specifically, we must show that there exists a function $h^{\Gamma}(\cdot, \widehat{\mathbb{F}}_{1, b})$ satisfying the following three conditions:
\begin{enumerate}
    \item Upper Bound: \begin{align}\Ex{s \sim \widehat{\mathbb{F}}_{1, b}}{h^{\Gamma}(s,\widehat{\mathbb{F}}_{1, b})} \le \mathsf{UB}(\widehat{\mathbb{F}}_{1, b}, \Gamma); \label{cond1:NM}\end{align}
    \item Lower Bound: \begin{align}\Ex{s \sim \widecheck{\mathbb{F}}_{\lambda, a}}{h^{\Gamma}(s,\widehat{\mathbb{F}}_{1, b})} \ge \mathsf{LB}(\widecheck{\mathbb{F}}_{\lambda, a},\Gamma), \quad  \forall(\lambda, a) \in K = \{(\lambda,a) \mid \lambda a \le 1\} \subseteq \R^2;\label{cond2:NM}\end{align}
    \item Monotonicity: \begin{align}h^{\Gamma}(s,\widehat{\mathbb{F}}_{1,b}) \text{ is non-decreasing in $s$.}\label{cond3:NM}\end{align}
\end{enumerate}

\paragraph{Discretization Approach}

A direct verification of these conditions over the continuous space is intractable. We therefore adopt a practical approach by discretizing the relevant parameters ($a$, $b$, and $\lambda$) and the function $h$. This transformation converts the continuous problem into a finite set of linear conditions, which can then be verified using linear programming.

We define the following discrete sets for the parameters:
\begin{itemize}
    \item \textbf{Parameter $a$:} The set $\GridA$ is a discrete grid from $\aMin = 1$ to $\aMax = 20$ with step size $\aStep = 0.01$:
    $$ \GridA = \{\aMin, \aMin + \aStep, \cdots, \aMax = \aMin + \nGridA \cdot \aStep\} $$
    \item \textbf{Parameter $b$:} The set $\GridB$ is a contiguous subset of $\GridA$, defined from $\bMin = 1$ to $\bMax = 10$ with the same step size $\bStep = 0.001$:
    $$\GridB = \{\bMin, \bMin + \bStep, \cdots, \bMax = \bMin + \nGridB \cdot \bStep\} \in \GridA$$
    \item \textbf{Parameter $\lambda$:} The set $\GridL$ is a discrete grid from $\lamMin = 0$ to $\lamMax = 1$ with step size $\lamStep = 0.001$:
    $$ \GridL = \{\lamMin, \lamMin + \lamStep, \cdots, \lamMax = \lamMin + \nGridL \cdot \lamStep\} $$
\end{itemize}
For convenience, we denote $\GridA[i] = \aMin + i \aStep$, $\GridB[j] = \bMin + j \bStep$, and $\GridL[k] = \lamMin + k \lamStep$. 

For a given $b$, we seek a piecewise constant function $\hDisc(s)$, with the breakpoints set $\GridA$, defined as follows:
\begin{align*}
    \hDisc(s) = \begin{cases}
        \hDisc(\GridA[0]) & \text{if } s \leq \GridA[0] \\
        \hDisc(\GridA[i]) & \text{if } \GridA[i-1] < s \leq \GridA[i], \text{ for } 1 \leq i \leq \nGridA \\
        \hDisc(\GridA[\nGridA]) + e \cdot (s - \bar{a}) & \text{if } s > \GridA[\nGridA] 
    \end{cases}.
\end{align*}

\paragraph{Discretized Conditions}
We restrict our verification to the points $b = \GridB[j]$ and $b' = \GridB[j - 1]$ for all $j \in \{1, \cdots, \nGridB\}$. Consequently, the continuous conditions are reformulated as a finite set of linear constraints on the values of $\hDisc$. The function $\hDisc$ must satisfy the following conditions, which are checked for all relevant discrete points.

\begin{enumerate}
    \item Monotonicity: The function $\hDisc$ must be non-decreasing on the discrete grid.
    \begin{align} \hDisc(\GridA[i]) \geq \hDisc(\GridA[i - 1]) \quad \forall i \in \{1, 2, \cdots, \nGridA\};  \label{cond1:DC}\end{align}

    \item Upper Bound: 
     \begin{align} \Ex{s \sim \FhatOneB}{\hDisc(s)} \le \boundUpperOne; \label{cond2:DC}\end{align}

    \item Lower Bound: This condition must hold for all $i \in \{1, \cdots, \nGridA\}$ and $k \in \{1, \cdots, \nGridL\}$, setting $a = \GridA[i]$, $a' = \GridA[i-1]$, $\lambda = \GridL[k]$, and $\lambda' = \GridL[k-1]$, subject to the constraint $\lambda' a' \leq 1$.
    \begin{align} \Ex{s \sim \sF}{\hDisc(s)} \ge \begin{cases} \boundLowerLam  & \text{if } \lambda' a \leq 1 \\
     {\mathsf{LB}(\widecheck{\F}_{\lambda', 1 /  \lambda'},\Gamma)} & \text{otherwise}\end{cases}. \label{cond3:DC}\end{align}
\end{enumerate}

\begin{theorem}\label{thm::descretization}
Let $\Gamma \leq 0.8$. If there exists a function $\hDisc$ such that conditions \eqref{cond1:DC}, \eqref{cond2:DC}, and \eqref{cond3:DC} hold for all pairs $(b = \GridB[j], b' = \GridB[j - 1])$ with $j \in \{1, \cdots, \nGridB\}$, then the $\Gamma$-approximation holds.
\end{theorem}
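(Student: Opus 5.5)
The plan is to turn the finitely many discrete certificates $\hDisc$ (one for each grid point $\GridB[j]$) into, for every $b \ge 1$, a function $h^{\Gamma}(\cdot,\widehat{\F}_{1,b})$ that satisfies the continuous conditions \eqref{cond1:NM}--\eqref{cond3:NM}. Once that is done, the rescaling in the footnote extends these functions to every $\lF \in \lcF$. Lemma~\ref{lem:positive} then yields the $\Gamma$-approximation, through the duality argument of Lemma~\ref{lem:negative}: any feasible $h \in H^\Gamma$ bounds $\mathrm{D}(\lF)$, and hence bounds $\Ex{s\sim\lF}{h^\Gamma(s,\lF)}$.

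\emph{Upper bound for $b$ in a grid cell.} Take $b \in [\GridB[j-1],\GridB[j]]$ and use the certificate $\hDisc$ built for $\GridB[j]$. Since $\widehat{\F}_{1,b}$ increases in $b$ in the first-order stochastic order and $\hDisc$ is non-decreasing by \eqref{cond1:DC}, a coupling gives
$\Ex{s\sim\widehat{\F}_{1,b}}{\hDisc(s)} \le \Ex{s\sim\widehat{\F}_{1,\GridB[j]}}{\hDisc(s)}$.
Condition \eqref{cond2:DC}, read with $b'=\GridB[j-1]$, bounds the right-hand side by $\mathsf{UB}(\widehat{\F}_{1,\GridB[j-1]},\Gamma)$. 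It then remains to check that $b \mapsto \mathsf{UB}(\widehat{\F}_{1,b},\Gamma)$ is non-decreasing on $b\ge1$. Here $\opt = b$ and the revenue is $v e^{-(v-b)}$ for $v \ge b$, so $\mathsf{UB}$ is the larger root of $v e^{-(v-b)} = \Gamma b$; an implicit-differentiation check should give the monotonicity. I expect the restriction $\Gamma \le 0.8$ to be used here, and in the analogous check for $\mathsf{LB}$ below.

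\emph{Lower bound over all of $K$.} Fix $(\lambda,a)$ with $\lambda a \le 1$ and split into three regimes.

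For $a \in (\GridA[i-1],\GridA[i]]$, every bit of mass of $\sF$ in $(\GridA[i-1],a]$, including the atom at $a$, receives the value $\hDisc(\GridA[i])$. The same holds for the mass of $\sF[\GridA[i]]$ on $(\GridA[i-1],\GridA[i]]$, so $\Ex{\sF}{\hDisc} = \Ex{\sF[\GridA[i]]}{\hDisc}$. Next, for $\lambda \in [\GridL[k-1],\GridL[k]]$, smaller $\lambda$ dominates, so this expectation is at least $\Ex{\widecheck{\F}_{\GridL[k],\GridA[i]}}{\hDisc}$, which \eqref{cond3:DC} bounds below. To finish, I need $\mathsf{LB}(\sF,\Gamma) \le \mathsf{LB}(\widecheck{\F}_{\GridL[k-1],\GridA[i]},\Gamma)$, or the clipped version $\mathsf{LB}(\widecheck{\F}_{\lambda',1/\lambda'},\Gamma)$ when $\lambda' a > 1$. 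This calls for monotonicity of $\mathsf{LB}$: non-decreasing in $a$, and non-increasing in $\lambda$ along the feasible region, where the clipping at $a = 1/\lambda'$ handles the boundary of $K$. I would prove both from the explicit equation $l e^{-\lambda l} = \Gamma \cdot \opt(\sF)$ using Lemma~\ref{lem:opt_price}.

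For $a < 1 = \GridA[0]$ (which allows $\lambda>1$), the expectation equals $\hDisc(\GridA[0])$. I would argue exactly as in the $a\le b$ case of Lemma~\ref{lem::primal-dual-used-discrete}: compare with the point mass $\widecheck{\F}_{0,1}$, whose $\mathsf{LB}$ is covered by the $k=1$ constraints, and use monotonicity of $\mathsf{LB}$ under dominance.

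For $a > \GridA[\nGridA] = 20$, the linear tail of slope $e = \alpha^{-1/(1-\alpha)}$ at $\alpha=1$ reproduces the chain of inequalities in Lemma~\ref{lem::primal-dual-used-discrete}, giving $\Ex{\sF}{\hDisc} \ge \mathsf{LB}(\sF[20],\Gamma) + (a-20)$. Lemma~\ref{lem:lb-lipschitz} then gives $\ge \mathsf{LB}(\sF,\Gamma)$.

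\emph{Main obstacle.} I expect the hardest parts to be the monotonicity of $\mathsf{LB}$ in $\lambda$ near the boundary $\lambda a = 1$, and the range $b > \bMax = 10$, which is not covered by the grid. For large $b$, I would try building the certificate by a shift: use $\hDisc$ for $\bMax$, shifted right by $b-\bMax$ and raised by $b-\bMax$. The aim is that the $\mathsf{LB}$ constraints survive via Lemma~\ref{lem:lb-lipschitz}, while $\mathsf{UB}(\widehat{\F}_{1,b},\Gamma)$ grows at least linearly in $b$, with slope at least one. If this fails, the fallback is a direct asymptotic argument: as $b\to\infty$, $\widehat{\F}_{1,b}$ concentrates near $b$ relative to its scale, so the simple policy $h(s)=\Gamma' s$ should satisfy both constraints with room to spare.
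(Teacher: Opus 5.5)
For $b\le\bar b=10$ your argument is the paper's. It uses the certificate for the next grid point above $b$, coupling plus monotonicity of $b\mapsto\mathsf{UB}(\widehat{\F}_{1,b},\Gamma)$ (which the paper uses implicitly), the cell reduction via Lemmas~\ref{lem::disc-a} and~\ref{lem::disc-lambda} with monotonicity of $\mathsf{LB}$, and the linear-tail chain for $a>\bar a$.

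The genuine gap is the range $b>\bar b$, which is also the only place the hypothesis $\Gamma\le 0.8$ is used. Your monotonicity facts hold for every $\Gamma$. Writing $\mathsf{UB}(\widehat{\F}_{1,b},\Gamma)=b+t(b)$ with $(1+t/b)e^{-t}=\Gamma$, implicit differentiation gives slope $1-\frac{t}{b(b+t-1)}\in[0,1)$ for $b\ge 1$. Likewise, monotonicity of $\mathsf{LB}$ in $a$ and $\lambda$ inside $K$ follows from $l e^{-\lambda l}=\Gamma a e^{-\lambda a}$ with no restriction on $\Gamma$.

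That slope formula also shows your shift-and-raise plan fails. Since $\mathsf{UB}$ grows with slope strictly below one ($t(b)$ decreases to $-\ln\Gamma$), raising the $b=10$ certificate by $\delta$ raises the expected price by exactly $\delta$ while $\mathsf{UB}$ rises by less. The hypothesis gives no slack in \eqref{cond2:DC} to absorb the difference, so the upper bound breaks for large $\delta$. The lower-bound side is not preserved either. For $s$ drawn from $\widecheck{\F}_{\lambda,a}$, the variable $s-\delta$ is stochastically dominated by a draw from $\widecheck{\F}_{\lambda,a-\delta}$, which is the wrong direction. Also, the tail of slope $e>1$ gives $h(s-\delta)+\delta<h(s)$ for large $s$, so Lemma~\ref{lem:lb-lipschitz} cannot rescue it.

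Your fallback, a linear rule, is what the paper does, but the step you wave through ("room to spare") is the substance. For $h(s)=c\,s$ against $\widehat{\F}_{1,b}$, scaling reduces the constraints over $K$ to $c\,(1-e^{-a})\ge\mathsf{LB}(\widecheck{\F}_{1,a},\Gamma)$ for $a\in[0,1]$. So $c$ must be at least $\max_{a\in[0,1]}\mathsf{LB}(\widecheck{\F}_{1,a},\Gamma)/(1-e^{-a})$, which strictly exceeds $\Gamma$. The upper bound then needs $c\,(b+1)\le\mathsf{UB}(\widehat{\F}_{1,b},\Gamma)$, for which $c\le b/(b+1)$ suffices since $\mathsf{UB}\ge b$. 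The paper verifies numerically that this maximum is about $0.868<0.9$ at $\Gamma=0.8$ (and smaller for smaller $\Gamma$), giving $c\,(b+1)<b\le\mathsf{UB}$ for $b>10$. Without that computation, and its dependence on $\Gamma\le 0.8$, the case $b>\bar b$ is not proved.
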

By setting $\Gamma = 0.790$, we numerically validated that these conditions hold for all such pairs.
\begin{proposition}\label{prop:optimal::dist}
    For the class of MHR distributions, there exists a function $h^{\Gamma}(s,\lF)$ that attains a $0.790$-approximation ratio.
\end{proposition}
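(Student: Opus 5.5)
The plan is to build, from the finite family $\{\hDisc\}$, a non-decreasing $h(\cdot,\widehat{\F}_{1,b})$ for every $b\ge 1$ that satisfies \eqref{cond1:NM}--\eqref{cond3:NM}. By the rescaling remark in the footnote, Lemma~\ref{lem:positive} then gives a $\Gamma$-approximation. Every approximation step will rest on one structural fact, which I would prove first from Lemma~\ref{lem:opt_price} and the explicit MHR formulas, in the same way as Lemma~\ref{lem:lb-lipschitz}: $\mathsf{LB}$ and $\mathsf{UB}$ are monotone along the two boundary families.

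\emph{Monotonicity of the bounds.} I would show that $\mathsf{LB}(\sF,\Gamma)$ is non-decreasing in $a$ and non-increasing in $\lambda$. I would also show that $\mathsf{UB}(\widehat{\F}_{1,b},\Gamma)$ is non-decreasing in $b$, and that $\mathsf{UB}(\widehat{\F}_{1,b},\Gamma)/b$ is non-decreasing for $b\ge\bMax$. I expect this last ratio claim to be where the hypothesis $\Gamma\le 0.8$ is needed: it should make the relevant root of $v e^{-(v-b)}=\Gamma b$ lie in the regime where the ratio behaves monotonically.

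\emph{Upper bound for $b\in[1,\bMax]$.} Take $b\in(\GridB[j-1],\GridB[j]]$ and use $h=\hDisc$ with grid point $\GridB[j]$. The shift family is ordered: $\widehat{\F}_{1,b}\distle\widehat{\F}_{1,\GridB[j]}$. Since $h$ is non-decreasing, a coupling argument gives
\[
\Ex{\widehat{\F}_{1,b}}{h} \le \Ex{\widehat{\F}_{1,\GridB[j]}}{h} \le \mathsf{UB}(\widehat{\F}_{1,\GridB[j-1]},\Gamma) \le \mathsf{UB}(\widehat{\F}_{1,b},\Gamma).
\]
The middle inequality is \eqref{cond2:DC}, and the last is the monotonicity of $\mathsf{UB}$ in $b$. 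Monotonicity \eqref{cond3:NM} follows from \eqref{cond1:DC} together with the tail slope $e\ge 0$.

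\emph{Lower bound, middle range $\tilde a\in(1,\aMax]$.} Take a continuous $(\tilde\lambda,\tilde a)$ with $\tilde\lambda\tilde a\le 1$, $\tilde a\in(a',a]$ and $\tilde\lambda\in[\lambda',\lambda]$ for consecutive grid points. Because $h$ is constant on $(a',a]$, truncating at $\tilde a$ gives the same expectation as truncating at $a$. A smaller $\lambda$ gives a stochastically larger distribution. Hence
\[
\Ex{\widecheck{\F}_{\tilde\lambda,\tilde a}}{h} \ge \Ex{\sF}{h} \ge \mathsf{LB}(\widecheck{\F}_{\lambda',a},\Gamma) \ge \mathsf{LB}(\widecheck{\F}_{\tilde\lambda,\tilde a},\Gamma),
\]
where the middle step is \eqref{cond3:DC} and the last uses the monotonicity of $\mathsf{LB}$. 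When $\lambda'a>1$, I would use the second branch of \eqref{cond3:DC} instead. There the optimal price is $1/\lambda'$ (Lemma~\ref{lem:opt_price}), so $\mathsf{LB}(\widecheck{\F}_{\lambda',1/\lambda'},\Gamma)$ dominates $\mathsf{LB}(\widecheck{\F}_{\tilde\lambda,\tilde a},\Gamma)$ because $\tilde a\le 1/\tilde\lambda\le 1/\lambda'$.

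\emph{Lower bound, remaining ranges of $a$.}
\begin{itemize}
\item For $\tilde a\le\aMin$, the distribution is supported on $[0,1]$, so the expectation equals $\hDisc(\GridA[0])$. This is at least the $\mathsf{LB}$ of the point mass at $\aMin$, which dominates $\widecheck{\F}_{\tilde\lambda,\tilde a}$ stochastically.
\item For $\tilde a>\aMax$, I would copy the tail argument from the proof of Lemma~\ref{lem::primal-dual-used-discrete}. The linear piece of slope $e=\alpha^{-1/(1-\alpha)}$ adds at least $\tilde a-\aMax$ to the expectation, and Lemma~\ref{lem:lb-lipschitz} absorbs the resulting increase in $\mathsf{LB}$.
\end{itemize}

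\emph{Large $b>\bMax$ (main obstacle).} The grid stops at $\bMax$, so for $b>\bMax$ I would define $h(s)=\tfrac{b}{\bMax}\,\hDisc(\bMax s/b)$ with grid point $\bMax$. The constraint set $K$ and $\mathsf{LB}$ are invariant under the scaling $(\lambda,a)\mapsto(c\lambda,a/c)$, with $\mathsf{LB}$ scaling linearly, so the lower bounds transfer. For the upper bound, $\widehat{\F}_{1,b}$ rescaled by $\bMax/b$ is $\widehat{\F}_{b/\bMax,\bMax}$. Since $b/\bMax\ge 1$, this is stochastically dominated by $\widehat{\F}_{1,\bMax}$, so
\[
\Ex{\widehat{\F}_{1,b}}{h} \le \tfrac{b}{\bMax}\,\mathsf{UB}(\widehat{\F}_{1,\bMax},\Gamma) \le \mathsf{UB}(\widehat{\F}_{1,b},\Gamma).
\]
The first inequality uses the grid condition \eqref{cond2:DC} at $\bMax$, which is stated with the preceding grid point, together with the monotonicity of $\mathsf{UB}$. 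The second is the ratio monotonicity, which is where $\Gamma\le0.8$ enters. If that ratio claim fails, I would first try to bound $\mathsf{UB}(\widehat{\F}_{1,b},\Gamma)/b$ directly against $\Gamma$ through the Lambert-$W$ characterization.
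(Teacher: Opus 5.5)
Your treatment of $b\le\bar b$ and of all the lower-bound constraints is essentially the paper's argument. You also make explicit that $\mathsf{UB}(\widehat{\F}_{1,b},\Gamma)$ is non-decreasing in $b$, which the paper uses tacitly. The gap is in the regime $b>\bar b$, where the ratio claim you rely on is false in the wrong direction. By homogeneity, $\mathsf{UB}(\widehat{\F}_{1,b},\Gamma)/b=\mathsf{UB}(\widehat{\F}_{b,1},\Gamma)$, which is the largest $v\ge 1$ with $v e^{-b(v-1)}\ge\Gamma$. Since $ve^{-b(v-1)}$ decreases in $b$ for every $v>1$, this ratio is strictly \emph{decreasing} in $b$ and tends to $1$. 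At $\Gamma=0.79$ it is about $1.026$ at $b=10$ and about $1.002$ at $b=100$. Hence $\frac{b}{\bar b}\,\mathsf{UB}(\widehat{\F}_{1,\bar b},\Gamma)>\mathsf{UB}(\widehat{\F}_{1,b},\Gamma)$ for every $b>\bar b$, with a gap growing linearly in $b$. No Lambert-$W$ estimate will close it.

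This is not just a lossy bound; the rescaled construction itself is unsupported. As $b\to\infty$, $\widehat{\F}_{b/\bar b,\bar b}$ concentrates at $\bar b$, so $\E_{s\sim\widehat{\F}_{1,b}}[h(s)]=\frac{b}{\bar b}\,h^{\bar b}(\bar b^{+})+o(1)$. Meanwhile $\mathsf{UB}(\widehat{\F}_{1,b},\Gamma)=b-\ln\Gamma+o(1)$. You would therefore need $h^{\bar b}(\bar b^{+})\le\bar b$, which is not among the verified conditions. The grid condition \eqref{cond2:DC} only gives $h^{\bar b}(\bar b^{+})\le\mathsf{UB}(\widehat{\F}_{1,\bar b-\Delta_b},\Gamma)\approx 1.026\,\bar b$.

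The paper avoids the grid functions entirely for $b>\bar b$. It uses the linear rule $h(s)=c\,s$ with $c=\max_{a\in[0,1]}\mathsf{LB}(\widecheck{\F}_{1,a},\Gamma)/(1-e^{-a})$.
\begin{itemize}
\item The lower bounds hold on all of $K$ by scaling: $\E_{\widecheck{\F}_{\lambda,a}}[s]=\frac{1}{\lambda}(1-e^{-\lambda a})$ and $\mathsf{LB}(\widecheck{\F}_{\lambda,a},\Gamma)=\frac{1}{\lambda}\mathsf{LB}(\widecheck{\F}_{1,\lambda a},\Gamma)$, with $\lambda a\le 1$.
\item For $\Gamma\le 0.8$ one checks numerically that $c<0.9$ (about $0.868$ at $\Gamma=0.8$).
\item Hence $\E_{\widehat{\F}_{1,b}}[c\,s]=c(b+1)<b\le\mathsf{UB}(\widehat{\F}_{1,b},\Gamma)$ once $b>9$.
\end{itemize}
That is where the hypothesis $\Gamma\le 0.8$ enters, not in any monotonicity of $\mathsf{UB}/b$.
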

The proof of Theorem~\ref{thm::descretization} relies on the following three lemmas.
\begin{lemma}\label{lem::disc-a}
    Suppose $\hDisc(s)$ is a piecewise constant function with breakpoint set $\GridA$. If $a \in (\GridA[i - 1], \GridA[i])$ for some $1 \leq i \leq \nGridA$, and we set $a' = \GridA[i]$, then:
    $$\Ex{s \sim \sF}{\hDisc(s)} = \Ex{s \sim \FcheckLamAprime}{\hDisc(s)}.$$
\end{lemma}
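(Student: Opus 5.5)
The plan is to write the expectation of $\hDisc$ under $\sF$ as a sum over the grid cells. Since $\hDisc$ is constant on each cell $(\GridA[\ell-1],\GridA[\ell]]$, the expectation is determined by the masses that $\sF$ and $\FcheckLamAprime$ put on the cells, together with the tail beyond $\aMax$. Both distributions have a point mass at their truncation point, so I will compare these masses cell by cell.

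Under $\sF$, the mass on $(\GridA[\ell-1],\GridA[\ell]]$ is $\cE(\lambda\GridA[\ell-1])-\cE(\lambda\GridA[\ell])$ for every $\ell\le i-1$. The mass on $(-\infty,\GridA[0]]$ is $1-\cE(\lambda\GridA[0])$. All of this is identical under $\FcheckLamAprime$, because the two distribution functions coincide on $[0,\GridA[i-1]]\subseteq[0,a)$.

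The remaining mass under $\sF$ is $\cE(\lambda\GridA[i-1])$, and it is concentrated on $(\GridA[i-1],a]$. This consists of the continuous part on $(\GridA[i-1],a)$ plus the atom $\cE(\lambda a)$ at $a$. Since $a<\GridA[i]$, all of this mass lies in the single cell $(\GridA[i-1],\GridA[i]]$. Under $\FcheckLamAprime$ with $a'=\GridA[i]$, the remaining mass $\cE(\lambda\GridA[i-1])$ also sits entirely in that same cell: the continuous part lies on $(\GridA[i-1],\GridA[i])$ and the atom is at $\GridA[i]$, which belongs to the right-closed cell.

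Neither distribution charges $(\GridA[i],\infty)$, so the linear tail piece $e\cdot(s-\aMax)$ contributes nothing to either expectation. Both expectations therefore equal $\sum_{\ell<i}\hDisc(\GridA[\ell])\cdot(\text{cell mass})+\hDisc(\GridA[i])\,\cE(\lambda\GridA[i-1])$, and the claimed equality follows.

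The only point requiring care is the cell convention. The cells are right-closed, so the atom at $a'=\GridA[i]$ is assigned the value $\hDisc(\GridA[i])$, consistent with the definition of $\hDisc$. The case $i=1$ needs a separate check, where the first cell is $(-\infty,\GridA[0]]$ together with $(\GridA[0],\GridA[1]]$, but the same argument applies.

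There is no real obstacle here beyond this bookkeeping. I also do not need the constraint $\lambda a'\le1$, because the computation does not use membership in $\scF$.
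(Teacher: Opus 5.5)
Your proof is correct and uses the same idea as the paper's one-line argument: $\sF$ and $\sF[a']$ differ only in how they spread mass over $[a,a']$, and $h^b$ is constant there because $[a,a']\subseteq(\mathbb{A}[i-1],\mathbb{A}[i]]$. Your cell-by-cell accounting makes this explicit, including the right-closed cell convention that puts the atom at $a'=\mathbb{A}[i]$ in that same cell.
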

\begin{proof}{Proof.}
    The result holds as $h^b(s)$ are the same during the interval $[a, a']$.
\end{proof}

\begin{lemma}\label{lem::disc-lambda}
    Suppose $\hDisc(s)$ is a non-decreasing piecewise constant function with breakpoint set $\GridA$. Provided that $a \leq \aMax$, then for any $\lambda < \lambda'$, 
    $$\Ex{s \sim \sF}{\hDisc(s)} \geq \Ex{s \sim \FcheckLamPrimeA}{\hDisc(s)}.$$
\end{lemma}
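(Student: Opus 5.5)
We prove the claim by showing that, for fixed $a \le \aMax$, increasing $\lambda$ makes $\sF$ stochastically smaller. Then monotonicity of $\hDisc$ and the coupling argument already used in the proof of Lemma~\ref{lem:positive} give the inequality.

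\textbf{Step 1: survival functions.} For $v<a$ we have $1-\sF(v) = \cE(\lambda v)$, and for $v \ge a$ we have $1-\sF(v)=0$; the same holds for $\FcheckLamPrimeA$ with $\lambda'$ in place of $\lambda$. Since $\cE$ is non-increasing and $\lambda v \le \lambda' v$ for $v \ge 0$, we get $\cE(\lambda v) \ge \cE(\lambda' v)$ on $[0,a)$. The two survival functions both vanish on $[a,\infty)$. Hence $\FcheckLamPrimeA \distle \sF$.

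\textbf{Step 2: the layer-cake formula.} Because $\hDisc$ is non-decreasing, we can write, for any distribution $\F$ supported on $[0,\infty)$,
\[
\Ex{s\sim\F}{\hDisc(s)} = \hDisc(0) + \int_0^\infty \bigl(1-\F(u)\bigr)\dd \hDisc(u).
\]
Here $\dd\hDisc$ is a non-negative measure. It consists of point masses at the breakpoints $\GridA[i]$, together with the linear tail beyond $\aMax$. Since $a \le \aMax$, both distributions put no mass beyond $\aMax$, so the tail of $\hDisc$ never enters either expectation. Only the finitely many nonnegative jumps $\hDisc(\GridA[i])-\hDisc(\GridA[i-1])$ contribute, each weighted by the survival probability at the corresponding point.

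\textbf{Step 3: conclusion.} Each survival probability under $\sF$ is at least the corresponding one under $\FcheckLamPrimeA$ by Step 1, and the jumps are nonnegative. Hence the difference of the two expectations is a nonnegative combination, which proves the lemma.

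\textbf{Expected obstacle.} The only point needing care is the left-continuity convention of the piecewise definition: $\hDisc$ takes the value $\hDisc(\GridA[i])$ on $(\GridA[i-1],\GridA[i]]$. Combined with the point mass of size $\cE(\lambda a)$ at $a$, this must be handled so that the survival function is evaluated on the correct side, using $\P[s \ge u]$ versus $\P[s>u]$. I would handle this by comparing $\P[s \ge u]$ directly, which is $\cE(\lambda u)$ for $u \le a$ and $0$ for $u>a$. This quantity is still monotone in $\lambda$, so the argument goes through. The point mass at $0$ cannot cause trouble since $\cE(0)=1$ for both distributions.
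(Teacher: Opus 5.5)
Your proof is correct and is essentially the paper's argument. The paper applies summation by parts to write $\E_{s\sim\sF}[h^b(s)]$ as $h^b(\mathbb{A}[n_{\mathbb{A}}])$ plus a combination of the CDF values $\sF(\mathbb{A}[i])$ with non-positive coefficients $h^b(\mathbb{A}[i])-h^b(\mathbb{A}[i+1])$, using $a\le\bar{a}$ so that $\sF(\mathbb{A}[n_{\mathbb{A}}])=1$. It then observes that $\sF(v)$ is non-decreasing in $\lambda$ for each fixed $v$. This is exactly your layer-cake formula and Step~1, phrased with CDFs instead of survival functions.

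One minor remark on your last paragraph. Because $h^b$ is left-continuous, the weights are $\P[s>\mathbb{A}[i-1]]=1-\sF(\mathbb{A}[i-1])$, as your Step~2 correctly has, and not $\P[s\ge u]$. Using $\P[s\ge u]$ would miscount the atom at $a$ when $a$ is a grid point. Both quantities are monotone in $\lambda$, so the conclusion is unaffected.
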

\begin{proof}{Proof.}
    Consider 
    \begin{align*}\Ex{s \sim \sF}{h(s)}  &= \sum_{i = 1}^{\nGridA} \hDisc(\GridA[i]) \left(\sF(\GridA[i]) - \sF(\GridA[i-1])\right) + \hDisc(\GridA[0])\sF(\GridA[0]) \\
    &= \sum_{i = 0}^{\nGridA - 1} \sF(\GridA[i]) \left(\hDisc(\GridA[i]) - \hDisc(\GridA[i + 1])\right) + \sF(\GridA[\nGridA]) \hDisc(\GridA[\nGridA]) \\
    &=\sum_{i = 0}^{\nGridA - 1} \sF(\GridA[i]) \left(\hDisc(\GridA[i]) - \hDisc(\GridA[i + 1])\right) + \hDisc(\GridA[\nGridA]).
    \end{align*}
    The last equality holds as $\sF(\GridA[\nGridA]) = 1$ with $a \leq \aMax$. 

    Since $\hDisc$ is non-decreasing, it follows that $\hDisc(\GridA[i]) - \hDisc(\GridA[i + 1]) \leq 0$. Therefore, to prove that the total expectation is non-increasing with respect to $\lambda$, it suffices to show that $\sF(v)$ is non-decreasing with respect to $\lambda$ for any fixed $v$. Recall the definition: \begin{align*}
\sF(v) := 
\begin{cases}  1-\cE(\lambda v) &  v\in [0,a) \\
1 &  v \in [a,\infty)  
\end{cases} 
\quad  \text{ where }\cE(v) = e^{-v}
\end{align*}
The result follows by taking the derivative with respect to $\lambda$.
\end{proof}

\begin{lemma} \label{lem::dc-b}
    Suppose $\hDisc(x)$ is a non-decreasing function. For any $b' < b \leq \aMax$, 
    $$\Ex{s \sim \FhatOneB}{\hDisc(s)} \geq \Ex{s \sim \FhatOneBprime}{\hDisc(s)}.$$
\end{lemma}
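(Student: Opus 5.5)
The claim is a first-order stochastic dominance statement: I will show that $\FhatOneB \distge \FhatOneBprime$ whenever $b' < b$, and then use the standard coupling argument already invoked in the proof of Lemma~\ref{lem:positive}. A non-decreasing function has weakly larger expectation under a stochastically larger distribution.

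For the dominance step, recall from \eqref{eq:worstF} that $1-\widehat{\F}_{1,b}(v) = 1$ for $v < b$ and $1-\widehat{\F}_{1,b}(v) = \cE(v-b)$ for $v \ge b$. Here $\cE(v)=e^{-v}$ in the MHR setting of this appendix; the same argument works for general $\alpha$ because $\cE$ is non-increasing. I compare survival functions on three ranges.
\begin{itemize}
\item For $v < b'$, both survival functions equal $1$.
\item For $b' \le v < b$, the left side is $1$ and the right side is $\cE(v-b') \le 1$.
\item For $v \ge b$, we have $v-b \le v-b'$, so $\cE(v-b) \ge \cE(v-b')$ by monotonicity of $\cE$.
\end{itemize}
Hence $1-\FhatOneB(v) \ge 1-\FhatOneBprime(v)$ for all $v$. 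This is exactly $\FhatOneB \distge \FhatOneBprime$. Equivalently, $\FhatOneB$ is the translate of $\FhatOneBprime$ by $b-b'>0$.

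Given the shift structure, I will use the explicit coupling $s' \sim \FhatOneBprime$ and $s = s' + (b-b')$. Then $s \sim \FhatOneB$ and $s \ge s'$ pointwise, so $\hDisc(s) \ge \hDisc(s')$ pointwise because $\hDisc$ is non-decreasing. Taking expectations gives the claim.

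The only care needed is that both expectations are well-defined, since $\hDisc$ grows linearly beyond $\aMax$ and the distributions have exponential tails. Finiteness of both expectations follows from the finite mean of $\widehat{\F}_{1,b}$ for $\alpha=1$, and $\hDisc$ is bounded below by $\hDisc(\GridA[0])$, so the pointwise inequality integrates without issue. The hypothesis $b \le \aMax$ is not needed for the inequality itself. I expect no real obstacle beyond this integrability remark.
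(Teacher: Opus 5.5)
Your proposal is correct and follows the same route as the paper: its one-line proof rests on exactly the fact you make explicit, namely that $\widehat{\F}_{1,b}$ is a rightward translate of $\widehat{\F}_{1,b'}$ and hence first-order stochastically dominates it, so a non-decreasing $h^b$ has weakly larger expectation under it. Your translation coupling and integrability remark only fill in details the paper leaves implicit.
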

\begin{proof}{Proof.}
    The result holds as $h^b(s)$ is a non-decreasing function.
\end{proof}
\begin{proof}{Proof of Theorem~\ref{thm::descretization}.}
    We first give the construction of $h^{\Gamma}(s,\lF)$:\footnote{We note $\lambda b \geq 1$.} 
    \begin{enumerate}
    \item[-] For $\lambda b > \bMax$, we define $c = \max_{a \in [0, 1]} \frac{\boundLowerOne}{1 - e^{-a}}$, and set $h^{\Gamma}(s,\lF) = c \cdot \lambda \cdot s$.
    \item[-] For $\lambda b \leq \bMax$, let $b' = \arg\min_{b'} \{b' \in \GridB ~|~ b' \geq b\}$, and set $h^{\Gamma}(s,\lF) = h^{b'}(\lambda s)$.
    \end{enumerate} 
    Without loss of generality, in the following proof, we focus on $\widehat{\mathbb{F}}_{1, b}$.
    In both cases, the monotonicity condition \eqref{cond3:NM} holds by this construction. We now separate the proof between large and small values of $b$. 

    \textbf{Case 1: $b > \bMax = 10$.}
    By numerical analysis\footnote{We aim to compute $\max_{a \in [0, 1]} \frac{\boundLowerOne}{1 - e^{-a}}$, where $1 - e^{-a} = \Ex{x \sim \FcheckOneA}{x}$. For convenience, denote $g(a) = \boundLowerOne$.
\begin{enumerate}
\item[-] Let $a^*$ be the value such that the derivative of $\frac{g(a)}{1 - e^{-a}}$ with respect to $a$ at $a^*$ is $0$.
Because $g(a) e^{-g(a)} = \Gamma a e^{-a}$. Taking the logarithms and the differentiating gives $\frac{g'(a^*)}{g(a^*)} - g'(a^*) = \frac{1}{a^*} - 1$.
\item[-] The derivative of $\frac{g(a)}{1 - e^{-a}}$ is $\frac{g'(a) (1 - e^{-a}) - g(a) e^{-a}}{(1 - e^{-a})^2}$. This implies, $g'(a^*) (1 - e^{-a^*}) - g(a^*) e^{-a^*} = 0$.
\end{enumerate}
Combining these two condition implies $g(a^*) = 1 - (e^{a^*} - 1) \frac{1 - a^*}{a^*}$.

We now seek $a^*$ satisfying $\ln g(a^*) - g(a^*) - \ln a^* + a^* = \log \Gamma$. The LHS is a strictly increasing function. 
For $\Gamma = 0.8$, we obtain $a^* \approx 0.47525$, $g(a^*) \approx 0.32821$, $1 - e^{-a^*} \approx 0.37827$, giving the ratio $0.86766$.  For $\Gamma < 0.8$, the maximum value of the ratio $\frac{\boundLowerOne}{1 - e^{-a}}$ will be even smaller.
}, $c = \max_{a \in [0, 1]} \frac{\boundLowerOne}{1 - e^{-a}} < 0.9$.  The upper bound condition \eqref{cond1:NM} is satisfied because: $$\Ex{s \sim \widehat{\F}_{1, b}}{h^{\Gamma}(s,\widehat{\F}_{1, b})} = c \cdot \Ex{s \sim \widehat{\F}_{1, b}}{s} = c \cdot (b+1) < b \leq \mathsf{UB}(\widehat{\F}_{1,b},\Gamma) \quad \quad\text{(since $b > 10$)}.$$
    
    For the lower bound condition \eqref{cond2:NM}, provided $\lambda a \leq 1$, the expectation $$\Ex{s \sim \sF}{h^{\Gamma}(s,\sF)} = c \cdot \Ex{s \sim \sF}{s} =  \frac{c}{\lambda} \cdot \Ex{s \sim \widecheck{\F}_{1,\lambda a}}{s} \ge \frac{1}{\lambda}\mathsf{LB}(\widecheck{\F}_{1,\lambda a},\Gamma) = \lb.$$

    \textbf{Case 2: $b \leq \bMax$.}
    For the upper bound condition \eqref{cond1:NM}, the expectation with respect to $\FhatOneB$ is given by:
\begin{align*}
    \Ex{s \sim \FhatOneB}{h^{\Gamma}(s,\FhatOneB)} &= \Ex{s \sim \FhatOneB}{h^{b'}(s)} \leq  \Ex{s \sim \widehat{\mathbb{F}}_{1, b'}}{h^{b'}(s)} \leq \mathsf{UB}(\widehat{\F}_{1,b},\Gamma). 
\end{align*}
The inequality follows by Lemma~\ref{lem::dc-b}.

    It remains to verify $\Ex{s \sim \sF}{\hDisc(s)} \ge \lb$ holds for any $\lambda$ and $a$ with $ \lambda a \leq 1$, and any $b \in \GridB$.
    
    We first address the boundary cases. For $a > \aMax$, the lower bound $\Ex{s \sim \sF}{\hDisc(s)} \ge \boundLowerLam$ is established in Lemma~\ref{lem::primal-dual-used-discrete}. For $a < \aMin$, we have: $$\Ex{s \sim \sF}{\hDisc(s)} = \Ex{s \sim \FcheckZeroAmin}{\hDisc(s)} = \hDisc(\aMin) \geq \boundLowerZeroAmin = \Gamma \cdot \aMin. $$Observing that pricing with $\Gamma \cdot \aMin$ guarantees a $\Gamma$-approximation of the optimal revenue under $\sF$, it follows that $\Ex{s \sim \sF}{\hDisc(s)} \geq \lb$.
      
    Consequently, for the remainder of the proof, we assume $\aMin \leq a \leq \aMax$ and $\lambda \leq 1$. Let $a' = \arg\min \{ a' \in \GridA ~|~ a' \geq a \}$ and $\lambda' = \arg\max\{ \lambda' \in \GridL ~|~ \lambda' \geq \lambda\}$. We distinguish between two subcases:
    If $\alpha (\lambda' - \lamStep) a' \leq 1$, then
    $$\Ex{s \sim \sF}{\hDisc(s)} = \Ex{s \sim  \widecheck{\F}_{\lambda, a'}}{\hDisc(s)} \geq \Ex{s \sim \widecheck{\F}_{\lambda', a'}}{\hDisc(s)} \geq {\mathsf{LB}(\widecheck{\F}_{\lambda' - \lamStep, a'},\Gamma)} \geq {\mathsf{LB}(\widecheck{\F}_{\lambda, a},\Gamma)}.$$
    The first equality and inequality follow from Lemma~\ref{lem::disc-a} and Lemma~\ref{lem::disc-lambda}, respectively.
    Otherwise, 
    $$\Ex{s \sim \sF}{\hDisc(s)} = \Ex{s \sim  \widecheck{\F}_{\lambda, a'}}{\hDisc(s)} \geq \Ex{s \sim \widecheck{\F}_{\lambda', a'}}{\hDisc(s)} \geq {\mathsf{LB}(\widecheck{\F}_{\lambda' - \lamStep,  1 /  (\lambda'- \lamStep)},\Gamma)}  \geq \lb.$$
    The final inequality holds because $\lambda > \lambda' - \lamStep$ and $a \leq 1 / (\alpha (\lambda'- \lamStep))$.
\end{proof}

\paragraph{Upper Bound on $\Gamma$.} To establish an upper bound on $\Gamma$, we employ a discretized version of \ref{lp:primal}. Specifically, we consider the following optimization program after discretization:
\begin{align}  
\max_{K(a, \lambda) \geq 0  }: \quad  &  \sum_{a \in \GridA, \lambda \in \GridL} K(a, \lambda) \lb \nonumber\\
\text{subject to}: \quad & \sum_{a \in \GridA, \lambda \in \GridL} K(a, \lambda) \left( 1-\sF(s) \right)  \dd \G \le 1-\widehat{\mathbb{F}}_{1, b}(s + \aStep) & \forall s \in \GridA  \nonumber\\
& \sum_{a \in \GridA, \lambda \in \GridL} K(a, \lambda)  = 1. \label{program::discret::dual}
\end{align}

\begin{theorem}
    The optimal value of \ref{lp:primal} is lower bounded by the optimal value of \eqref{program::discret::dual}.
\end{theorem}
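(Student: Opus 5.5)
The plan is to exhibit, from any optimal solution $K^\ast$ of the discretized program \eqref{program::discret::dual} for a fixed $b$ (so $\lF$ is $\widehat{\mathbb{F}}_{1,b}$), a feasible solution $\G$ of \ref{lp:primal} with $\F=\widehat{\mathbb{F}}_{1,b}$ and the same objective value. Since \ref{lp:primal} maximizes, this gives $\mathrm{P}(\widehat{\mathbb{F}}_{1,b}) \ge$ value of \eqref{program::discret::dual}. The natural candidate is the discrete measure $\G = \sum_{a\in\GridA,\lambda\in\GridL} K^\ast(a,\lambda)\,\delta_{(\lambda,a)}$. It is a probability measure because of the normalization constraint and $K^\ast\ge 0$, and it is supported on $K$ because the grid points used satisfy $\alpha\lambda a\le 1$ (I would restrict the index set to such points, as is implicit in the program). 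Its objective equals $\sum K^\ast(a,\lambda)\,\lb$ exactly, since the objective is linear in $\G$.

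The only real work is feasibility. The program only imposes $\int_K (1-\sF(s))\dd\G \le 1-\widehat{\mathbb{F}}_{1,b}(s)$ at grid points $s\in\GridA$, and with a shifted right-hand side $1-\widehat{\mathbb{F}}_{1,b}(s+\aStep)$, whereas \ref{lp:primal} requires it for every $s\in\R$. Write $\Phi(s):=\sum K^\ast(a,\lambda)(1-\sF(s))$. This is non-increasing in $s$ as a mixture of survival functions, and $1-\widehat{\mathbb{F}}_{1,b}$ is also non-increasing. For $s\in(\GridA[i-1],\GridA[i]]$, monotonicity gives $\Phi(s)\le\Phi(\GridA[i-1])\le 1-\widehat{\mathbb{F}}_{1,b}(\GridA[i-1]+\aStep)=1-\widehat{\mathbb{F}}_{1,b}(\GridA[i])\le 1-\widehat{\mathbb{F}}_{1,b}(s)$. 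This sandwich is exactly why the shift by $\aStep$ appears.

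The remaining ranges of $s$ are the boundary pieces. For $s<\GridA[0]=\aMin=1\le b$ the right-hand side is $1$, and $\Phi\le 1$ trivially. For $s>\aMax$ every grid $a$ is at most $\aMax$, so $1-\sF(s)=0$ and $\Phi(s)=0$. The point $s=\aMax$ itself is covered by the grid constraint and monotonicity. I expect the main (mild) obstacle to be matching conventions: left limits versus point masses (the paper's footnote about $1-\F(v-)$), the atom of $\sF$ at $a$ sitting exactly on a grid point, and checking that the interval argument respects the right-continuity used in the constraint. I would handle this by working with survival functions evaluated consistently as $\P[v\ge s]$ on both sides and verifying the endpoint cases separately. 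Once feasibility holds, $\mathrm{P}(\widehat{\mathbb{F}}_{1,b})\ge\sum K^\ast\,\lb$, which proves the theorem. Combined with Lemma~\ref{lem::primal-dual-used-discrete} and Lemma~\ref{lem:negative}, this yields the numerical upper bound on $\Gamma^\ast$.
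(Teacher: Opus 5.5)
Your argument is correct, and it is the natural proof of a statement the paper records without proof. The discrete weights give a finitely supported $\G$ with the same objective. The $\aStep$ shift on the right-hand side is exactly what lets monotonicity of both survival functions extend the grid constraints to every $s\in\R$, with the ranges $s<\aMin\le b$ and $s>\aMax$ trivial.

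One step you should not wave through as ``implicit in the program.'' As written, \eqref{program::discret::dual} sums over all of $\GridA\times\GridL$, and this grid contains pairs with $\lambda a>1$ (e.g.\ $\lambda=1$, $a=20$). If you simply restrict the index set to $K$, the discrete optimum could drop, and your argument would then no longer prove the stated inequality.

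The repair is one line: replace any such $(\lambda,a)$ by $(\lambda,1/\lambda)\in K$.
\begin{itemize}
\item Both distributions have revenue $v e^{-\lambda v}$ on $[0,1/\lambda]$, with optimal revenue $1/(e\lambda)$ attained at $1/\lambda$. Hence they have the same $\mathsf{LB}(\cdot,\Gamma)$.
\item The survival function of $\widecheck{\F}_{\lambda,1/\lambda}$ lies pointwise below that of $\sF$.
\end{itemize}
So the modified $\G$ is still feasible for \ref{lp:primal}, with unchanged objective. This is the same substitution the paper itself uses in \eqref{cond3:DC}.
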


Based on this theorem, to establish an upper bound on the $\Gamma$-approximation, it suffices to identify parameters $b$ and $\Gamma$ such that the optimal value of \eqref{program::discret::dual} exceeds $\mathsf{UB}(\widehat{\mathbb{F}}_{1, b}, \Gamma)$.
When $\Gamma = 0.796$ and $b = 1.95$, the optimal solution to \eqref{program::discret::dual} numerically exceeds $2.3765$, whereas $\mathsf{UB}(\widehat{\mathbb{F}}_{1, b}, \Gamma) \leq 2.3756$.

\begin{proposition}
    For the class of MHR distributions, no concave pricing policy can achieve an approximation ratio better than $0.796$.
\end{proposition}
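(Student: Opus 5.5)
The plan is to use the chain of inequalities from the proof of Lemma~\ref{lem:negative}, applied to a single maximal distribution, namely $\widehat{\F}_{1,b}$ with $b=1.95$, together with a certified lower bound on the primal value $\mathrm{P}(\widehat{\F}_{1,b})$ obtained from the discretized program \eqref{program::discret::dual}. Suppose a concave pricing policy $p$ is $\Gamma$-approximate over MHR distributions with $\Gamma \ge 0.796$. The argument then has four steps.

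\emph{Step 1 (upper bound on the price at the test distribution).} Since $\widehat{\F}_{1,b} \in \lcF$ is MHR, Lemma~\ref{lem:lb-ub} gives $p(\widehat{\F}_{1,b}) \le \mathsf{UB}(\widehat{\F}_{1,b},\Gamma)$. Similarly, $p(\sF) \ge \lb$ for every $\sF \in \scF$. Because $\mathsf{UB}(\cdot,\Gamma)$ decreases and $\mathsf{LB}(\cdot,\Gamma)$ increases in $\Gamma$, it suffices to work at $\Gamma=0.796$. There, evaluating $\mathsf{UB}$ through the Lambert-$W$ equation $v e^{-(v-b)} = \Gamma \cdot \opt$ gives $\mathsf{UB}(\widehat{\F}_{1,1.95},0.796) \le 2.3756$.

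\emph{Step 2 (lower bound on the price via a mixture).} Take the numerically optimal weights $K(a,\lambda)$ of \eqref{program::discret::dual} and view them as a finitely supported $\G$ on $K$. I will check that $\G$ is feasible for \ref{lp:primal} at $\widehat{\F}_{1,b}$. For $s$ in a grid cell $(\GridA[i-1],\GridA[i]]$, the survival function of each $\sF$ at $s$ is at most its value at $\GridA[i-1]$. The program bounds the mixture's survival at $\GridA[i-1]$ by $1-\widehat{\F}_{1,b}(\GridA[i-1]+\aStep)$, which is at most $1-\widehat{\F}_{1,b}(s)$. This shifted right-hand side is exactly why the discretized program is written with $s+\aStep$. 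For $s$ beyond the grid every $\sF$ has zero survival, so the constraint holds trivially. Hence $\mathrm{P}(\widehat{\F}_{1,b}) \ge 2.3765$. Next let $\tilde\F$ be the corresponding mixture, so that $\tilde\F \distle \widehat{\F}_{1,b}$. Concavity gives $p(\tilde\F) \ge \E_{\G}[p(\sF)] \ge \E_{\G}[\lb] \ge 2.3765$.

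\emph{Step 3 (removing monotonicity).} This is the main obstacle, because Lemma~\ref{lem:negative} used monotonicity to pass from $p(\tilde\F)$ to $p(\widehat{\F}_{1,b})$, and here $p$ is only concave. My first attempt is to exploit that point masses $\delta_v = \widecheck{\F}_{0,v}$ belong to $\scF$ with $\mathsf{LB}(\delta_v,\Gamma)=\Gamma v \ge 0$. If the residual $\nu := \widehat{\F}_{1,b} - \sum K(a,\lambda)\,\sF$ is a \emph{nonnegative measure} of mass $1-\sum K$, then $\widehat{\F}_{1,b}$ is an exact mixture of the $\sF$'s and the point masses $\{\delta_v\}_{v \sim \nu}$. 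Concavity alone then yields
\[
p(\widehat{\F}_{1,b}) \ge \sum K(a,\lambda)\,\lb + \Gamma\int v \dd\nu \ge 2.3765 ,
\]
with no monotonicity needed.

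To guarantee that $\nu \ge 0$, I will re-solve the discretized LP with cell-wise mass constraints in place of survival constraints. That is, I require that the mixture assigns at most $\widehat{\F}_{1,b}(\GridA[i]+\aStep)-\widehat{\F}_{1,b}(\GridA[i-1])$ to each cell, with an analogous bound for the atoms at the truncation points $a$. I will also drop the constraint $\sum K = 1$ in favor of $\sum K \le 1$, assigning the leftover mass to point masses. Any feasible solution of this tighter LP is feasible for the original one, so if its value still exceeds $2.3756$ the argument closes. If the value drops below that threshold, the fallback is to read the statement as referring to monotone concave policies, which is the Section~\ref{sec:optimal} context. In that case Steps 1--2 together with Lemma~\ref{lem:negative} suffice directly.

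\emph{Step 4 (contradiction).} Combining Steps 1--3 gives $2.3756 \ge \mathsf{UB}(\widehat{\F}_{1,b},\Gamma) \ge p(\widehat{\F}_{1,b}) \ge 2.3765$, which is a contradiction. Hence $\Gamma < 0.796$.
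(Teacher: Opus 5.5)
Steps~1 and~2, closed by the monotone chain from \Cref{lem:negative} (your fallback), are exactly the paper's argument. A feasible solution of the discretized program certifies $\mathrm{P}(\widehat{\F}_{1,1.95})\ge 2.3765$, while $\mathsf{UB}(\widehat{\F}_{1,1.95},0.796)\le 2.3756$. Your cell-by-cell check with the shifted right-hand side correctly proves the paper's stated-but-unproved claim that the discretized program lower-bounds $\mathrm{P}(\widehat{\F}_{1,b})$, and you are right that only weak duality is needed. Note that the paper's proof uses monotonicity at the same step, $p(\widehat{\F}_{1,b})\ge p(\tilde{\F})$. So what is actually proved is the bound for monotone concave policies, consistent with the optimal monotone mechanism result of \Cref{sec:optimal}.

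The genuine gap is Step~3, and it is structural rather than a numerical gamble. The target $\widehat{\F}_{1,b}$ puts no mass on $[0,b)$. Every $\widecheck{\F}_{\lambda,a}$ other than a point mass $\delta_a$ with $a\ge b$ does put mass there; for $\lambda>0$ it has density $\lambda e^{-\lambda v}$ on $[0,\min(a,b))$. Hence any exact decomposition $\widehat{\F}_{1,b}=\int \widecheck{\F}_{\lambda,a}\,\mathrm{d}\G+\nu$ with $\nu\ge 0$ uses only point masses at values $\ge b$. Your bound then collapses to $\Gamma\int v\,\mathrm{d}\widehat{\F}_{1,b}(v)=\Gamma(b+1)=0.796\cdot 2.95\approx 2.348<2.3756$, which gives no contradiction.

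With finitely supported weights, as any LP solution has, it is even more degenerate. $\widehat{\F}_{1,b}$ is atomless, while each component has an atom at its truncation point $a$, so $\nu\ge 0$ forces all weights to vanish. In any case, your cell-wise mass constraints with the $+\Delta_a$ slack would not certify $\nu\ge 0$, because that slack points the wrong way for mass constraints (as opposed to survival constraints).

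In effect, the exact-mixture route is just the point-mass argument of \Cref{sec:negative}, which even with the best $b$ gives only about $0.801$. So drop Step~3 and the conditional ``if the LP value drops'' phrasing. State monotonicity as a hypothesis and conclude via $\mathsf{UB}\ge p(\widehat{\F}_{1,b})\ge p(\tilde{\F})\ge 2.3765$. For arbitrary non-monotone concave policies, neither your proposal nor the paper's argument establishes $0.796$; the unconditional bound available is $0.801$.
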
 
\section{Missing Proofs from Section~\ref{sec:application}}

\begin{proof}{Proof of Theorem~\ref{thm:scale}.}
    It is straightforward that the LHS is no less than the RHS since linear pricing rules of the form $\tilde{p}(\Psi (\F))=\omega \Psi(\F)$   are a subset of all feasible pricing rules $\tilde{p}(\Psi (\F))$.
    \begin{align*}
    \max_{\tilde{p}:\R \to \R}\, \min_{\theta} \, \min_{\F\in \cF(\theta)} \frac{\Rev(\tilde{p}(\theta),\F)}{\OPT(\F)}  =\max_{\tilde{p}:\R \to \R}\,  \min_{\F\in \cF} \frac{\Rev(\tilde{p}(\Psi(\F)),\F)}{\OPT(\F)} \ge \max_{\omega} \min_{\F \in \cF} \frac{\Rev(\omega\cdot \Psi(\F), \F)}{\opt(\F)}~.   
    \end{align*}
To show that the LHS is no greater than the RHS, we first bound the LHS from above by $\min_{\F \in \cF(1)} \frac{\Rev(\omega^*, \F)}{\opt(\F)}$:
\[
\max_{\tilde{p}:\R \to \R}\, \min_{\theta} \, \min_{\F\in \cF(\theta)} \frac{\Rev(\tilde{p}(\theta),\F)}{\OPT(\F)} \le \max_{\tilde{p}:\R \to \R}\, \min_{\F\in \cF(1)} \frac{\Rev(\tilde{p}(1),\F)}{\OPT(\F)} = \max_{\omega} \min_{\F \in \cF(1)} \frac{\Rev(\omega, \F)}{\opt(\F)}~.
\]   
On the other hand, for any $\F \in \cF$, the distribution $\F'$ with $\F'(v) = \F(\Psi(\F)\cdot v)$ for all $v$, belongs to $\cF(1)$, due to the homogeneity of $\Psi$. 
Consequently, we have
\[
\min_{\F \in \cF} \frac{\Rev(\omega^*\cdot \Psi(\F), \F)}{\opt(\F)}= \frac{\omega^* \Psi(\F) \cdot (1-\F(\omega^* \Psi(\F)))}{\opt(\F)} = \frac{\omega^* \cdot (1-\F'(\omega^*))}{\opt(\F')} \ge \min_{\F' \in \cF(1)} \frac{\Rev(\omega^*, \F')}{\opt(\F')}~,
\]
where the second equality holds because $\opt$ is homogeneous, i.e., $\opt(\F) = \Psi(\F) \cdot \opt(\F')$. Hence, we have proved the equivalence between the LHS and RHS, and that $\tilde{p}^*(\theta) = \omega^* \cdot \theta$:
\[
\max_{\tilde{p}:\R \to \R}\, \min_{\theta} \, \min_{\F\in \cF(\theta)} \frac{\Rev(\tilde{p}(\theta),\F)}{\OPT(\F)} = 
\max_{\omega} \min_{\F \in \cF} \frac{\Rev(\omega\cdot \Psi(\F), \F)}{\opt(\F)}
= \min_{\F \in \cF} \frac{\Rev(\omega^*\cdot \Psi(\F), \F)}{\opt(\F)}
\]
\end{proof}

\begin{proof}{Proof of Theorem~\ref{thm: scale-free}.}
Since $\omega \Psi(\F)$ is monotone in $\F$, by \Cref{thm:monotone}, we have that for any $\omega$,
\begin{align*}
\min_{\F \in \cF} \frac{\Rev(\omega\cdot \Psi(\F), \F)}{\opt(\F)} 
 =\min_{\F\in \scF\cup \lcF} \frac{\Rev(\omega\cdot \Psi(\F),\F)}{\OPT(\F)}
\end{align*}
By the definition \eqref{eq:worstF-set}, any $\F\in \scF\cup \lcF$ is in the form of $\sF$ or $\lF[a]$. 
If $\F = \sF$, let $\F' = \sFone$. We then have $\Psi(\F) = a\Psi(\F')$ and $\opt(\F)=a\opt(\F')$. If $\F = \lF$, let $\F' = \lFone$. We then have $\Psi(\F) = b\Psi(\F')$ and $\opt(\F)=b\opt(\F')$. In both cases, we have that $\F' \in \scFone \cup \lcFone$ and
\[
 \frac{\Rev(\omega\cdot \Psi(\F),\F)}{\OPT(\F)} 
= \frac{\omega\Psi(\F) \cdot (1-\F(\omega\Psi(\F)  ))}{\OPT(\F)} = \frac{\omega\Psi(\F') \cdot (1-\F'(\omega\Psi(\F')))}{\OPT(\F')} = \frac{\Rev(\omega\cdot \Psi(\F'),\F')}{\OPT(\F')}~.
\]
Since $\scFone \subseteq \scF$ and $\lcFone\subseteq \lcF$, for any $\omega$, we have that 
$$
\min\limits_{\F \in\scFone \cup \lcFone} \frac{\Rev(\omega\cdot \Psi(\F), \F)}{\opt(\F)} = \min\limits_{\F \in \scF\cup \lcF} \frac{\Rev(\omega\cdot \Psi(\F), \F)}{\opt(\F)} =  \min_{\F \in \cF} \frac{\Rev(\omega\cdot \Psi(\F), \F)}{\opt(\F)}~,
$$ 
which completes the proof of \eqref{eq: scale ab}.

Next, denote $\theta =\Psi(\F)$ for any $\F\in \scF\cup \lcF$ in the form of $\sF$ or $\lF$. Define $\F'$ such that $\F'(v) = \F(\theta v)$ for all $v$: if $\F = \sF$ then let $\F' = \widecheck{\F}_{\lambda \theta, a/\theta}$; if $\F = \lF$ then let $\F' = \widehat{\F}_{\lambda \theta, b/\theta}$. Since $\Psi$ is homogeneous, we have that $\Psi(\F')=1$ and thus, $\F'\in \scF(1) \cup \lcF(1)$. Moreover, 
\[
\frac{\Rev(\omega\cdot \Psi(\F),\F)}{\OPT(\F)} 
= \frac{\omega\theta \cdot (1-\F(\omega\theta  ))}{\OPT(\F)} = \frac{ \omega \theta \Psi(\F') \cdot (1-\F'(\omega\Psi(\F')))}{\theta \OPT(\F')} = \frac{\Rev(\omega\cdot \Psi(\F'),\F')}{\OPT(\F')}~.
\]
Since $\scF(1) \subseteq \scF$ and $\lcF(1)\subseteq \lcF$, for any $\omega$, we have that 
$$
\min\limits_{\F \in\scF(1) \cup \lcF(1) } \frac{\Rev(\omega\cdot \Psi(\F), \F)}{\opt(\F)} = \min\limits_{\F \in \scF\cup \lcF} \frac{\Rev(\omega\cdot \Psi(\F), \F)}{\opt(\F)} =  \min_{\F \in \cF} \frac{\Rev(\omega\cdot \Psi(\F), \F)}{\opt(\F)}~,
$$ 
which completes the proof of \eqref{eq: scale psi}.
\end{proof}

\begin{proof}{Proof of Proposition~\ref{prop: norm}.}
According to \eqref{eq: scale ab} in \Cref{thm: scale-free}, without loss of generality, nature may restrict attention to $\{\widecheck{\F}_{\lambda,1} : \lambda \le 1\}$ and $\{\widehat{\F}_{\lambda,1} : \lambda \ge 1\}$. 
We first calculate the $L^\eta$ norm of $\sF[1]$ and $\lF[1]$:
\begin{align*}
 \|\widecheck{\F}_{\lambda,1}\|_\eta  & = \of{\int_{0}^{1} v^\eta \lambda e^{-\lambda v}\dd v + e^{-\lambda}}^{1/\eta}= \of{\int_{0}^{\lambda} \of{\frac{t}{\lambda}}^\eta \lambda e^{-t} \frac{1}{\lambda} \dd t + e^{-\lambda}}^{1/\eta}= \of{\lambda^{-\eta} \int_0^{\lambda} t^{\eta}e^{-t} \dd t + e^{-\lambda}}^{1/\eta}; \\
\|\widehat{\F}_{\lambda,1}\|_\eta & = \of{\int_{1}^{\infty} v^\eta \lambda e^{-\lambda (v-1)} \dd v }^{1/\eta}=\of{e^\lambda \int_{\lambda}^{\infty} \of{\frac{t}{\lambda}}^\eta e^{-t} \dd t }^{1/\eta} =\of{e^{\lambda}  \lambda^{-\eta} \int_{\lambda}^\infty  t^{\eta}e^{-t} \dd t}^{1/\eta}. 
\end{align*}
\begin{itemize}
    \item For distribution $\sF[1]$, the purchase probability under price  $\omega \|\sF[1]\|_\eta$ is $e^{-\lambda \omega \|\sF[1]\|_\eta}$, resulting in a revenue of  $\omega \cdot \|\sF[1]\|_\eta \cdot e^{-\lambda \omega \|\sF[1]\|_\eta}$. Notice that $\opt(\sF[1]) = e^{-\lambda}$. This corresponds to constraint \eqref{eq:norm-a}.
    \item For distribution $\lF[1]$, the purchase probability under price  $\omega \|\lF[1]\|_\eta$ is $\min \{ e^{-\lambda (\omega \|\lF[1]\|_\eta-1)}, 1\}$, resulting in a revenue of  $\omega \cdot \|\lF[1]\|_\eta \cdot \min \left\{ e^{-\lambda(\omega \|\lF[1]\|_\eta-1)}, 1\right\}$. Notice that $\opt(\lF[1]) = 1$. This corresponds to constraint \eqref{eq:norm-b}.
\end{itemize}

Finally, optimizing over the discount factor  $\omega\in [0,1]$ yields the optimal approximation ratio $\Gamma$ for the pricing rule. This completes the proof.
\end{proof}

\begin{proof}{Proof of Proposition~\ref{prop:superquantile}.}
According to \eqref{eq: scale psi} in \Cref{thm: scale-free}, to evaluate the worst-case performance under the CVaR constraint, nature only needs to consider distributions in the form of $\sF$ and $\lF$ defined in \eqref{eq:worstF}, with parameters chosen to match the constraint $\mathrm{CVaR}_{q}=1$. To characterize the feasible parameter pairs $(\lambda, a)$ for $\sF$ and $(\lambda, b)$ for $\lF$ that satisfy $\mathrm{CVaR}_{q}=1$, we explicitly compute the $\mathrm{CVaR}_{q}$ for each distribution as a function of its parameters. Specifically, the valuation associated with tail probability $q'$ under $\sF$ and $\lF$ can be represented as follows, respectively.
\begin{align*}
 \text{For } \sF:   \F^{-1}(1-q') = \begin{cases}
        -\frac{\ln(q')}{\lambda} & q'\ge e^{-\lambda a}\\
        a & q'<e^{-\lambda a}
    \end{cases}; \quad \text{ for } \lF: \,  
    \F^{-1}(1-q') = b -\frac{\ln(q')}{\lambda}, \, \forall q'\in [0,1].
\end{align*}
\begin{enumerate}
    \item 
For $\sF$, we calculate the $\mathrm{CVaR}_{q}$ as follows.
\begin{enumerate}
    \item If $q\le e^{-\lambda a}$, then 
\begin{align*}
\mathrm{CVaR}_{q}=\E[v\mid v \ge \mathrm{VaR}_q] = a.
\end{align*}
So feasible $(\lambda, a)$ need to satisfy $a=1$, $\lambda\le -\ln q$. For parameter $\lambda$, the purchase probability under price $\omega$ is $e^{-\lambda \omega}$, and thus, it obtains a revenue of  $\omega e^{-\lambda \omega}$. The optimal revenue under this distribution is $a e^{-\lambda a}=e^{-\lambda}$.
Hence, the performance ratio for price $\omega$ is $\frac{\omega\cdot e^{-\lambda \omega}}{e^{-\lambda }}=\omega  e^{\lambda\cdot\of{1-\omega}}$ which is increasing in $\lambda$. Therefore, the worst-case distribution is achieved at $\lambda=0$, and  $\omega e^{\lambda\cdot\of{1-\omega}}$ is lower bounded by $\omega$. This corresponds to constraint \eqref{eqn:cvar_1}.
\item If $q> e^{-\lambda a}$,
\begin{align*}
\mathrm{CVaR}_{q}=\E[v\mid v \ge \mathrm{VaR}_q] = \frac{1}{q} \of{\int_0^{e^{-\lambda a}} a \dd q' + \int_{e^{-\lambda  a}}^{q} -\frac{\ln(q')}{\lambda} \dd q' 
} = \frac{1}{\lambda}\cdot\of{1-\ln q-\frac{e^{-\lambda a}}{q}}. 
\end{align*}
which implies that $e^{-\lambda a}=-q\cdot\of{\lambda+\ln q-1 }$. Since $0<e^{-\lambda a}<q$, 
 feasible $(\lambda, a)$ need to satisfy $-\ln q<\lambda <1  -\ln q$ and $a=-\frac{\ln\of{-q\cdot\of{\lambda+\ln q-1 }}}{\lambda}$. For parameter $\lambda$, price $\omega$ obtains a revenue of  $\omega e^{-\lambda \omega}$. The optimal revenue under this distribution is $a e^{-\lambda a}=-\frac{\ln\of{-q\cdot\of{\lambda+\ln q-1 }}}{\lambda}\cdot  q\cdot\of{1-\lambda-\ln q}$.
Hence, the performance ratio for price $\omega$ is $\frac{\omega\cdot e^{-\lambda \omega}}{q\cdot\of{1-\lambda-\ln q}\cdot -\frac{\ln\of{-q\cdot\of{\lambda+\ln q-1 }}}{\lambda} }$,
 which corresponds to constraint \eqref{eqn:cvar_2}.
 \end{enumerate}
\item For $\lF$, $\mathrm{CVaR}_{q}$ is calculated as follows
\begin{align*}
\mathrm{CVaR}_{q}=\E[v\mid v \ge \mathrm{VaR}_q]=   \frac{1}{q} \int_0^{q} b-\frac{\ln q'}{\lambda} \dd q' = b+\frac{1}{\lambda}\of{1-\ln q}.
\end{align*}
Since $b+\frac{1}{\lambda}\of{1-\ln q} =1$ and $b\ge 0$, it follows that $(\lambda,b)$ satisfy $\lambda>1-\ln q$ and $b=1-\frac{1}{\lambda}\of{1-\ln q}$. The purchase probability at price $\omega$ is $\min\offf{1, e^{-\lambda(\omega -b)}}= \min\offf{1, e^{-\lambda \omega+\lambda -1+\ln q}}$, so price $\omega$ achieves a revenue of $\omega\cdot \min\offf{1, q e^{-\lambda \of {\omega-1+1/\lambda } }}$. The optimal revenue under this distribution is equal to $b=1-\frac{1}{\lambda}\of{1-\ln q}$. Therefore, the performance ratio for price $\omega$ is $\frac{\omega\cdot \min\offf{1, q e^{-\lambda \of {\omega-1+1/\lambda } }}}{1-\frac{1}{\lambda}\of{1-\ln q}}$, which corresponds to constraint \eqref{eqn:cvar_3}.
\end{enumerate}
Hence, the optimal price $\omega$ and the approximation ratio is calculated by problem \eqref{eq:cvar}.
\end{proof} 
\end{document}